\renewcommand{\ALG@name}{Numerical Method}
\algrenewcommand\algorithmicrequire{\textbf{Input:}}
\algrenewcommand\algorithmicensure{\textbf{Output:}}
\tikzstyle{blue} = [rectangle, minimum width=1cm, minimum height=0.75cm, text centered, draw=black, fill=blue!20]
\tikzstyle{green} = [rectangle, minimum width=1cm, minimum height=0.75cm, text centered, draw=black, fill=green!20]
\tikzstyle{red} = [rectangle, minimum width=1cm, minimum height=0.75cm, text centered, draw=black, fill=red!30]
\tikzstyle{yellow} = [rectangle, minimum width=1cm, minimum height=0.75cm, text centered, draw=black, fill=yellow!30]
\tikzstyle{gray} = [rectangle, minimum width=1cm, minimum height=0.75cm, text centered, draw=black, fill=gray!10]
\tikzstyle{black_arrow} = [black, thick,->,>=stealth]
\tikzstyle{blue_arrow} = [blue, thick,->,>=stealth]
\tikzstyle{red_arrow} = [red, thick,->,>=stealth]
\numberwithin{equation}{section}
\newtheorem{assumption}{Assumption}[section]
\newtheorem{theorem}{Theorem}[section]
\newtheorem{prop}[theorem]{Proposition}
\newtheorem{defn}{Definition}[section]
\newtheorem{remark}{Remark}
\DeclareMathOperator*{\argmax}{arg\,max}
\DeclareMathOperator{\sgn}{sgn}
\DeclareMathOperator*{\supp}{supp}
\newcommand{\Var}{\text{Var}}
\newcommand{\bell}{\bm{\ell}}
\newcommand{\E}{\mathbb E}
\newcommand{\R}{\mathbb R}
\newcommand{\one}{\mathbf{1}}
\title{DEX Specs: A Mean Field Approach to DeFi Currency Exchanges}
\author{Erhan Bayraktar, Asaf Cohen, April Nellis}
\date{\today}
\begin{document}
\maketitle

\begin{abstract}
    We investigate the behavior of liquidity providers (LPs) by modeling a decentralized cryptocurrency exchange (DEX) based on Uniswap v3. LPs with heterogeneous characteristics choose optimal liquidity positions subject to uncertainty regarding the size of exogenous incoming transactions and the prices of assets in the wider market. They engage in a game among themselves, and the resulting liquidity distribution determines the exchange rate dynamics and potential arbitrage opportunities of the pool. We calibrate the distribution of LP characteristics based on Uniswap data and the equilibrium strategy resulting from this mean-field game produces pool exchange rate dynamics and liquidity evolution consistent with observed pool behavior. We subsequently introduce Maximal Extractable Value (MEV) bots who perform Just-In-Time (JIT) liquidity attacks, and develop a Stackelberg game between LPs and bots. This addition results in more accurate simulated pool exchange rate dynamics and stronger predictive power regarding the evolution of the pool liquidity distribution.\\

    \textbf{Keywords:} decentralized cryptocurrency exchanges, mean field games, Stackelberg equilibrium, liquidity providers, MEV bot attacks
\end{abstract}

\tableofcontents
\section{Introduction}
In recent years, the use of cryptocurrencies has grown from a niche technology to a trendy and widely used element of the wider financial market. These currencies offer an alternative to traditional financial institutions for digital payments, currency exchange, and investment products, among other more creative uses like betting games. The underlying mechanism of popular cryptocurrencies like Bitcoin and Ether is decentralization - transactions are recorded on a distributed ledger maintained by many users in a peer-to-peer network, known as a blockchain. The benefit of this decentralized structure is that no single user can unilaterally alter or manipulate transactions once they have been added to the blockchain. This removes the need for users to rely on a trusted authority to record or verify transactions. Decentralized cryptocurrency exchanges, or DEXs, seek to similarly remove the need for a trusted authority in facilitating cryptocurrency exchanges. By using an automated market maker (AMM) (essentially, an algorithm built into the exchange), DEXs eliminate the need for a limit-order-book-style market for cryptocurrency trading. Instead, they allow individual cryptocurrency users to provide liquidity to the exchange and subsequently earn transaction fees. The liquidity is managed according to the AMM algorithm, removing any need for centralized oversight or management of the DEX. These decentralized liquidity pools offer unique and exciting opportunities for investment and profit, but also expose investors to new sources of risk which must be studied and mitigated.

Investors in these liquidity pools are called \textit{liquidity providers} (LPs), and they compete among themselves to earn fees from users and therefore profit from their contributions to the pool. The decentralized nature of DEXs reduces the barriers to entry for LPs, leading to a game between many LPs as they each try to maximize their individual profits. Because liquidity is contributed by ``regular people" instead of an entity with a large amount of capital, the interactions between liquidity providers affect the executing of exchange transactions as well as the profitability of each liquidity provider's liquidity position. It is therefore interesting and relevant to study competition between liquidity providers. 

While $N$-player games have the potential to model this competition, they become difficult (and sometimes intractable) to solve as $N$ becomes large. Instead, we utilize mean field game theory to efficiently study the interactions between many LPs contributing to a DEX.
Mean-field game theory (introduced concurrently in \cite{lasrylions} and \cite{HuangMalhameCaines}) is another field of study with growing popularity and an expanding array of uses. These games consider a continuum of infinitesimally small players who compete to optimize their personal utility based on the collective actions of all other players. Because players are small, individuals are not able to directly influence outcomes, yet the state of the system can experience large changes based on the actions of the continuum of players. These games are able to efficiently model many scenarios with large numbers of interacting players, and are therefore very suitable for studying DEXs.

To this end, we model LPs using different approaches before finally determining that a mean-field model of a DEX most closely mimics observed DEX dynamics.
In addition to creating a comprehensive descriptive model of a liquidity pool and the uncertainties therein, we extend our model to study how LPs potentially anticipate and interact with outside attackers seeking to profit from the pool. 
We take a novel approach by modeling a Stackelberg game between the mean field of LPs and the outside attacker. Through our simulations, calibration, and verification, we are also able to draw assumptions about the risk-aversion of LPs and make predictions about how liquidity pools evolve over time. 

\subsection{Technical Background}
\subsubsection*{Uniswap v3 Liquidity Pools} 
There are many DEXs, but one of the most established is Uniswap. Uniswap facilitates swaps between hundreds of token and has a 24 hour trading volume of over \$1 billion\footnote{As of April, 2024. See https://app.uniswap.org/explore.}. Coin pairs are traded in ``pools" with a fixed transaction fee. There are no restrictions on who is able to contribute liquidity to the pool, and in return for their contributions, LPs earn fees at a predetermined rate associated with the given pool. The newest version of Uniswap, called Uniswap v3, allows LPs to take advantage of a feature called \textit{concentrated liquidity}, wherein they are able to set a range of exchange rates for which their liquidity will be evenly distributed, be utilized, and earn fees. The concentrated liquidity results in a \textit{liquidity distribution} across the exchange rates, or ticks, of the pool. 

Uniswap uses a constant product market maker (CPMM) to execute trades, where the essential idea is that the product of the reserves for the two currencies in a pool should remain constant. The concentrated liquidity approach implemented in v3 makes the actual calculation a little more complicated, and is expanded upon in \Cref{sec:dex}. However, the general idea of a CPMM is that swapping token A for token B within the pool raises the price of token B with respect to token A due to its relative scarcity after the transaction (and vice versa). Therefore, the exchange rate within the pool is wholly controlled by the transactions that are executed within it, rather than by a central market maker matching buy and sell orders as in a limit-order-book-style market. This can result in a discrepancy between the exchange rate within a pool and the more widely available \textit{market exchange rate}. The market exchange rate can be thought of as the exchange rate for a theoretical ``infinite-liquidity" pool, or as the rate obtained in a large centralized exchange like Coinbase where rates are carefully managed.

\subsubsection*{Maximal Extractable Value (MEV) Bots}
The algorithmic (and therefore predictable) nature of the CPMM algorithm, combined with the ``waiting room" setup of a blockchain's mempool, presents an opportunity for blockchain validators to \textit{extract} profit from a block by incorporating their own transactions and ordering the block in an optimal manner (that is, optimal for the validator). The profit that can be obtained by such a maneuver is called the Miner Extractable Value, or Maximal Extractable Value (MEV).  

In recent years, services like Flashbots \cite{flashbots} have arisen to allow users to more easily and reliably profit from the MEV arising in decentralized exchanges. These services often involve the use of \textit{bots}, or automated programs that find and take advantage of MEV opportunities. For simplicity (and to differentiate between users who simply wish to exchange currency and users who wish to take advantage of MEV opportunities), we will assume all MEV transactions are done by bots and we will call these transactions \textit{bot attacks}. 

One of the most common bot attacks is a \textit{sandwich attack}, also known as \textit{front-running}. In this scenario, a bot sees that a large currency swap has been submitted to the blockchain mempool (which serves as a transaction ``waiting room") but has not yet been added to a block. \textit{Blockchain validators} construct blocks from transactions waiting in the mempool. A block is only added to the blockchain every 12 seconds, and in this time, blockchain validators choose how to arrange transactions from the mempool into a block to be submitted to the blockchain.
This presents an opportunity for the bot to place orders before and after a target transaction by manipulating the sequence of orders being added to the blockchain and thus executed. Such bots are able to perform this manipulation through collaboration with blockchain validators. There are two different ways in which a bot can perform a sandwich attack - either through liquidity adjustment or through swap transactions. 

In the case of a \textit{swap-based sandwich attack}, the bot quickly submits two of its own transactions, one in the same direction as the preexisting transaction and one in the opposite direction. The swap in the same direction executes before the large swap, while the swap in the opposite direction executes after the large swap. This results in a worse exchange rate for the non-bot swap, but a better exchange rate for the second bot transaction. In this way, the bot can essentially ``make a profit from nothing" at the expense of the naive user. However, users can attempt to mitigate the impact of sandwich attacks by setting a slippage allowance, which ensures that a transaction only executes if the execution exchange rate is within a certain range of the expected execution exchange rate, as shown in \cite{HeimbachSandwich}.

In the case of a \textit{liquidity-based sandwich attack}, bots see an opportunity to earn fees by serving as an LP while also taking advantage of a potentially favorable exchange rate. In this scenario, a bot contributes a large amount of liquidity to the current tick right before a large swap transaction. After the swap has been executed, the bot LP withdraws its liquidity plus the fees earned from the transaction. In this way, the bot earns profit from fees, as well as any profit from the change in its liquidity composition due to the swap. While there usually \textit{are} other LPs with liquidity in the same tick, if the bot LP's contribution is significant enough, the bot will accrue almost all the profit. Another side effect of this type of attack is decreased pool exchange rate volatility because of the sizeable liquidity deposit in a single exchange rate tick, meaning that LPs in other ticks miss out on earning fees. So, in this scenario, the swapper is not affected by the bot's actions (the swapper will receive an effective exchange rate very close to the initial pool exchange rate), but other LPs will lose out on potential profits. These attacks are sometimes referred to as ``Just-In-Time" (JIT) liquidity attacks, and the attacker is sometimes called a JIT LP. However, to differentiate between the non-bot LPs that we study and the JIT LPs, we will refer to the JIT LPs as \textit{Bots}. We study the effects and potential mitigation strategies for these attacks later in our paper. 

\subsection{Literature Review}
Decentralized exchanges have grown in popularity in recent years, and so has the amount of research done on various aspects of their performance. For a survey of recent advances in blockchain technology and decentralized finance, we refer the reader to \cite{biais_survey}. Various aspects of liquidity provision in a decentralized exchange have also been studied. The paper \cite{HeimbachBehavior} studies blockchain data to determine how liquidity providers contribute liquidity between pools and how they react to market changes, and the authors find that many liquidity providers display caution in their behavior. In \cite{HeimbachLiquidity}, the authors model exchange rates by a geometric Brownian motion in order to estimate how long various symmetric Uniswap v3 liquidity positions will remain ``in-the-money" versus how much the LP will collect in fees. They find that the total fees earned by an LP are proportional to time in the money and the size of its position, and that most small providers are best served by a simple strategy but should also expect small returns. In \cite{Baron_onlinelearning}, the authors consider similar symmetric liquidity positions in a non-stochastic setting with positions that reset at every period, while in \cite{Cartea2023optimalLP} the authors also investigate optimal Uniswap v3 liquidity strategies given a pool exchange rate modeled using Brownian motion. Similarly, the authors of \cite{FanStrategicLPv3} model exchange rate changes within a Uniswap v3 pool as a Markov process moving between exchange rate ticks, and accordingly calculates dynamic liquidity positions which optimizes fee earnings. They study the frequency at which LPs should ``reset" their liquidity to avoid paying excessive gas fees. 

In \cite{Angeris_2020}, the authors solve a game between arbitrageurs to investigate whether decentralized exchanges are a reliable indicator (or \textit{oracle}) for broader market cryptocurrency exchange rates. They develop a framework for describing changes in liquidity reserves which allows them to solve an arbitrageur optimization problem where arbitrageurs restore the pool exchange rate to the market exchange rate. They therefore propose that exchange rates in a decentralized exchange are generally consistent with market exchange rates, and so are useful oracles for determining ``fair" exchange rates in blockchain smart contracts (programs that execute transactions under certain conditions). The authors later apply these results specifically to Uniswap v2 exchanges in \cite{Angeris2021-Uniswap}, and show the stability and consistency of DEXs with regard to centralized exchanges. In \cite{Capponi2024SVB}, the authors study the effect of price shocks on liquidity provider holdings and find that they are vulnerable to losses through the actions of arbitrageurs. In addition, \cite{Park} finds that, despite the potential for manipulation in AMMs, swap execution prices are generally lower in DEXs than in centralized exchanges. The paper \cite{JaimungalAMM} also studies arbitrage in a different context. The authors use deep learning to determine an optimal swap execution strategy for a trader who wishes to liquidate a large holding of a specific token. As part of the paper's data analysis, the authors calculate the conditional probability of buy and sell swaps in connection to arbitrage opportunities in empirical data. They model pool and market exchange rates accordingly using stochastic differential equations and use these equations to formulate an optimal liquidation problem.

Bot activity is also a popular area of recent study. In the paper \cite{FerreiraSequencing}, the authors develop a block transaction sequencing rule that is verifiable in polynomial time and eliminates swap-based sandwich attacks if reliably implemented. Taking another approach, the authors of \cite{HeimbachSandwich} calculate the exact slippage allowance that users should set to discourage swap-based sandwich attacks in Uniswap v2 and show that in general, the optimized slippage levels perform significantly better than the Uniswap-suggested auto-slippage levels. Taking a regulatory perspective, the authors of \cite{HeimbachSelfRegulation} investigate whether the incentives are correctly aligned for LPs to implement anti-bot-attack strategies by considering a game between LPs, swappers, bots, and arbitrageurs where LPs and swappers can choose to operate in a pool with or without bots. Another paper dedicated to studying bots' behavior which focuses more on experimental results is \cite{ZhouQin21}. The authors perform their own experiments to evaluate the possibility of the success of sandwich attacks when submitting these transactions to the mempool instead of collaborating directly with validators. 

In the field of JIT liquidity study, the paper \cite{Xiongetal} investigates the impact of JIT liquidity attacks. The authors identify 36,671 JIT attacks over a horizon from June 2021 to January 2023, and find that liquidity-based attacks negatively impact LP profits. JIT attacks are also studied in \cite{capponi2024paradox}, where the authors model a game based on a single transaction between a set number of ``passive" liquidity providers and a JIT LP, set in a liquidity pool without concentrated liquidity.
The authors find that JIT LPs potentially disincentive passive LPs from providing liquidity in shallow pools. They propose Cournot-style competition between JIT LPs as a way to mitigate their effects. 

Other elements of the DeFi space have been studied in the context of games. For example, in \cite{SircarBTCMFG} the authors consider a mean-field game model of cryptocurrency mining and discuss the potential for wealth concentration in a Bitcoin-like blockchain with high mining rewards. However, this paper focuses on the blockchain mechanism rather than transactions in a DEX. Also, the author of \cite{Aoyagi} studies a mean-field game between LPs in Uniswap v2 where all LPs are identical and contribute liquidity for only one incoming swap transaction. These LPs react to the expected swap size of an incoming trader, much like in \cite{Capponi2021}. We expand the mean-field game concept in a variety of ways in our paper.

\subsection{Our Contribution}
We wish to expand the study of DEXs by building a comprehensive simulation of such exchanges. Previous studies of DEXs have made certain simplifications, such as modeling pool exchange rates using a stochastic process or studying competition between users when reacting to only a single transaction. In contrast, we model liquidity providers who intend to leave their liquidity in the pool for a period covering multiple swaps, and who optimize their liquidity position based on a long time horizon. By studying real-world DEX transaction data, we identified that most exchanges are places by \textit{semi-strategic swappers} whose behavior depends on both the arbitrage level of the pool as well as potentially unknown personal incentives. Including stochastic swaps based on observed data allows us to calculate pool exchange rate dynamics as a function of swap transactions as well as the collective actions of the LPs in the pool. 
As a bonus, we avoid assumptions regarding the form of the pool exchange rate process. 

In this setting, we build a detailed framework to describe competition between LPs. 
To our knowledge, we are the first to consider a mean field of LPs with heterogeneous types and long-term profit functions, and to consider MFGs in the more complicated concentrated liquidity setup of Uniswap v3 where LPs can choose their liquidity positions. 
The MFG setting provides an efficient approach to studying the interactions of many LPs in the pool by avoiding the need to consider each LP in the pool individually. This is especially important because we introduce a type distribution over the LPs in the pool, which allows us to investigate why different LPs choose different liquidity positions within the same liquidity pool. 
We further increase the accuracy of our model by calibrating the distribution of types so that the equilibrium distribution matches observed liquidity data. We then use the calibrated model of mean field competition and semi-strategic swappers to simulate pool activity, which results in accurate predictions about the evolution of pool liquidity and pool exchange rates. Because the calibrated type distribution has strong predictive power and is consistent with observed data, this calibration also allows us to make inferences about un-observable characteristics like risk-aversion.

We are also the first to consider a Stackelberg game between a mean field of LPs as the leader and a Bot performing JIT liquidity adjustments as the follower. We find that this formulation has increased predictive power when compared to the mean-field-only model and allows us to formalize a relationship between the actions of LPs and the actions of bots. We emphasize the following novel elements in our model: 
\begin{itemize} 
    \item Instead of separating the roles of swappers and arbitrageurs, we introduce a new class of \textit{semi-strategic swappers} who arrive at the pool at a rate that corresponds to the pool arbitrage opportunities. This is a more natural model for ``casual" pool users, who are not perfectly strategic yet are not wholly naive. 
    \item When choosing their optimal liquidity positions, LPs optimize their decisions over time horizons which encompass multiple swaps, instead of over a single swap as in previous works. We model fluctuations in the pool exchange rate through the arrival of these swappers, avoiding any assumptions on the mathematical structure of pool exchange rates.
    \item Liquidity optimization decisions depend on the actions of other LPs. This elevates the problem to a game between heterogeneous LPs. Previous papers have studied optimal liquidity strategies for individual Uniswap v3 LPs, but have not attempted to explain overall pool liquidity dynamics. Interactions between a mean-field of LPs with varying risk-aversion levels and other characteristics provide such explanatory power. 
    \item We extend the literature and analysis of JIT liquidity provision to encompass competition with strategic and interacting LPs in a concentrated-liquidity pool. We also investigate how profits are affected when LPs anticipate such JIT bot attacks. 
\end{itemize}
These components combine to create a more complex and comprehensive model than those previously studied. With this model, we are able to more closely match our results to real LP behavior and pool dynamics. In particular, we wish to emphasize two major contributions to the literature.
\begin{itemize}
    \item In \Cref{sec:mfgpredict}, we find that the liquidity behavior the LPs active in a DEX is best modeled via a mean-field game between heterogeneous agents, as this formulation allows us to deduce the high risk-aversion of LPs and make accurate 24-hour-ahead liquidity predictions consistent with observed data.
    \item In \Cref{sec:bots}, we find that a simulation which models a Stackelberg game between a mean-field of LPs and a Bot performing a JIT liquidity attack is able to decrease the Wasserstein 1-distance between the predicted and observed liquidity distributions by 21.54\%. In addition, when LPs choose their liquidity positions while anticipating Bot attacks, they are able to increase their collective profits by approximately 3\%. This means that strategic action from LPs can reduce loss of profit due to Bot attacks.
\end{itemize}

\subsection{Outline}
In \Cref{sec:dex}, we provide a detailed explanation of the exact CPMM used to calculate exchanges in a Uniswap v3 exchange pool. In \Cref{sec:games}, we discuss the formulations of various games between one or more LPs. In \Cref{sec:mfgpredict} we compare the behavior of the pool and market exchange rates in our generated liquidity distributions to data obtained from Uniswap, and demonstrate their consistency. In \Cref{sec:bots}, we introduce the presence of bots to the scenario, and create a Stackelberg game between a mean field of LPs and a bot entering the pool. 

\subsection{Data}
All transaction data is obtained from the Etherscan API and exchange rate data was obtained from the Coinmetrics API and the CoinGecko API. Our code and the necessary datasets can be found on \href{https://github.com/april-nellis/dex-specs}{Github}.

\section{Implementation of a CPMM with Concentrated Liquidity}
\label{sec:dex}
The description of the entire mechanism of a Uniswap v3 concentrated liquidity pool was first laid out in the Uniswap v3 whitepaper \cite{Uniswapv3}, but in this section, we will summarize the most relevant mathematical aspects of the exchange rate calculations. For those who are curious about the specifics, further details are provided in \Cref{sec:appendix_dex}.

\subsection{General Overview}
\label{sec:cpmm_structure}
Without loss of generality, in the remainder of the paper we will consider token A as the \textit{asset token}, which has a fluctuating exchange rate with respect to the \textit{base token}, referred to as token B. For example, token B could be a stablecoin like USD Coin or Tether, which are collateralized in such a way that one coin is roughly equivalent to one US dollar.

The range of supported exchange rates is fixed when the pool is created. This range is divided into $d$ intervals with corresponding exchange rate functions. These intervals, or \textit{ticks}, are labeled according to $d+1$ exchange rate points represented by $\bm{p} := \{p_i\}_{i=1}^{d+1} \in \R^{d+1}_+$. Each tick contains tokens of A, B, or some combination of the two. These tokens are used to facilitate swaps where the exchange rate with units B/A falls in the interval $[p_i, p_{i+1})$. When discussing ticks, we will refer to each tick by its index, and the reader can think of tick $i$ as a ``bin" containing tokens, with a ``label" given by $[p_i, p_{i+1})$. The token distribution across these $d$ ticks is given by two vectors $\bm{A}:= \{A_i\}_{i=1}^d \in \R^d_+$, representing the amount of token A in each tick, and $\bm{B}:= \{B_i\}_{i=1}^d \in \R^d_+$, representing the amount of token B in each tick. We sometimes refer to $\bm A$ and $\bm B$ as \textit{token reserves}. At any one time, only one tick is \textit{active} -- this is the tick which contains the current pool exchange rate, represented by $p^*$ (where $p^*$ is an exchange rate with units B/A), and the index of this tick is designated by $i^* \in [d]:=\{1, 2, \ldots, d\}$. The value of $i^*$ is therefore a function of $p^*$, and we define
\begin{equation*}
    i^* := i(p^*) := \max\{i\in[d]:p_i < p^*\}.
\end{equation*}

We must also make a clear distinction between the token reserves in the pool and the \textit{liquidity} of the pool. These terms are very closely related, and more tokens in the pool is naturally correlated with a higher amount of pool liquidity. In addition, both these concepts relate to the ability of the pool to facilitate currency swaps. However, following the convention described in \cite{Uniswapv3}, we will use the term \textit{liquidity} to describe the overall ``capacity" of the pool for handling swaps, and will represent it mathematically as the vector $\bell := \{\ell_i\}_{i=1}^d \in \R^d_+$. At times we will also refer to this vector as the pool's \textit{liquidity distribution}. This liquidity has its own units separate from units of token A or units of token B. 
To reference \cite{Uniswapv3} directly, the token reserves in any tick $i$ must satisfy the CPMM function
\begin{equation*}
    \Big(B_i + \ell_i\sqrt{p_i}\Big)\Big(A_i + \ell_i/\sqrt{p_{i+1}}\Big) = \ell_i^2.
\end{equation*}
Swaps will change the ratio of $A_i$ to $B_i$ in the tick, whereas liquidity addition and removal will change the values of $A_i$, $B_i$, and $\ell_i$ while maintaining the ratio of $A_i$ to $B_i$. 
The reserves in the active tick $i^* = i(p^*)$ must also satisfy the pool exchange rate condition given by
\begin{equation*}
    p^* = \dfrac{B_{i^*} + \ell_{i^*}\sqrt{p_{i^*}}}{A_{i^*} +
    \ell_{i^*}/\sqrt{p_{i^*+1}}},
\end{equation*}
and we note that swaps affect $p^*$ but liquidity additions and removals do not. For any tick $i$, $p_i$ and $p_{i+1}$ are fixed and known. Therefore, knowing two of the four quantities $\ell_i, A_i, B_i, p^*$ allows us to calculate the other two quantities for tick $i$. These equations are used both when liquidity providers add or remove liquidity, and when swappers place a swap transaction. 
We will define the \textit{swap function} as 
\begin{equation*}
    \psi: \R \times \R^d_+ \times \R_+ \to \R,
\end{equation*} 
where $x$ is an amount of token B and $\psi(x, \bell, p^*)$ is an amount of token A. If $x > 0$, then the swapper wishes to exchange token B for token A. If $x < 0$, then the swapper wishes to exchange token A for token B. The exact form of $\psi$ is not necessary for our analysis, but we note that it is differentiable, concave down, and increasing. For the curious reader, we discuss $\psi$ more precisely in \Cref{sec:appendix_swap} and present its form in \Cref{eq:psi_i}.

Below, we provide a brief overview of the liquidity provider's decision problem.
\subsection{Liquidity Providers}
In our model, LPs contribute to the liquidity pool by choosing a range of exchange rates in which to contribute and an amount of capital to contribute. The LP chooses a range of exchange rates $[p_{j^1}, p_{j^2})$ corresponding to ticks $\{j^1, \ldots, j^2-1\}$, where $j^1 < j^2$ and $j^1, j^2 \in [d+1]$. This is the range of exchange rates in which the LP will contribute liquidity and subsequently earn fees from swap transactions. The LP earns fees in proportion to its contribution to each tick's overall liquidity, which is why we prefer to discuss the pool in terms of its liquidity instead of its token reserves. The range itself is also important because LPs might not wish to contribute liquidity for trades in ticks where they think the exchange rate is ``too cheap" or ``too expensive", or they might wish to concentrate their liquidity in a smaller range so that they can earn a greater proportion of fees in those ticks. We denote the transaction fee rate as $\gamma$, and emphasize that this is a fixed part of the setup of the pool. 

\begin{remark}
\label{rem:intervals}
    There are now two types of intervals in our discussion - the exchange rate intervals $[p_i, p_{i+1})$ which discretize the range of exchange rates traded in the pool, and the interval chosen by each LP for contributing liquidity. In the rest of the paper, we will attempt to alleviate this confusion by using {\rm ticks} to describe exchange rate intervals, and {\rm positions} to describe players' liquidity intervals. 
\end{remark}

The capital that the LP adds to the pool is usually a mix of token A and token B, contributed in a ratio dependent on the chosen exchange rate range and the current pool exchange rate. We will track the value of the capital solely in terms on token B. We assume that LPs value their total capital using the {\it market exchange rate}, which we denote as $m^*$, because it is impervious to manipulation from within the pool and so can be thought of as the ``real" valuation of token A in terms of token B. 
Other approaches could be taken, for example converting holdings of both token A and token B to their value in US dollars, but for simplicity we use the convention of valuing all holdings in terms of token B. 

Furthermore, we will assume that each LP has a fixed amount of capital (as valued in terms of token B) that it is able to contribute to the pool. 
Given a total capital endowment $k$ and a chosen position $(j^1, j^2)$, the LP will add $u(j^1, j^2, k, m^*)$ units of liquidity to each tick between $j^1 \in [d]$ and $j^2 \in [d]$, where $u$ is defined as
\begin{equation}
\label{eq:deltaliq}
    u(j^1, j^2, k, m^*) := \dfrac{k}{\sqrt{p^*} - \sqrt{p_{j^1}} + m^* \left(1/\sqrt{p^*} - 1/\sqrt{p_{j^2}}\right)}.
\end{equation}
The liquidity position corresponding to a contribution of $u$ units of liquidity per tick and a chosen position $(j^1, j^2)$ is given by $\bell_{\text{new}}(u, j^1, j^2)$, defined as 
\begin{equation}
\label{eq:ell-new}
    \bell_{\text{new}}(u, j^1, j^2) := [0,\ldots, 0, \underbrace{u, \ldots, u,}_{\text{ticks }j^1\text{ to }j^2 - 1} 0, \ldots, 0] \in \R^d_+.
\end{equation}
The bulk of our paper will discuss how to calculate the optimal position for a liquidity provider in different scenarios. 

\subsection{Uniswap Details}
We chose to analyze data from the Uniswap USDC/ETH pool with a 0.05\% pool exchange fee, as that is the pool with the current largest 24 hour trading volume\footnote{As of April 13, 2024. See https://app.uniswap.org/explore/pools.}. This pool facilitates exchanges between Ether (the native coin on the Ethereum blockchain) and USD Coin (a coin which attempts to maintain a value of 1 USD by backing each coin 1-1 with US dollar reserves). We designate USDC as our base token, since its value is relatively stable and consistent with the US dollar, and designate ETH as our asset token. All exchange rates are expressed in USDC/ETH. We pulled blockchain data from Etherscan and aggregate data from CoinMetrics to analyze the token reserves and user behavior in this pool in the the month of transactions between September 1, 2023 and September 30, 2023.

\section{Liquidity Provision in Different Settings: Optimal Control Problem, N-Player, and Mean-Field Games}
\label{sec:games}
We have provided a brief description of the CPMM mechanism in a concentrated liquidity pool. Now, we wish to model how LPs can behave optimally within such a pool. 

In order to be consistent with the behavior observed in real LPs, we will assume in the rest of the paper that LPs wish to optimize their returns over a time horizon of $T > 0$ blocks which encompass multiple swap transactions. 
In our model, the LP(s) adjust liquidity at $t=0$ in order to maximize their profit over the next $T$ blocks. Once the liquidity distribution of the pool is set, swappers place transactions in the pool and the LPs earn fees based on their positions. We impose the simplification that no LPs adjust their liquidity for the next $T$ blocks. Therefore, the randomness in the model arises from the unknown future actions of the swappers, and it is only the swappers who are conducting transactions between block $t=0$ and block $t=T$. We say this to emphasize that the LPs face a \textit{static} optimization problem, rather than a dynamic one.

Numerical data indicates that on average LPs leave their liquidity unadjusted for approximately 27.157 hours. For simplicity and to err on the side of more ``alert" LPs, we calculate $T$ based on the number of blocks added to the blockchain in a finite time horizon of 24 hours. Based on the structure of the Ethereum blockchain, a block is ``finalized" once every 12 seconds, meaning that we define $\Delta t = 12$ seconds when needed. There are 7200 blocks added in a 24 hour period and so for the rest of the paper we will use $T = 7200$. Because we now have an element of time in our model, we will replace $p^*$ and $m^*$ with the discrete, adapted stochastic processes $(p^*_t)_{t=0}^T$ and $(m^*_t)_{t=0}^T$. 

Below, we will describe our method for modeling swap transactions, then subsequently develop and analyze three different optimization problems involving liquidity providers. First, we consider the optimization problem for a single LP arriving at the pool in \Cref{sec:monopoly}. Then, we introduce competition between LPs. We will examine an $N$-player game in \Cref{sec:n-player} and extend this setting to a mean-field game model for the pool in \Cref{sec:mfg}. 

\subsection{Semi-Strategic Swapper Behavior}
\label{sec:swap_calibration}
Recall that LPs contribute liquidity to the DEX in return for the opportunity to earn fees from swap transactions at a fixed rate $\gamma$. Therefore, the profits of these LPs are heavily dependent on the magnitude and direction of swaps made in the pool. In addition, the evolution of the pool exchange rate depends on the swaps made in the pool, because $p^*_t$ is determined by $p^*_0$ and the swaps made in blocks 0 through $t-1$. 

We will represent the full sequence of swaps that occur between blocks $t=0$ and $t=T$ by the random sequence $\bm \xi = (X_t, \xi_t)_{t=0}^T$, where $X_t \in \{0, 1\}$ and $\xi_t \in \R$. The swap arrival indicator $X_t$ is a Bernoulli random variable that denotes whether a swap occurs in block $t$. We assume for now that at most one swap can occur in each block. This could later be extended to allow multiple swaps in one block. If block $t$ contains a swap, the size of the swap is given by $\xi_t$. Recall that if $\xi_t > 0$, the swapper exchanges token B for token A, and if $\xi_t < 0$ then the swapper exchanges token A for token B.

Without an idea of how swappers behave, LPs are unable to make educated choices when determining their optimal position $(j^1, j^2)$. Naturally, the LP cannot see the future, but it is possible to predict the behavior of swappers based on historical swap transaction data, and we will assume throughout this paper that all LPs in the pool are strategic and are able to perform the necessary analysis. In the rest of this section, we give a brief overview of the method used to fit simulated swaps to observed data. Extra details and figures are provided in \Cref{sec:appendix_calib}.

By looking at historical Uniswap data over the 13 month period from November 1, 2022 to November 30, 2023, we found that swappers react to the current difference between the pool and market exchange rates, $p^*_t - m^*_t$, which we refer to as the pool's \textit{arbitrage level}. 
Intuitively, it makes sense that when one token is cheaper to obtain in the pool than in the market, swappers wish to take advantage of this arbitrage opportunity. However, there are still swappers who instead exchange tokens in the pool based on their own private preferences. Using this reasoning, we define \textit{semi-strategic swappers} who are aware of the arbitrage level in the pool, but also have other incentives for making swaps. Based on this logic, one can calibrate the likelihood of a swap arriving at a given block and, if it arrives, can deduce the distribution of the size of the swap. 

We begin by discussing the distribution of the random swap arrival indicator variable $X_t$. The arrival of a swap at time $t$ is determined by the value of $X_t$ where
\begin{equation*}
    X_t\sim Bernoulli(\rho(p^*_t - m^*_t)),\qquad \text{ where } \rho: \R \to [0,1].
\end{equation*} 
Based on empirical data,  $\rho$ is monotonically increasing. By analyzing swap arrival times, we approximated the probability of a swap arriving at block $t$ by 
\begin{equation*}
    \rho(x) = \dfrac{1}{1 + e^{-0.01145|x| + 0.6169}}.,
\end{equation*} 
such that when $|p^*_t - m^*_t| = 0$, the probability of a swap arriving is $0.3505$. Note that as the absolute value of the arbitrage level increases, the probability of a swap arriving increases to 1. The exact rationale for our choice of parameters in $\rho$ is given in \Cref{sec:appendix_arrivals}.

If a swap does occur in block $t$ (meaning $X_t = 1$), the size of the swap is given by the random variable $\xi_t$ which also depends on the level of arbitrage, $p^*_t-m^*_t$. In \Cref{fig:size1}, we show that both the transaction sizes and the log of transaction sizes are negatively correlated with $p^*_t - m^*_t$. This aligns with an intuitive understanding 
\begin{figure}
    \centering
    \begin{subfigure}{0.45\textwidth}
         \centering
    \includegraphics[width=\textwidth]{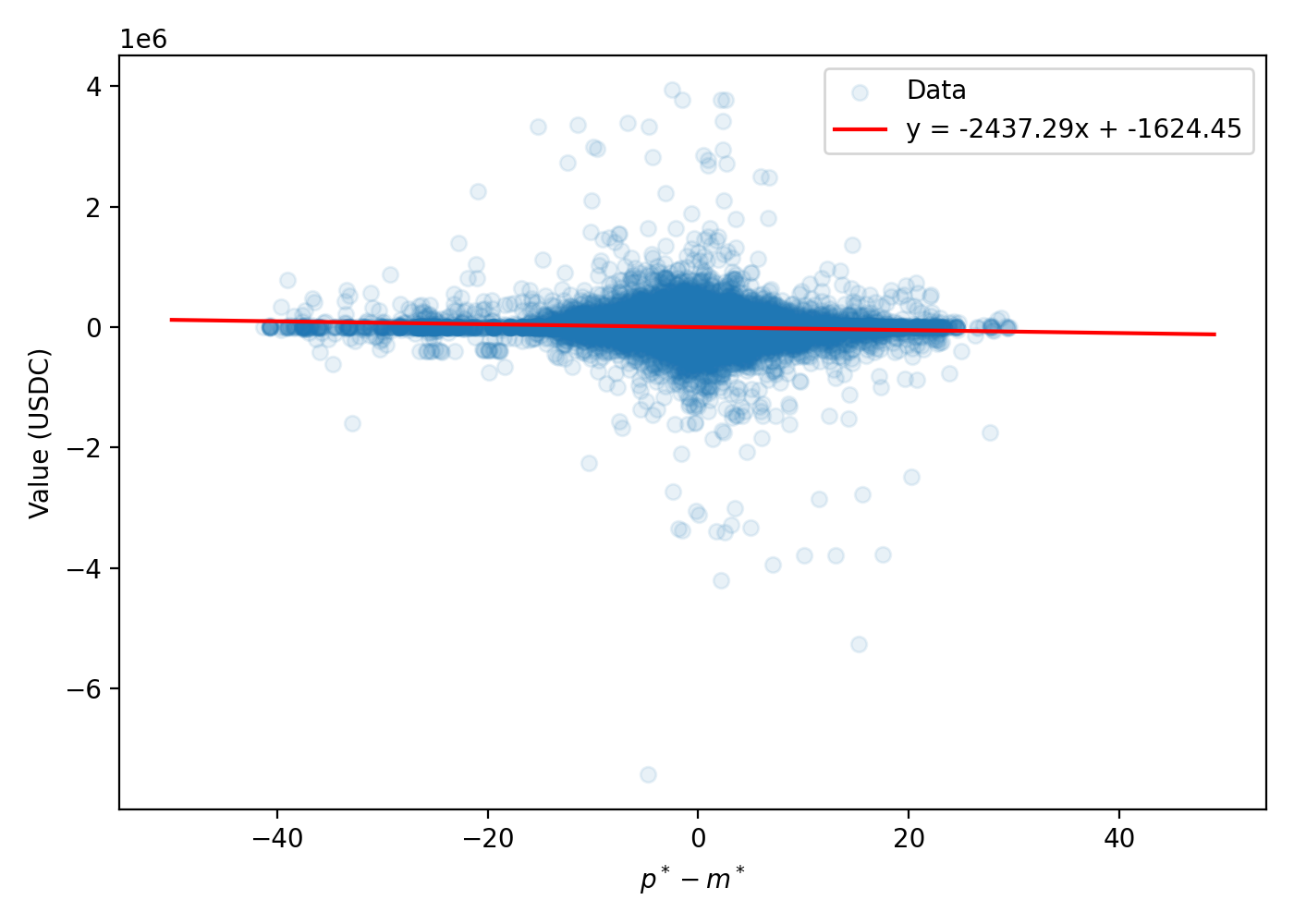}
     \end{subfigure}
     \begin{subfigure}{0.45\textwidth}
         \centering
         \includegraphics[width=\textwidth]{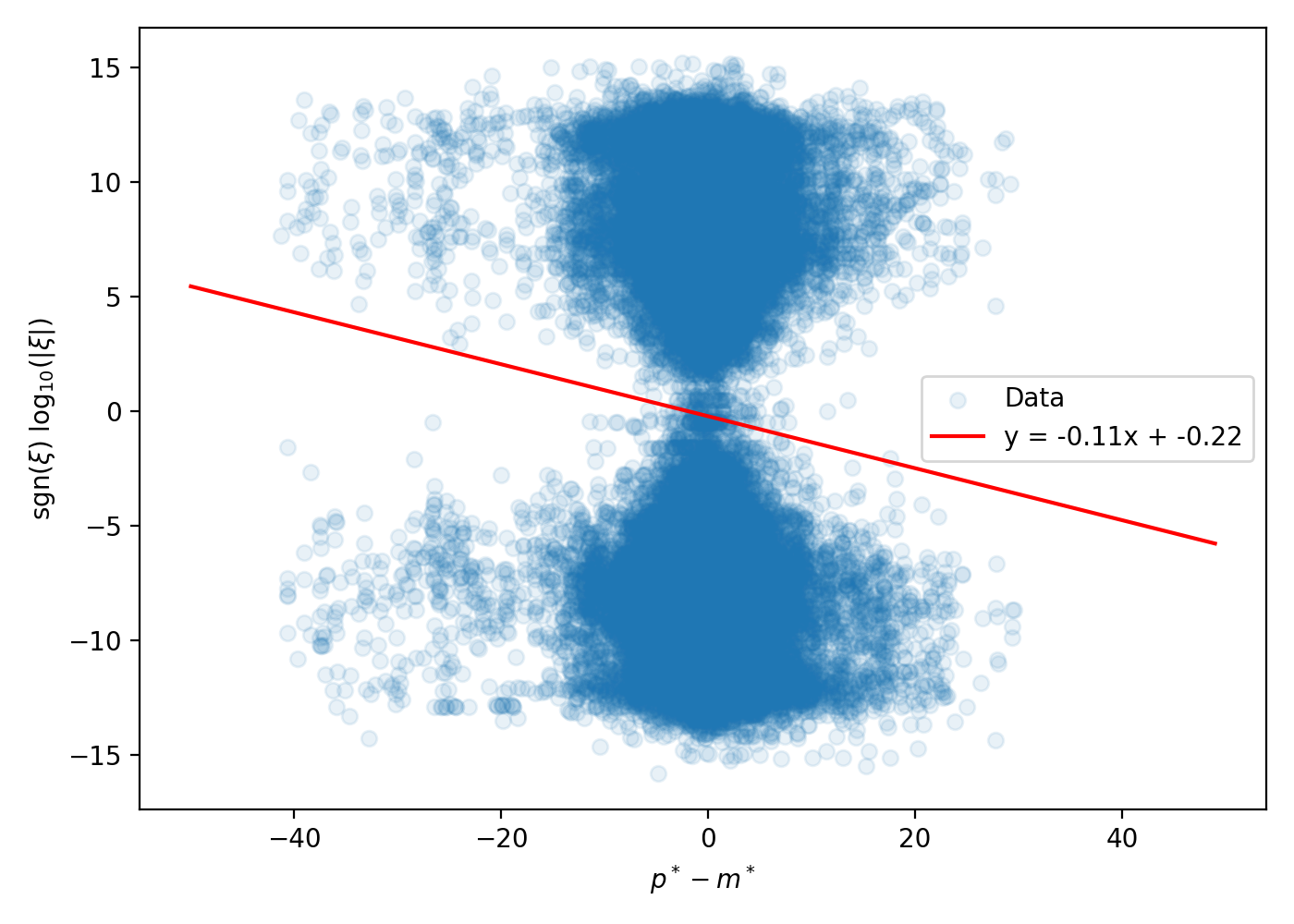}
     \end{subfigure}
    \caption{Plot of arbitrage level versus transaction size (left) and signed log transaction size (right).}
    \label{fig:size1}
\end{figure}
The transactions demonstrate a negative correlation between size and arbitrage level, and a two-tailed t-test for non-negative slope returned a p-value of 0, meaning that the slope is statistically different from zero. However, the correlation coefficient between the arbitrage level and the transaction size was only -0.06, meaning that much of the variation in the data cannot be explained by the variation in $p^*_t - m^*_t$. 

Therefore, we fit a Gaussian kernel density estimate to the log-transaction values (the log of the absolute value of the size of the USDC transaction) and the arbitrage levels, with the resulting estimated density shown in \Cref{fig:bimodal_density}. 
\begin{figure}
    \centering
    \includegraphics[width=0.8\textwidth]{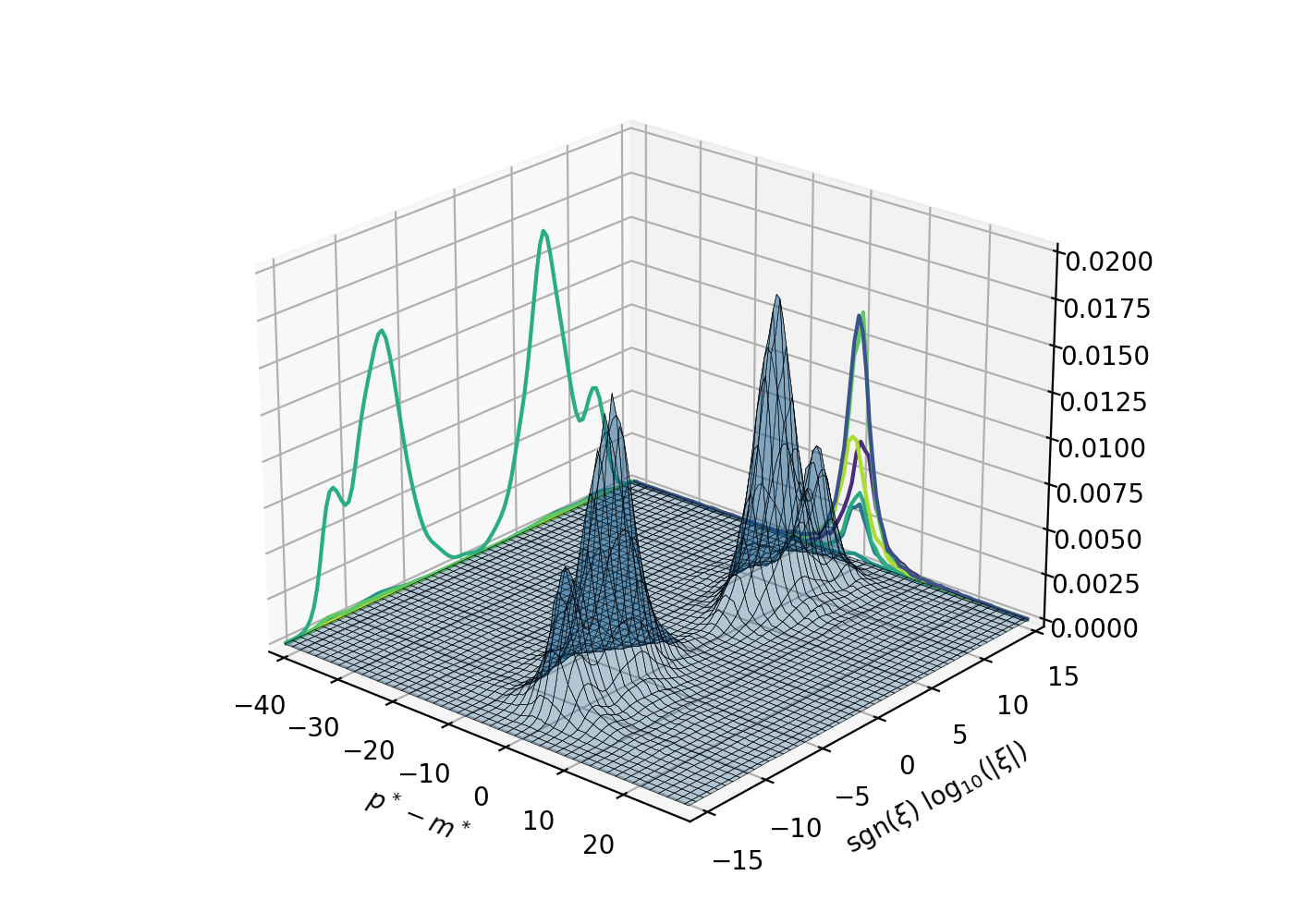}
     \caption{Joint density of the arbitrage level and of the log-value of transactions in USDC, $f_{\Xi, Y}(\xi, y)$.}
     \label{fig:bimodal_density}
\end{figure}
We describe the joint distribution of $\xi_t$ and $p^*_t - m^*_t$ through this estimated probability density function, which we denote $f_{\Xi,Y}(\xi, y)$, where $\xi$ is the size of a swap and $y$ is the arbitrage level before block $t$.
We again emphasize that the swapper knows $p^*_t-m^*_t$ before submitting transaction $\xi_t$ to the pool. Therefore, for a given arbitrage level $p^*_t - m^*_t$, we sample $\xi_t$ from the conditional distribution with conditional density given
\begin{equation*}
    f_{\Xi|Y}(\xi \mid Y = p^*_t - m^*_t).
\end{equation*}

Now that we have developed a model for the behavior of semi-strategic swappers arriving at the pool and described the distribution of the swap arrival and size $(X_t, \xi_t)$ in any block $t$, we can discuss optimal strategies for LPs reacting to the sequence of swaps $\bm \xi$ in various settings.

\subsection{Single Liquidity Provider Problem}
\label{sec:monopoly} 

We first simplify the liquidity optimization problem as a control problem for only one LP. This LP arrives at the pool and wishes to contribute liquidity in an optimal manner based on the current state of the pool. 

At time $t=0$, the LP knows $p^*_0$, $m^*_0$, and the preexisting liquidity distribution $\bell^0$. As discussed, it wishes to optimize its utility over a time horizon of $T = 7200$ blocks based on its choice of position at $t=0$. We will give this LP a few characteristics, namely that it has a {\it capital endowment} valued at $k>0$ units of token B, it has a {\it risk aversion level} $\lambda >0$, and it has a \textit{belief} about the trend of the market exchange rate, represented by $\delta \in \{-1, 0, 1\}$. 
These three characteristics will be summarized by the LP's \textit{type}, represented by $\theta := (k, \lambda, \delta)$.
\begin{remark}
    If we wish to specifically refer to the risk aversion, capital endowment, or beliefs of an LP with type $\theta$, we will refer to $\lambda(\theta)$, $k(\theta)$, or $\delta(\theta)$ respectively.
\end{remark}
In our paper, we will assume that $\lambda$ takes on values in a closed interval $[0, \lambda_{max}]$, and we will choose $\lambda_{max} = 5$ to cover a wide range of risk aversion levels. In addition, we will assume that $k \in \{2124, 35786, 1706034\}$. These values were chosen according to a cluster analysis which produced these three values as cluster centers. More details on this analysis can be found in \Cref{sec:appendix_calib_k}.This can be interpreted as LPs being loosely sorted into three groups: ``small" contributors, ``medium" contributors, and ``large" contributors. Finally, the belief parameter $\delta$ encodes the LP's belief about the trend of $m^*_t$. Because the market exchange rate is exogenous, it also introduces randomness into the LP's calculations. The LP knows that the exchange rate $m^*$ is equal to $m_0^*$ at block $t=0$, but the exchange rate will fluctuate between block 0 and block $T$. Cryptocurrency exchange rates are very sensitive to consumer sentiment and outside events, so it is difficult to make accurate predictions. Nonetheless, it is reasonable to assume that all participants in cryptocurrency markets have their own beliefs about the future performance of various tokens. The belief parameter $\delta$ incorporates the LP's preconceived estimate of whether the market exchange rate will increase ($\delta > 0$), decrease ($\delta < 0$), or remain relatively constant ($\delta = 0$). We assume that, based on $\delta$, the LP believes that the market exchange rate will evolve according to
\begin{equation}
\label{eq:trend}
    m^*_t = m^*_{t-1} e^{(\alpha + \delta/1000 - \sigma^2) \Delta t + \sigma \Delta W_t},
\end{equation}
where $\Delta t$ is the time between blocks (12 seconds), $\{\Delta W_t \sim N(0, \Delta t)\}_{t=1}^T$ are i.i.d. random variables, and $\alpha$ and $\sigma$ are estimated from historical data. Based on our data, we estimated that $\alpha = 0$ and $\sigma = 0.00106$.

We now have two different sources of uncertainty in our model, $\bm \xi$ and $(m^*_t)_{t=0}^T$. Note that the distribution of $\bm \xi$ at time $t$ is formulated to depend on $m^*_t$, and the LP's expectations about future values of $m^*_t$ depend on its belief $\delta$. The LP must calculate its expected utility when taking all of this into account. To shorten our notation, we will write $\E_{\bm \xi, \delta}[\ldots]$ and $\Var_{\bm \xi, \delta}(\ldots)$ to denote that expectation and variance are being taken over $\bm \xi$, given the LP's belief $\delta$.
The LP's control problem is to choose the ticks over which it wishes to distribute its capital endowment $k$. We will denote the space of controls as
\begin{equation*}
    J := \{(j^1, j^2)\ :\ j^1 \in [d], j^2 \in [d+1], j^1 < j^2\},
\end{equation*}
and we refer to an element of $J$ as a tuple $(j^1, j^2)$. Recall that if an LP contributes liquidity between ticks $j^1$ and $j^2$, the ticks $\{j^1, \ldots, j^2-1\}$ will all receive the same additional liquidity, though not necessarily the same token contributions. To save space, we will denote the new liquidity distribution arising from a choice of position $(j^1, j^2)$ and the LP's fixed endowed capital $k(\theta)$ as 
\begin{equation*}
    \bell^1(j^1, j^2) := \bell_{\text{new}}(u(j^1, j^2, k(\theta), m^*_0), j^1, j^2) + \bell^0,
\end{equation*}
where $\bell_{\text{new}}$ is defined in \Cref{eq:ell-new} and $u$ is defined in \Cref{eq:deltaliq}. For the value function in this section, and for those in the rest of the paper, we will not include $p^*_0$, $m^*_0$, and $T$ as explicit variables, but will rather use them as known, fixed constants. For a given choice of $j^1$ and $j^2$, the LP's utility is defined via a mean-variance objective function given by
\begin{equation}
\label{eq:V}
    V(j^1, j^2;\ \theta, \bell^0) := \E_{\bm \xi, \delta(\theta)} [\pi(j^1, j^2, \bm \xi;\ \theta, \bell^0)] - \lambda(\theta) \Var_{\bm \xi, \delta(\theta)}(\pi(j^1, j^2, \bm \xi;\ \theta, \bell^0,)),
\end{equation}
where $\pi$ represents the LP's profit associated with a sequence of swaps $\bm \xi$. Because $\bm \xi$ is random, $\pi$ is also random. It is defined as
\begin{equation}
     \pi(j^1, j^2, \bm \xi;\ \theta, \bell^0) := \sum_{t = 0}^T r(\xi_t, j^1, j^2, \bell^0, p^*_t, \theta), \label{eq:pi} 
\end{equation}
where $r$ represents the reward (fees) earned by the LP for a swap of size $\xi_t$ at block $t$ when the new liquidity distribution is given by $\bell^1(j^1, j^2)$. The pool exchange rate $p^*_t$ is successively calculated from $p^*_{t-1}$ and $\xi_{t-1}$ according to \Cref{eq:p_new}, where $p^*_0$ is known. Because we are currently considering only one strategic LP, we assume that the rest of the liquidity pool, represented by $\ell^0$, is fixed at block $t=0$ and remains unchanged for the next $T$ blocks. The LP earns fees when a transaction $\xi$ uses its liquidity. An LP's liquidity is used if at least some of an incoming swapper's tokens are added to some subset of its position. Each LP earns fees in proportion to its contribution to the liquidity in each tick. We define $r$ as
\begin{equation}
     r(\xi, j^1, j^2, \bell^0, p^*, \theta) :=  \sum_{i=j^1}^{j^2-1} \underbrace{\frac{u(j^1, j^2, k(\theta), m^*_0)}{\ell^1(j^1, j^2)_i}}_{\text{Share of Fees}}\underbrace{\phi(i, \xi; \bell^1(j^1, j^2), p^*)}_{\text{Fees in units of Token B}}, \label{eq:reward} 
\end{equation}
where $\phi(i, \xi; \bell^1(j^1, j^2), p^*)$ represents the portion of transaction fees from swap $\xi$ shared among LPs with liquidity in tick $i$, when the pool's liquidity distribution is given by $\bell^1(j^1, j^2)$ and the current pool exchange rate is $p^*$. The formulation of $\phi$ is discussed in \Cref{sec:swap_explanation} and the equation is given in \Cref{eq:fees}.
We remind the reader that $u(j^1, j^2, k(\theta), m^*_0)$ is the LP's liquidity contribution to each tick in its chosen position $(j^1, j^2)$ with capital endowment $k(\theta)$ based on a market exchange rate $m^*_0$.
The LP now wishes to solve
\begin{equation*}
    \max_{(j^1,j^2)\in J} V(j^1, j^2; \theta, \bell^0),
\end{equation*}
which is a discrete optimization problem with a (relatively) small control space. While the stochasticity arising from $\bm \xi$ still makes this difficult to solve analytically, it is straightforward to utilize Monte Carlo simulations to approximate the optimal choice of $(j^1, j^2)$. For the specific method used to calculate each LP's optimal liquidity position, the reader can refer to \Cref{sec:method_single}.

While this optimization problem is tractable and is beneficial for demonstrating the incentives which motivate each LP, it relies on the assumption that LPs do not compete with each other and therefore do not react to the behaviors of other LPs. However, we theorize that this is not the case and that in a more accurate model LPs should engage in a game among themselves. Therefore, we will extend this individual LP optimization problem to competitive games between LPs.

\subsection{Game Among N Liquidity Providers}
\label{sec:n-player}
We now consider an $N$-player game among multiple LPs who wish to maximize their profits earned from DEX fees. In our data analysis, we have found that Uniswap distributions are relatively stable over time but now we wish to study whether they can be modeled as an equilibrium of a competitive game. Because LP profits are determined by share of liquidity in each tick, the LPs compete against each other to gain a bigger share of fees in profitable ticks. In this section, we will develop a game and numerically approximate a Nash equilibrium.

Each LP is a player in the game and has characteristics similar to the single LP in \Cref{sec:monopoly}. LP $n$, for $n\in\{1,2,\ldots,N\}$, has an assigned type $\theta_n := (k(\theta_n), \lambda(\theta_n), \delta(\theta_n))$. 
It has a fixed capital endowment $k(\theta_n)$ to distribute across its chosen position $(j^1_n,j^2_n)$, its value function takes on a mean-variance form with risk aversion level $\lambda(\theta_n)$, and it has a belief $\delta(\theta_n)$ regarding the movement of the market based on \Cref{eq:trend}.   
In the $N$-player game, this belief parameter is reminiscent of asymmetric information between players, but we emphasize that none of the LPs have true insider information about market exchange rate fluctuations. We will also hold the time horizon of the problem fixed for all players at $T = 2700$.

Types are distributed across the $N$ players according to some joint probability distribution $\Theta_N$. For symmetry, we will assume that each player's type, $\theta_n$, is independently sampled from a given distribution $\Theta$. We can define the $N$-dimensional vector of player types as $\bm \theta := (\theta_1, \ldots, \theta_N) \sim \Theta_N$ where $\Theta_N := \prod_{n=1}^N \Theta$. We make the following assumption on the distribution $\Theta$. 
\begin{assumption}
\label{asm:theta}
The space of characteristics/types, $\supp(\Theta)$, is a separable and complete metric space. We also impose that $\Theta$ is atomless.
\end{assumption}
This game can be viewed as a Bayesian game. Players know their own types and are able to observe the liquidity distribution generated by the actions and types of the other players, but do not have specific information about the types of any other players. 
So, instead of one LP approaching a preexisting market, many LPs are now interacting to form the market. 
Players choose their actions based on their assigned type $\theta_n$, and so we define a type-dependent \textit{strategy profile} as
\begin{equation*}
    \bm j(\bm \theta) := \{(j^1_n(\theta_n), j^2_n(\theta_n))\}_{n=1}^N,
\end{equation*}
where player $n$ plays action $(j^1_n(\theta_n), j^2_n(\theta_n))$ based on its assigned type $\theta_n$. We will describe the type-dependent strategy profile of all players except player $n$ as 
\begin{equation*}
    \bm j(\bm \theta^{-n}) := \{(j^1_1(\theta_1), j^2_1(\theta_1)),\ldots, (j^1_{n-1}(\theta_{n-1}), j^2_{n-1}(\theta_{n-1})), (j^1_{n+1}(\theta_{n+1}), j^2_{n+1}(\theta_{n+1})), \ldots, (j^1_N(\theta_N), j^2_N(\theta_N))\},
\end{equation*} 
where
\begin{equation*}
    \bm \theta^{-1} := (\theta_1, \ldots, \theta_{n-1}, \theta_{n+1}, \ldots, \theta_N)
\end{equation*}
represents the $(N-1)$-dimensional type vector containing the types of all players except player $n$.
For a given type assignment vector $\bm \theta$, the liquidity distribution arising from all players except player $n$ is given by $\bell(\bm j(\bm \theta^{-n}))$ as 
\begin{equation*}
    \ell(\bm j(\bm \theta^{-n}))_i := \sum_{n' \neq n} u(j^1_{n'}(\theta_{n'}), j^2_{n'}(\theta_{n'}), k(\theta_{n'}), m^*_0) \one_{\{j^1_{n'} \leq i < j^2_{n'}\}}, \qquad \forall i \in [d].
\end{equation*}
Given a type vector $\bm \theta$ and a strategy profile $\bm j(\bm \theta)$, player $n$'s value function is thus defined as
\begin{equation*}
 V_{n}(j^1, j^2, \theta_n, \bm j(\bm \theta^{-n})) := \ \E_{\bm \xi, \delta(\theta_n)}\left[ \pi(j^1, j^2, \bm \xi;\ \theta_n, \bell(\bm j(\bm \theta^{-n})))\right] - \lambda(\theta_n) \Var_{\bm \xi, \delta(\theta_n)}\left(\pi(j^1, j^2, \bm \xi;\ \theta_n, \bell(\bm j(\bm \theta^{-n})))\right),
\end{equation*}
where $\pi$ is defined as in \Cref{eq:pi} but with $\bell(\bm j(\bm \theta^{-n})))$ replacing $\bell^0$. This is because player $n$ reacts to the actions of all other players instead of the preexisting liquidity distribution. It is assumed that all LPs share the same belief about the distribution of swap sizes as a function of the arbitrage level as described in \Cref{sec:swap_calibration}, but have different beliefs about the evolution of the market exchange rates $(m^*_t)_{t=0}^T$, which are determined by $\delta(\theta_n)$ through \Cref{eq:trend}. 
Since $\delta(\theta_n)$ plays a role in LP $n$'s predictions regarding $\bm \xi$, we will write $\E_{\bm \xi, \delta(\theta_n)}[\ldots]$ and $\Var_{\bm \xi, \delta(\theta_n)}(\ldots)$ to represent that LP $n$ is optimizing based on its belief $\delta(\theta_n)$.

\begin{defn}[Pure Strategy Bayes-Nash Equilibrium]
    A strategy profile $\bm j(\bm \theta):= \{(j^1_n(\theta_n), j^2_n(\theta_n))\}_{n=1}^N$ is called a pure Bayes-Nash equilibrium if, for all $n \in \{1, \ldots, N\}$, 
    \begin{equation*}
        (j^1_n(\theta_n), j^2_n(\theta_n)) \in \argmax_{(j^1, j^2) \in J} \E\left[ V_n(j^1, j^2, \theta_n, \bm j(\bm \theta^{-n}))\right].
    \end{equation*}
    This means that if player $n$ holds the strategies of all other players, given by $\bm j(\bm \theta^{-n})$, constant, then player $n$'s type-dependent strategy $(j^1_n(\theta_n), j^2_n(\theta_n))$ maximizes its expected payoff when the expectation is taken over $\bm \theta^{-n}$.
\end{defn}
\begin{prop}
    Under \Cref{asm:theta}, there exists a Nash equilibrium in pure strategies for the Bayesian game described above.
\end{prop}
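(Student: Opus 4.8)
The plan is to treat this as a finite-action Bayesian game and to obtain a pure equilibrium in two stages: first establish a mixed (behavioral) Bayes--Nash equilibrium via a fixed-point argument, then purify it using the atomlessness of $\Theta$. The crucial structural feature I would exploit is that the action space $J$ is \emph{finite}, so the continuity-in-actions hypotheses that usually cause trouble are automatic, and each player's payoff depends on the opponents only through their realized actions (equivalently, through the aggregate liquidity $\bell(\bm j(\bm \theta^{-n}))$), which takes finitely many values for each fixed type profile.

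First I would record the measurability and integrability prerequisites. I would check that, for each fixed action tuple $(j^1,j^2)\in J$ and each fixed profile of the other players' type-dependent actions, the map $\theta_n \mapsto V_n(j^1,j^2,\theta_n,\cdot)$ is measurable and bounded. Boundedness follows because the profit $\pi$ in \Cref{eq:pi} is a finite sum over the $T$ blocks of reward terms $r$ from \Cref{eq:reward}, each of which is controlled: the ``share of fees'' factor $u/\ell^1_i$ lies in $[0,1]$, while $\phi$ is bounded through the finite endowment $k(\theta_n)$ together with the finiteness of the moments of the swap-size distribution of \Cref{sec:swap_calibration}; the variance term in \Cref{eq:V} is then finite as well. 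This makes the interim expected payoff $\E[V_n]$ (the expectation being over $\bm \theta^{-n}$) a well-defined, bounded functional of the strategy profile.

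Next I would set up distributional strategies in the sense of Milgrom--Weber: a strategy for player $n$ is a probability measure on $\supp(\Theta)\times J$ whose marginal on types equals $\Theta$. Under \Cref{asm:theta} the type space is Polish, so each player's set of distributional strategies is tight, hence weak-$*$ compact, convex, and metrizable. Because $J$ is finite, the expected-payoff functionals are continuous in this topology and multilinear across players, so the interim best-response correspondence is nonempty-, convex-, and closed-valued and upper hemicontinuous. Applying the Kakutani--Fan--Glicksberg fixed-point theorem to the product of these correspondences yields a Bayes--Nash equilibrium in distributional (mixed) strategies.

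Finally I would purify. Since the prior is the product measure $\Theta_N=\prod_{n}\Theta$ the types are independent, and each $\Theta$ is \emph{atomless}, so I would invoke the purification theorem for finite-action Bayesian games (Milgrom--Weber; Radner--Rosenthal): atomlessness permits a Lyapunov-type convexity argument that replaces player $n$'s randomization over the finite set $J$ by a measurable pure selection $\theta_n\mapsto (j^1_n(\theta_n),j^2_n(\theta_n))$ leaving every opponent's expected payoff unchanged while remaining optimal for $n$, which yields the desired pure-strategy equilibrium. I expect this purification step to be the main obstacle: one must check precisely that player $n$'s interim payoff, viewed as a function of $n$'s own action and type with the opponents integrated out, meets the independence-and-atomlessness hypotheses of the purification theorem. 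This is exactly where \Cref{asm:theta} is used essentially rather than cosmetically, and where care is needed to confirm that ties in the best-response set occur on a $\Theta$-null set (or are resolved through the convexity argument) so that the pure selection is genuinely a best response.
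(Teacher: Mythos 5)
Your proposal is correct and rests on exactly the same key result the paper uses: the Radner--Rosenthal purification theorem for finite-action Bayesian games with mutually independent, atomless type distributions, whose hypotheses you verify via the finiteness of $J$ and \Cref{asm:theta} just as the paper does. The only difference is that the paper cites that theorem as a black box, whereas you additionally sketch its internal proof (distributional-strategy equilibrium via Kakutani--Fan--Glicksberg, then Lyapunov-type purification) and check boundedness and measurability of the payoffs, which is a sound but not essentially different argument.
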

\begin{proof}
    In Theorem 1 of \cite{radnerrosenthal82}, it is shown that in a Bayesian $N$-player game with a finite action space, if the private observations of players in an $N$-player game are mutually independent and the observations are drawn from atomless distributions, then a pure-strategy Bayes-Nash equilibrium exists. In our game, the action space $J$ is finite and in fact $|J| = d(d+1)/2$. In addition, in \Cref{asm:theta} we have imposed that $\Theta$ is atomless. Because each player's type $\theta_n$ is drawn independently from $\Theta$, our game is consistent with that of \cite{radnerrosenthal82} and a pure strategy Nash equilibrium exists.
\end{proof}
\begin{remark}
If all players' types $\bm \theta = (\theta_1, \ldots, \theta_N)$ are common knowledge at the start of the game, this is a finite, complete information game and by \cite{Nash1950}, a mixed strategy Nash equilibrium exists. 
We could also consider a Bayesian game in which the number of type configurations in $\supp(\Theta_N)$ is finite and thus $\Theta_N$ is a discrete joint distribution over all type configurations; in which case, player types are not drawn from atomless distributions. We also relax the independence between types. Nevertheless, an equilibrium can still be found. Theorem 2 of \cite{harsanyi2} states that, because $J$ is finite, this game must have at least one mixed strategy Bayesian-Nash equilibrium. 
\end{remark}

\subsubsection*{Numerical Results for N-Player Game}
We now want to use a numerical approach to investigate whether an $N$-player game like the one described above can model the dynamics that we have observed in Uniswap pools.
We again assume that all players wish to maximize their profits over the next $T=7200$ blocks. We want the total liquidity in the pool to remain consistent with the real Uniswap v3 ETH/USDC liquidity pool, which has a total volume of approximately 150 million USDC. In Uniswap pools, the number of LPs contributing to a pool is on the order of hundreds or thousands. However, we wish to see if we can analyze the pool using a simplified and therefore more tractable model. Therefore, we initially choose $N=10$ and accordingly we will adjust the capital endowments such that $k(\theta_n) \in \{212400, 3578600, 170603400\}$. As in \Cref{sec:monopoly}, the beliefs of players are dictated by $\delta(\theta_n) \in \{-1, 0, 1\}$. Finally, we define $\lambda(\theta_n) \in [0, \lambda_{max}]$ where we set $\lambda_{max} = 3$. Therefore, $\supp(\Theta) := \{212400, 3578600, 170603400\}  \times [0, 3] \times \{-1, 0, 1\}$. However, we also need to specify the distribution $\Theta$ itself, where we will assume that it is atomless to maintain consistency with \Cref{asm:theta} and \cite{radnerrosenthal82}. 
\begin{figure}
    \centering
    \begin{subfigure}{0.45\textwidth}
        \centering
        \includegraphics[width=\textwidth]{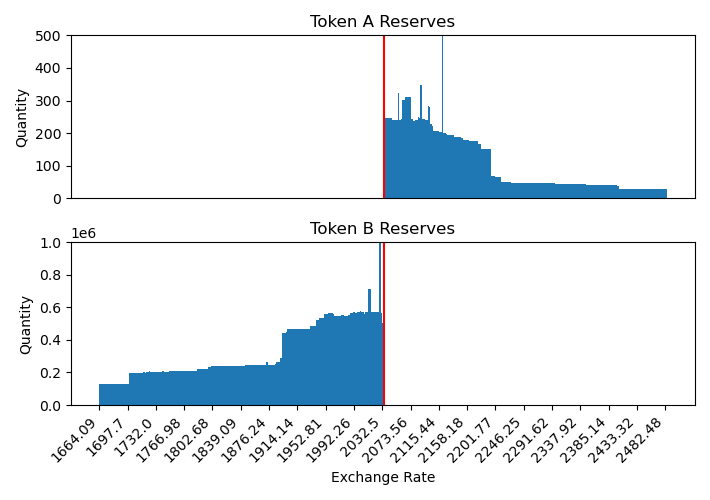}
         \caption{The observed Uniswap ETH/USDC token reserves on November 29, 2023. }
    \end{subfigure}
    \hspace{1.5em}
    \begin{subfigure}{0.45\textwidth}
        \centering
        \includegraphics[width=\textwidth]{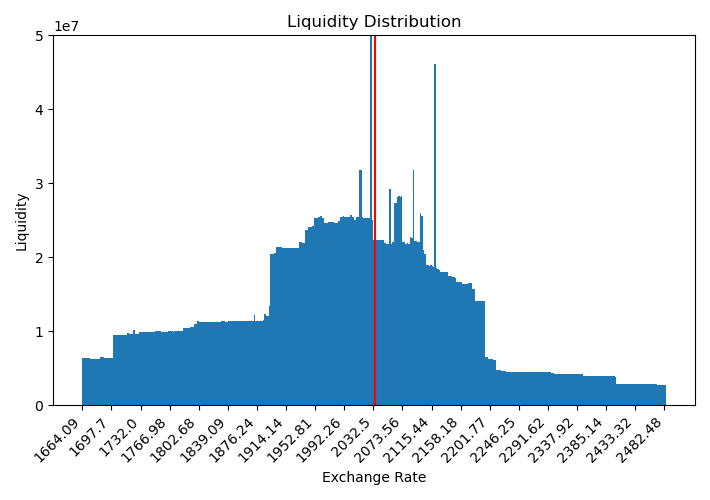}
         \caption{The observed Uniswap ETH/USDC liquidity distribution on November 29, 2023. }
    \end{subfigure}
\caption{Uniswap ETH/USDC pool on November 29, 2023.}
\label{fig:real-np}
\end{figure} 

Recall that in this $N$-player Bayesian game, each player reacts to an $(N-1)$-dimensional random vector of the other players' actions, which are a function of their randomly assigned types. Each type is randomly sampled from $\Theta$. We calibrate $\Theta$ such that an equilibrium of the $N$-player game will resemble the observed Uniswap ETH/USDC token reserves and liquidity distribution on November 29, 2023, shown in \Cref{fig:real-np}. We provide a more detailed description of our numerical method in \Cref{sec:method_np}. We will call the target liquidity distribution $\bell^0$. Once we have calibrated $\Theta$, the resultant expected token distribution for a 10-player game are shown in \Cref{fig:theta_np} and \Cref{fig:sim-np}. The calculations took 111.57 minutes\footnote{All computations were performed on a 2021 Apple MacBook Pro with M1 Pro processor chip.} and the Wasserstein-1 distance between our calibrated distribution and the target distribution is given by 18.198, while the R-score was 0.9179. While our calibrated distribution does bear a resemblance to $\bell^0$, note that this is only the \textit{expected} liquidity distribution arising from the calculated optimal strategy when types are distributed according to our estimate of $\Theta$. This is not the same as the \textit{realized} liquidity distribution arising from the actions of players whose types correspond to a specific $\bm \theta \sim \Theta_N$.

\begin{figure}
    \centering
    \includegraphics[width = 0.6\textwidth]{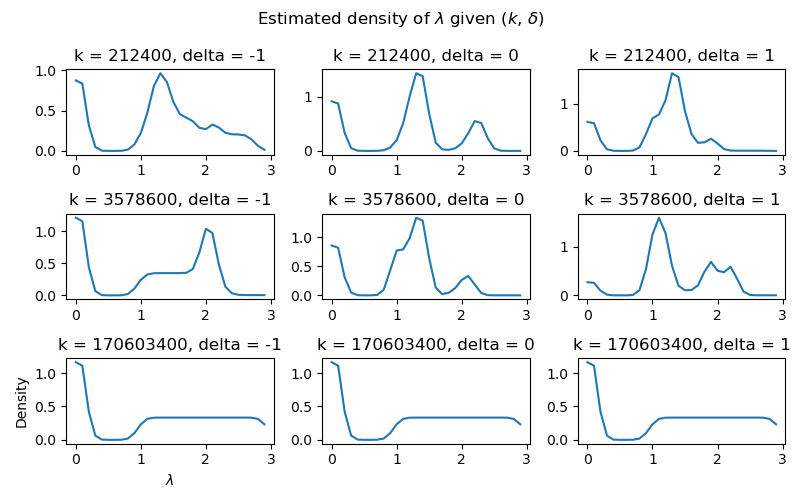}
    \caption{The estimated conditional distributions of $\lambda(\theta_n)$ based on $(k(\theta_n), \delta(\theta_n))$ for a 10-player game. }
    \label{fig:theta_np}
\end{figure}

\begin{figure}
    \centering
    \begin{subfigure}{0.45\textwidth}
        \centering
        \includegraphics[width=\textwidth]{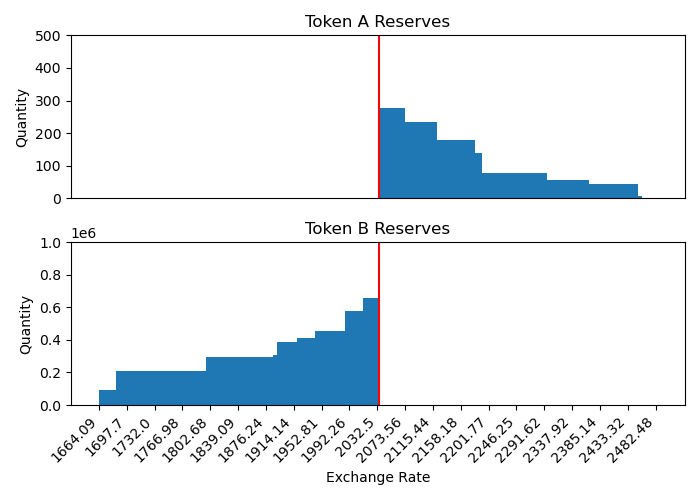}
         \caption{Token reserves based on calibrated $\Theta$. }
    \end{subfigure}
    \hspace{1.5em}
    \begin{subfigure}{0.45\textwidth}
        \centering
        \includegraphics[width=\textwidth]{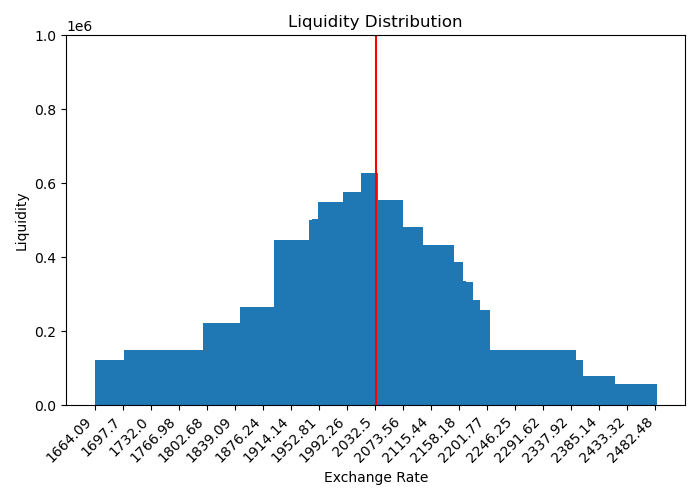}
         \caption{Liquidity distribution based on calibrated $\Theta$. }
    \end{subfigure}
    \caption{Expected pool appearance based on calibrated $\Theta$.}
    \label{fig:sim-np}
\end{figure}

After determining $\Theta$, it is natural to wonder what will result from sampling from $\Theta_N := \prod_{n=1}^N \Theta$. 
In \Cref{fig:l-nplayer}, we compare the equilibrium liquidity distributions arising from different assignments of $\{\theta_n\}_{n=1}^N$, where the parameters are shown in \Cref{tab:nplayer} and drawn from $\Theta$. In these two games, types are assigned and fixed before the game begins, but each player knows only its own type and $\Theta_N$, and reacts to the (type-dependent) actions of all the other players.
In Numerical Method \ref{alg:FP} of \Cref{sec:method_np}, we discuss our method for calculating these equilibria. 
It can be seen that Sample (a) and Sample (b) produced very different equilibrium liquidity distributions, even though both samples were drawn from $\Theta_N = \prod_{n=1}^N \Theta$ and the type assignments are fairly similar. This is because $\supp(\Theta)$ is atomless, and even the discretization of $\supp(\Theta)$, given by $\Delta$, contains 270 types when we discretize $[0, \lambda_{max}]$ as $\{\lambda_1, \ldots, \lambda_{30}\}$. Therefore, it is clear that we cannot obtain a truly representative sample of player types with only 10 players.

The most obvious solution is to increase the number of players. However, it is known that $N$-player games become much more difficult to analyze as $N$ becomes very large. 
This is because all analysis must be conducted with respect to an $N$-dimensional type vector.
To combat this difficulty, in the following section we will use mean-field games to more efficiently describe competition between LPs in a DEX. 

\begin{figure}
    \centering
     \begin{subfigure}{0.45\textwidth}
         \centering
         \includegraphics[width=0.9\textwidth]{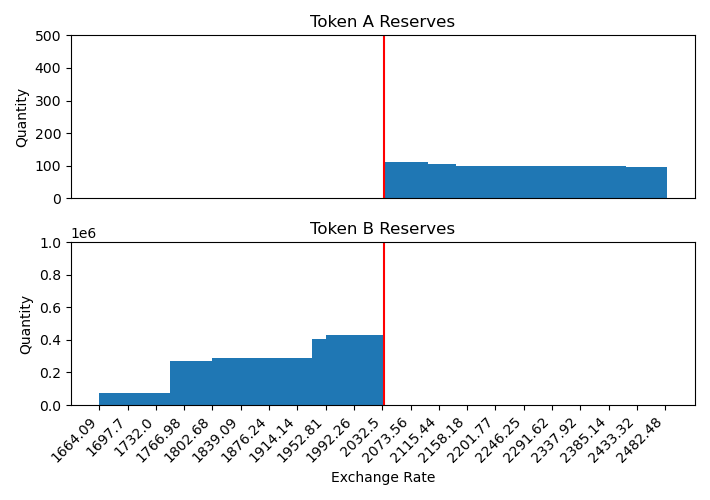}
         \caption{The simulated token reserves associated with $\bm j^*$ after a game between players with types from Sample (a) of \Cref{tab:nplayer}.\label{fig:np-b}}
     \end{subfigure}
     \hspace{1.5em}
     \begin{subfigure}{0.45\textwidth}
         \centering
         \includegraphics[width=0.9\textwidth]{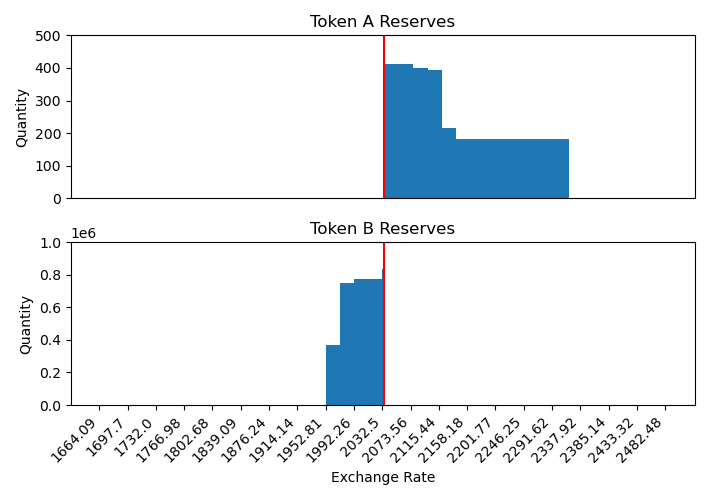}
         \caption{The simulated token reserves associated with $\bm j^*$ after a game between players with types from Sample (b) of \Cref{tab:nplayer}.\label{fig:np-d}}
     \end{subfigure}
     \caption{Numerically Nash equilibria of 10-player games with different realizations of $\{\theta_n\}_{n=1}^N$.}
     \label{fig:l-nplayer}
\end{figure}

\begin{table}
    \centering
    \scriptsize
    \begin{tabular}{|c|c|c|c|c|c|c|c|c|c|c|c|}
        \hline
        & Player & 1 & 2 & 3 & 4 & 5 & 6 & 7 & 8 & 9 & 10\\
        \hline 
        \multirow{3}{*}{(a)} & $k_n$ & 212,400 & 212,400 & 212,400 & 3,578,600 & 3,578,600 & 3,578,600 & 170,603,400 & 170,603,400 & 170,603,400 & 212,500 \\
        \cline{2-12}
        & $\lambda_n$ & 0.1 & 0.1 & 0.1 & 1 & 0.02 & 0.02 & 0.05 & 0.02 & 0.05 & 0.1\\
        \cline{2-12}
        & $\delta_n$ & 0 & 1 & -1 & 0 & 1 & -1 & 0 & 1 & -1 & 0\\
        \hline 
        \hline 
        \multirow{3}{*}{(b)} & $k_n$ & 212,400 & 212,400 & 212,400 & 3,578,600 & 3,578,600 & 3,578,600 & 170,603,400 & 170,603,400 & 170,603,400& 212,500 \\
        \cline{2-12}
        & $\lambda_n$ & 0.1 & 0.1 & 0.1 & 0.01 & 0.02 & 0.02 & 0.05 & 0.02 & 0.05 & 0.05\\
        \cline{2-12}
        & $\delta_n$ & 0 & 1 & 1 & 0 & 1 & 1 & 0 & 1 & 1 & 0\\
        \hline
    \end{tabular}
    \caption{Samples of types $\{\theta_n\}_{n=1}^N = \{(k_n, \lambda_n, \delta_n)\}_{n=1}^N$ for a 10-player game with type distribution $\Theta$ as in \Cref{fig:theta_np}. }
    \label{tab:nplayer}
\end{table}

\subsection{The Liquidity Provider Mean-Field Game }
\label{sec:mfg}
We now reformulate the N-player game as a mean-field game. We generalize the $N$-player game to the mean-field setting by defining a representative player whose characteristics $\theta$ are distributed according to $\Theta$ with properties as in \Cref{asm:theta}. As a reminder, the type $\theta$ encompasses an LP's capital endowment ($k$), risk aversion ($\lambda$), and beliefs about swap sizes ($\delta$), where the effect of $\delta$ on the representative player's beliefs about $(m^*_t)$ is the same as in \Cref{eq:trend}. Recall that $\theta$ does not change over time. The efficiency of a mean-field approach is two-fold. First, players only interact with the mean-field of other players' actions, instead of individual other players. In addition, we can consider a \textit{representative player} instead of looking at each individual player one by one. This allows us to determine the representative player's optimal response to $\Theta$, instead of an $N$-dimensional random type vector $\bm \theta$. 
The mean field game can be broken into two stages:
\begin{enumerate}
    \item The representative player chooses if/how to optimally adjust its positions based on a given liquidity distribution $\bell^0$ and its type $\theta \sim \Theta$. 
    \item Integrating with respect to the type $\theta\sim\Theta$, the action of this representative player creates a new liquidity distribution $\bell^1$. In equilibrium, $\bell^1 = \bell^0$.
\end{enumerate}
The representative player's payoff depends on the overall pool liquidity distribution $\bell^0$ as well as its own position. Because $\theta$ is distributed according to $\Theta$ which satisfies \Cref{asm:theta}, the representative player's strategy (and possibly payoff) is random with respect to $\theta$ and thus the player chooses a measurable function $\bm j: \supp(\Theta) \to J$ which assigns a position $(j^1, j^2)$ to each possible type $\theta$. We will sometimes use the notation $\bm j(\theta) = (j^1(\theta), j^2(\theta))$.  

\begin{remark}
   From an economic perspective, this static MFG can also be considered a game between a continuum of players. However, we choose to use MFG terminology throughout our paper because in the future we would also like to consider dynamic games in the same setting. There is a rich literature covering both theoretical and numerical results for dynamic MFGs, and using MFG terminology will allow us to more easily extend our setting to stochastic differential MFGs.
\end{remark}

For fixed values of $\theta$ and $\bell^0$, define the representative player's payoff as
\begin{equation}
V_{\text{MFG}}((j^1, j^2), \theta, \bell^0) := \E_{\bm \xi, \delta(\theta)}\left[\pi_{\text{MFG}}(j^1, j^2, \bm \xi;\ \theta, \bell^0)\right] - \lambda(\theta) \Var_{\bm \xi, \delta(\theta)}(\pi_{\text{MFG}}(j^1, j^2, \bm \xi;\ \theta, \bell^0)), 
\end{equation}
where $\pi$ is the total profit corresponding to a series of swaps $\bm \xi = (\xi_i)_{t=1}^T$, defined by
\begin{equation*}
    \pi_{\text{MFG}}(j^1, j^2, \bm \xi;\ \theta, \bell^0) := \sum_{t=0}^T r_{\text{MFG}}(\xi_t, j^1, j^2, k(\theta), p^*_t, \bell^0),
\end{equation*}
 with $k(\theta)$ being the capital endowment of a player with type $\theta$ and $p^*_t$ is defined from $p^*_{t-1}$ and $\xi_{t-1}$ according to \Cref{eq:p_new}. The function $r_{\text{MFG}}$ is the reward for a single swap $\xi$, given by 
 \begin{equation*}
     r_{\text{MFG}}(\xi_t, j^1, j^2, k(\theta), p^*_t, \bell^0) := \sum_{i = j^1}^{j^2-1} \frac{u(j^1, j^2, k(\theta), m^*_0)}{\bell^0_i}\phi(i, \xi; \bell^0, p^*).
 \end{equation*}
The liquidity distribution of the mean field resulting from a generic strategy $\bm j$ (not necessarily a best response) is given by $\bell_{\text{MFG}}(\bm j) \in \R^d_+$ where the liquidity in tick $i$ is defined as
\begin{equation}\label{eq:ell_MFG}
    \bell_{\text{MFG}}(\bm j)[i] := \int u\left(j^1(\theta), j^2(\theta), k(\theta), m^*_0\right)\one_{\{j^1(\theta) \leq i < j^2(\theta)\}} \Theta (d\theta),\quad \forall i \in [d].
\end{equation}

We now wish to discuss the equilibrium conditions for our model. Because the representative player reacts to the current liquidity distribution $\bell^0$ by choosing a type-dependent strategy $\bm j(\theta)$, we define the equilibrium of this game as a pair of a liquidity distribution and a control function.
\begin{defn}[Pure Strategy MFE]
    A mean-field equilibrium (MFE) is defined as a liquidity distribution and choice of controls $(\bell^*, \bm j^*)$ such that 
    \begin{enumerate}
        \item The controls $\bm j^*$ are a best response to $\bell^*$, meaning that for all $\theta \in \supp(\Theta)$, $\bm j^*$ satisfies
        \begin{equation*}
            \bm j^*(\theta) \in \argmax_{(j^1, j^2) \in J} V_{\text{MFG}}((j^1, j^2), \theta, \bell^*).
        \end{equation*}
        \item The liquidity distribution $\bell^*$ results from the optimal controls, meaning that 
        \begin{equation*}
            \bell^* = \bell_{\text{MFG}}(\bm j^*).
        \end{equation*}
    \end{enumerate}
\end{defn}

\begin{prop}
    Under \Cref{asm:theta} for $\Theta$, the MFG described in this section possesses at least one pure-strategy MFE.
\end{prop}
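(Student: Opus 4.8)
The plan is to recast the two equilibrium conditions as a single fixed-point problem on the space of aggregate liquidity distributions and to invoke the Kakutani--Fan--Glicksberg fixed-point theorem, with the atomlessness imposed in \Cref{asm:theta} supplying exactly the convexity needed to obtain a \emph{pure}-strategy equilibrium. First I would fix a compact convex domain for the aggregate: since $k(\theta)$ is bounded on $\supp(\Theta)$ and each $u(j^1,j^2,k,m^*_0)$ is therefore bounded, every distribution of the form $\bell_{\text{MFG}}(\bm j)$ lies inside a fixed box $\mathcal{L} := \prod_{i=1}^d [0, M_i] \subset \R^d_+$, which is nonempty, compact and convex. I would then define, for each candidate aggregate $\bell \in \mathcal{L}$ and each type $\theta$, the best-response set $BR(\theta,\bell) := \argmax_{a\in J} V_{\text{MFG}}(a,\theta,\bell) \subseteq J$, which is nonempty because $J$ is finite, together with the induced correspondence
\begin{equation*}
\Phi(\bell) := \Big\{ \bell_{\text{MFG}}(\bm j)\ :\ \bm j \text{ measurable},\ \bm j(\theta)\in BR(\theta,\bell)\ \text{for } \Theta\text{-a.e. }\theta \Big\}.
\end{equation*}
A fixed point $\bell^*\in\Phi(\bell^*)$, together with any measurable selection $\bm j^*$ realizing it, is precisely an MFE in the sense of the two defining conditions.

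The three hypotheses of the fixed-point theorem would then be verified in turn. Nonemptiness of $\Phi(\bell)$ reduces to the existence of a measurable best-response selection; since $J$ is finite and $\theta\mapsto V_{\text{MFG}}(a,\theta,\bell)$ is measurable for each fixed action $a$, a selection can be produced explicitly by breaking ties among the finitely many maximizers in a fixed (say lexicographic) order, so $\Phi(\bell)\neq\emptyset$. The crucial step is convexity: writing the per-type liquidity contribution as the vector $g(\theta,a) := \big(u(a,k(\theta),m^*_0)\,\one_{\{a^1\le i< a^2\}}\big)_{i=1}^d$, the set $\Phi(\bell)$ is exactly the set of integrals $\int g(\theta,\bm j(\theta))\,\Theta(d\theta)$ as $\bm j$ ranges over measurable selections of $BR(\cdot,\bell)$. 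Because $\Theta$ is atomless, Lyapunov's convexity theorem (in its Aumann/Richter form for integrals of a correspondence against an atomless measure) guarantees this set is convex and compact. This is the one place \Cref{asm:theta} is essential, and it is what allows the equilibrium to be realized in pure strategies rather than in mixtures over $J$.

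Closed graph of $\Phi$ is the remaining ingredient, and it follows from upper hemicontinuity of $\bell\mapsto BR(\theta,\bell)$, which in turn follows from joint continuity of $\bell\mapsto V_{\text{MFG}}(a,\theta,\bell)$. Since $\pi_{\text{MFG}}$ is a finite sum of bounded rewards $r_{\text{MFG}}$ and the pool-rate recursion $p^*_t$ depends continuously on $\bell$ through the swap dynamics, the expectation and variance defining $V_{\text{MFG}}$ pass to limits by dominated convergence; the finiteness of $J$ then makes $BR(\theta,\cdot)$ have closed graph pointwise, and boundedness of $g$ lets one upgrade this to closed graph of the integrated correspondence $\Phi$. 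With nonempty convex compact values and a closed graph on the compact convex set $\mathcal{L}$, Kakutani--Fan--Glicksberg yields $\bell^*\in\Phi(\bell^*)$, and a measurable selection $\bm j^*$ with $\bm j^*(\theta)\in BR(\theta,\bell^*)$ and $\bell_{\text{MFG}}(\bm j^*)=\bell^*$ is the desired pure-strategy MFE.

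I expect the main obstacle to be the continuity, indeed the well-definedness, of $V_{\text{MFG}}$ where some coordinate $\bell_i$ vanishes, because the fee share $u(\cdot)/\bell^0_i$ appearing in $r_{\text{MFG}}$ blows up as $\bell^0_i\to 0$. I would handle this by restricting to liquidity distributions bounded below by a small always-present baseline $\bell^{\mathrm{base}}$ in the economically relevant ticks, so that $\bell_{\text{MFG}}(\bm j)\ge \bell^{\mathrm{base}}$ automatically and $\Phi$ becomes a self-map of the compact convex set $\{\bell\ge\bell^{\mathrm{base}}\}\cap\mathcal{L}=\prod_i[\bell^{\mathrm{base}}_i,M_i]$, on which the reward stays bounded and continuous. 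The delicate bookkeeping is then to confirm that a best response never strictly profits from placing mass in a tick carrying no aggregate liquidity, so that this lower-bounding is without loss; the remaining measurability of selections and the interchange of $\argmax$ with limits in the closed-graph step are routine once $V_{\text{MFG}}$ is continuous.
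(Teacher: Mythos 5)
Your proposal is correct, but it takes a genuinely different route from the paper. The paper's proof is a one-step citation: it invokes Theorem 2 of Mas-Colell (1984) on Cournot--Nash equilibrium distributions and simply checks its hypotheses (finite action space $J$, continuity of $V_{\text{MFG}}$ in the action and in $\bell$, atomless $\Theta$). You instead give a self-contained fixed-point argument in the style of Rath's direct proof for games with a continuum of players: Kakutani--Fan--Glicksberg applied to the best-response-aggregate correspondence $\Phi$ on a compact convex box of liquidity distributions, with Lyapunov/Aumann convexity of the integral of the best-response correspondence supplying both the convex values and the purification. What your approach buys is transparency about exactly where \Cref{asm:theta} enters (atomlessness is used only to convexify $\Phi(\bell)$, which is what permits a pure rather than mixed equilibrium), and it exploits the fact that interaction is only through the finite-dimensional aggregate $\bell_{\text{MFG}}(\bm j)$, so the fixed point lives in $\R^d$ rather than in a space of measures; what the paper's citation buys is brevity and no need to verify measurable selection or closed-graph details. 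You also flag a real issue that the paper's proof asserts away: the continuity of $V_{\text{MFG}}$ in $\bell$ is problematic where some $\bell^0_i = 0$, since the fee share $u(\cdot)/\bell^0_i$ in $r_{\text{MFG}}$ is then undefined, and the paper's claim that ``$r_{\text{MFG}}$ is continuous in $\bell$'' needs exactly the kind of lower-bounding or domain restriction you describe. Do note that your proposed fix requires checking that $\Phi$ is a self-map of the truncated box $\prod_i[\bell^{\mathrm{base}}_i, M_i]$, i.e., that aggregated best responses never drop a tick's liquidity below the baseline; this is not automatic and is the one loose end in an otherwise complete alternative argument.
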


\begin{proof}
For a game with $\theta$-dependent utility functions that are continuous in action and distribution, Theorem 2 of \cite{mascolell84} states that a symmetric Cournot-Nash equilibrium for a fixed distribution of types $\Theta$ exists whenever the action space $J$ is finite and the distribution of types $\Theta$ is atomless. 
Informally, a \textit{symmetric} equilibrium is one in which all players of the same type play the same strategy. 
Therefore, if our model satisfies the requirements of the above theorem, a pure strategy MFE exists for our game.
We have already established that $J$ is a finite action space, and it is also a compact metric space in $\R^2$. 
Because $J$ is finite, it follows that the value function $V_{\text{MFG}}((j^1, j^2), \theta, \bell)$ is continuous in $(j^1, j^2)$.
In addition, $V_{\text{MFG}}((j^1, j^2), \theta, \bell)$ is continuous in $\bell$. 
It is a continuous function of $\pi_{\text{MFG}}$, which is a continuous function of $r_{\text{MFG}}$, which is also continuous in $\bell$. 
Finally,  \Cref{asm:theta} establishes the necessary requirements on $\Theta$.
Our model therefore satisfies all the requirements of the above theorem, and so a pure strategy MFE must exist in our game.
\end{proof}

\subsubsection*{Numerical Results for the MFG}
In this section, our goal is to determine whether an observed Uniswap distribution can be represented as a MFE. We will work with the distribution shown in \Cref{fig:mfg-a}, and we will call it $\bell^0$. We note that the calibration of $\Theta$ is much simpler in the current MFG setting because we do not need to work with an $(N-1)$-dimensional random vector representing the actions of the other players. Instead, the representative player reacts to the target liquidity distribution $\bell^0$ directly. A discussion of the increased efficiency of the MFG calibration, as well as the specifics of our calibration method, are provided in \Cref{sec:method_mfg}. The joint distribution of the discrete variables $k$ and $\delta$ is shown in \Cref{tab:mfgweights}, and the densities of $\lambda$ for each possible configuration of $(k, \delta)$ are shown in \Cref{fig:mfgweights}. The numerical calculations took 5.295 minutes, so the calibration for the MFG was over 20 times faster than for the 10-player $N$-player game. 

Once $\Theta$ is calibrated, we can approximate the MFE via iterated best response calculations. The resulting liquidity distribution $\bell^*$ and the corresponding token reserves are shown in \Cref{fig:mfg-b} and \Cref{fig:mfg-d}. In our experiment, the Wasserstein-1 distance between $\bell^0$ and $\bell^*$ was 4.140 with an R-score of 0.9999. This is a 77\% decrease in the Wasserstein-1 distance from $\bell^0$ and a 9.879\% increase in the R-score when compared to the $N$-player game. In addition, this shows that when players have beliefs similar to those we have set, then \Cref{fig:mfg-a} is very close to being a MFE.

\begin{figure}
    \centering
    \includegraphics[width = 0.8\textwidth]{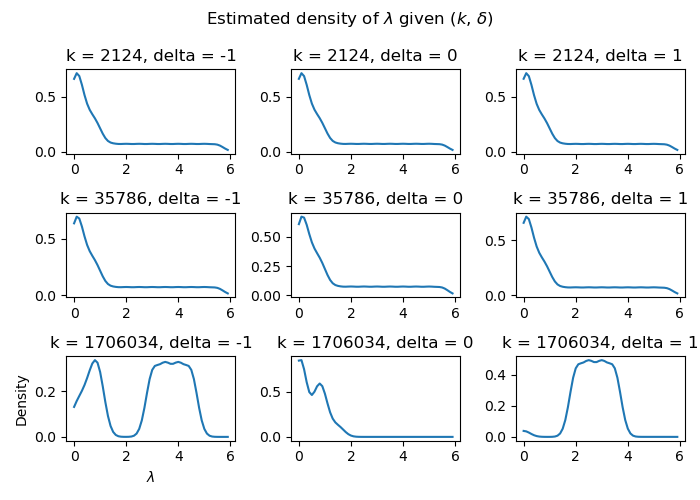}
    \caption{The estimated conditional distributions of $\lambda(\theta)$ based on $(k(\theta), \delta(\theta)) \in \{2124, 35786, 1706034\} \times \{-1, 0, 1\}$ for the MFG scenario. }
    \label{fig:mfgweights}
\end{figure}

\begin{table}
    \centering
    \begin{tabular}{|c|c|c|c|c|c|}
    \hline
        $(k, \delta)$ & $\delta = -1$ & $\delta = 0$ & $\delta = 1$ & Total & \cellcolor{gray!25}Observed\\
         \hline
        $k = 2124$ & 0.161 & 0.157 & 0.158 & 0.477 & \cellcolor{gray!25}0.4725\\ 
        \hline
        $k = 35786$ & 0.150 & 0.148 & 0.153 & 0.451 & \cellcolor{gray!25}0.4894\\
        \hline
        $k = 1706034$ & 0.018 & 0.042 & 0.012 & 0.072 & \cellcolor{gray!25}0.0381 \\
        \hline
        Total & 0.329 & 0.348 & 0.323 & 1.0 & \cellcolor{gray!25}1.0\\
        \hline
    \end{tabular}
    \caption{Joint probability distribution over the discrete variables $k$ and $\delta$, with the calibrated distribution of capital compared to the observed distribution of capital. The observed capital distribution is discussed further in \Cref{sec:appendix_calib_k}.}
    \label{tab:mfgweights}
\end{table}

Based on our calibration, the most important factor in determining the MFE distribution was each LP's risk-aversion, $\lambda$. More risk-averse players preferred wider liquidity positions to guard against sudden fluctuations in pool and/or market exchange rates, as expected. In addition, each player's trend belief $\delta$ contributed to the skew of its position. Players with more capital $k$ were also more likely to choose wider ticks for a fixed $\lambda$. From the final row of \Cref{tab:mfgweights}, we can see that our calibration assigned relatively equal weight to each of the three belief trends, which is reasonable and supports the hypothesis that LPs do not have concrete knowledge regarding future ETH prices. In addition, the distribution of capital among the three sizes of LPs which arises from the calibration is surprisingly consistent with the distribution shown in the ``Observed" column of the table, which is based off of the the data analysis done in \Cref{sec:appendix_calib_k}. Our analysis found that approximately 47.25\% of players are ``small", 48.94\% are ``medium", and 3.81\% are ``large". This indicates that our calibration is able to identify important characteristics of the model which are observable and therefore verifiable. This lends support to the accuracy of calibrated parameters that are more difficult to observe in reality. The accuracy of the resultant simulated liquidity distribution also supports the descriptive power of our model. 

\begin{figure}
    \centering
     \begin{subfigure}{0.45\textwidth}
         \centering
         \includegraphics[width=0.9\textwidth]{Figures/01-26-24-obsliq.png}
         \caption{The observed Uniswap ETH/USDC pool liquidity distribution on November 29, 2023. We call this $\bell^0$. \label{fig:mfg-a}}
     \end{subfigure}
     \begin{subfigure}{0.45\textwidth}
         \centering
         \includegraphics[width=0.9\textwidth]{Figures/01-26-24-obstokens.png}
         \caption{The observed Uniswap ETH/USDC tokens on November 29, 2023 corresponding to $\bell^0$. \label{fig:mfg-c}}
     \end{subfigure}
     \begin{subfigure}{0.45\textwidth}
         \centering
         \includegraphics[width=0.9\textwidth]{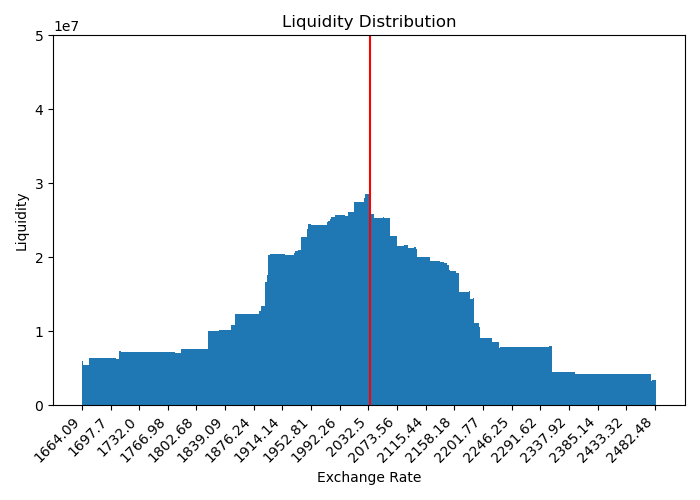}
         \caption{The MFG equilibrium $\bell^*$ generated by a MFG starting from $\bell^0$. \label{fig:mfg-b}}
     \end{subfigure}
     \begin{subfigure}{0.45\textwidth}
         \centering
         \includegraphics[width=0.9\textwidth]{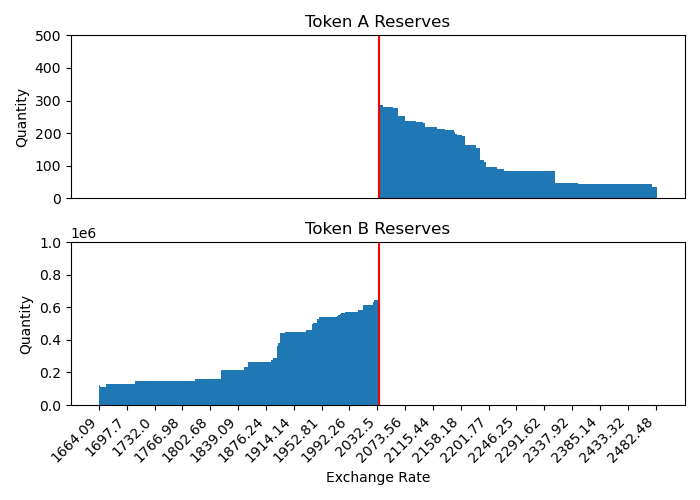}
         \caption{The token reserves corresponding to the calculated equilibrium $\bell^*$. \label{fig:mfg-d}}
     \end{subfigure}
     \caption{Comparison of the observed Uniswap distribution (left) versus the calibrated liquidity distribution (right) for November 29, 2023.}
     \label{fig:l-mfg}
\end{figure}

\section{Predictive Performance}
\label{sec:mfgpredict}
We have calibrated LP types to suit a specific observed liquidity distribution and demonstrated that it is possible to imitate observed liquidity distributions using heterogeneous LPs in a mean field game setting. However, it is difficult to draw sound inferences about LP characteristics unless we can show that the calibrated LP types and weights are able to perform well when used in predictive modeling. Therefore, we will now demonstrate the performance of these LP types when predicting future pool exchange rate and liquidity distribution evolution. 

We collected data on the liquidity distribution of the same pool at the start and end of a 24 hour period starting at 4 pm on November 29, 2023 and continuing until 4 pm on November 30, 2023, and obtained market USDC/ETH exchange rates with 1-minute granularity. We then simulated a pool with a mean field of many small LPs, subject to the arrival of many random swappers during the 24 hour period. We also introduced an additional detail - LPs still consider a horizon of $T=7200$ blocks (24 hours) when making their optimization decision, but are able to re-adjust their liquidity every 900 blocks (3 hours) if they desire. Therefore, there were $H=8$ ``mini" periods and we defined block intervals $t_0 = 0, t_1 = 900, t_2 = 1800, \ldots t_H = 7200$. 
At period $h$, the LPs play a game according to $\bell^{h-1}$, $p^*_{t_h}$, and $m^*_{t_h}$ which will result in a new distribution $\bell^h$. For the next 3 hours, $\bell^h$ will be fixed. Swappers will arrive with transactions and LPs will collect fees. The market exchange rate evolves according to the data. 

The evolution of the pool and market exchange rates is shown in \Cref{fig:er_comp}. We found that the pool exchange rate appears to follow the market rate in a manner similar to the true relationship, and the MAPE is fairly small at 0.1357\%. This indicates that our model is capturing the essence of the interactions between swappers and LPs, and between the pool exchange rate and the market exchange rate. This also supports our hypothesis that pool and market consistency does not need to come only from perfectly-timed and perfectly-sized arbitrageurs, but rather can be obtained via semi-strategic swappers looking for a ``good deal" based on the difference between the pool and market exchange rates. 
\begin{figure}
    \centering
    \begin{subfigure}{0.45\textwidth}
         \centering
    \includegraphics[width=\textwidth]{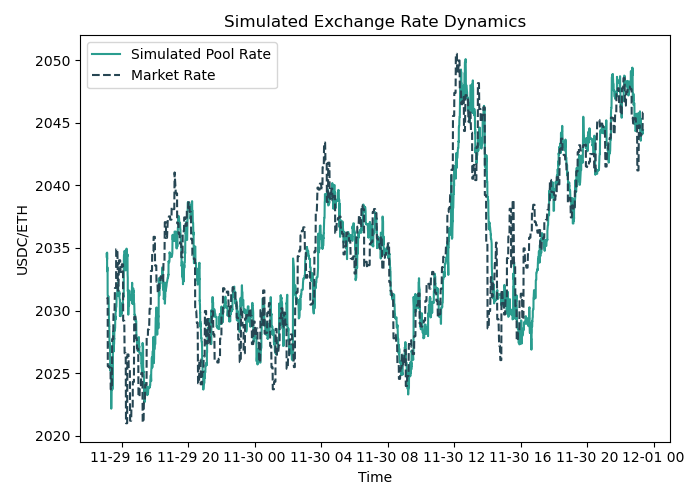}
     \end{subfigure}
     \begin{subfigure}{0.45\textwidth}
         \centering
    \includegraphics[width=\textwidth]{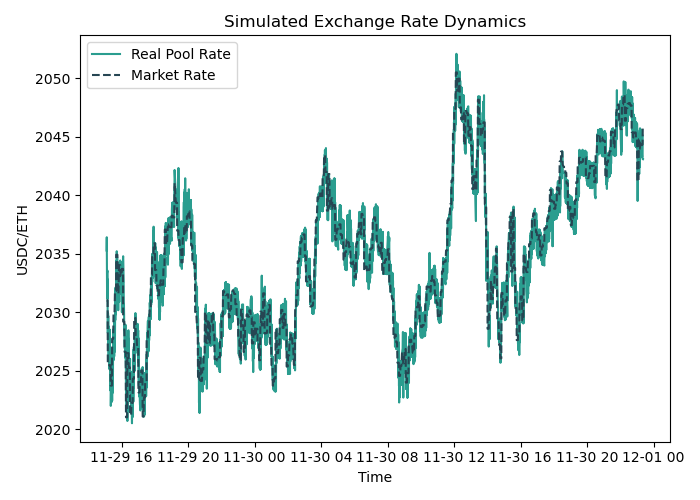}
     \end{subfigure}
    \caption{Visualization of simulated (left) and real (right) pool exchange rates in comparison to the market rate.}
    \label{fig:er_comp}
\end{figure}
\begin{remark}
    As a potential avenue of future work, we suggest that if pool parameters, such as the swap distribution parameters and the players' belief parameters, were re-calibrated more frequently, investors might get an even more accurate picture of the future. However, in our paper we wish to demonstrate that the long-term predictive power of our model is impressive even without frequent re-calibration. 
\end{remark}
In \Cref{fig:liq_future}, we see that the final simulated liquidity distribution has moved in accordance with the new pool exchange rate, and also bears a resemblance to the true liquidity distribution on November 30. The Wasserstein-1 distance between the simulated and real November 30 distributions is 4.4313. This shows that LPs using the MFG optimal strategy will adjust their liquidity over time as the pool exchange rate evolves in a way which is consistent with the observed real-world strategies of LPs. 
Given the consistency of the simulated exchange rate and liquidity distribution evolutions, as well as the previously discussed consistency of the joint distribution of $k$ and $\delta$, we feel confident in inferring that our approximation of $\Theta$ is accurate.
\begin{figure}
    \centering
    \begin{subfigure}{0.45\textwidth}
        \centering
        \includegraphics[width=\textwidth]{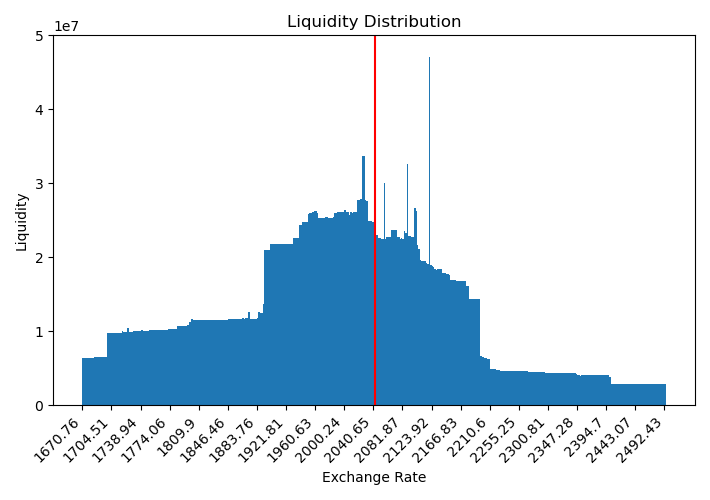}
     \end{subfigure}
     \begin{subfigure}{0.45\textwidth}
        \centering
        \includegraphics[width=\textwidth]{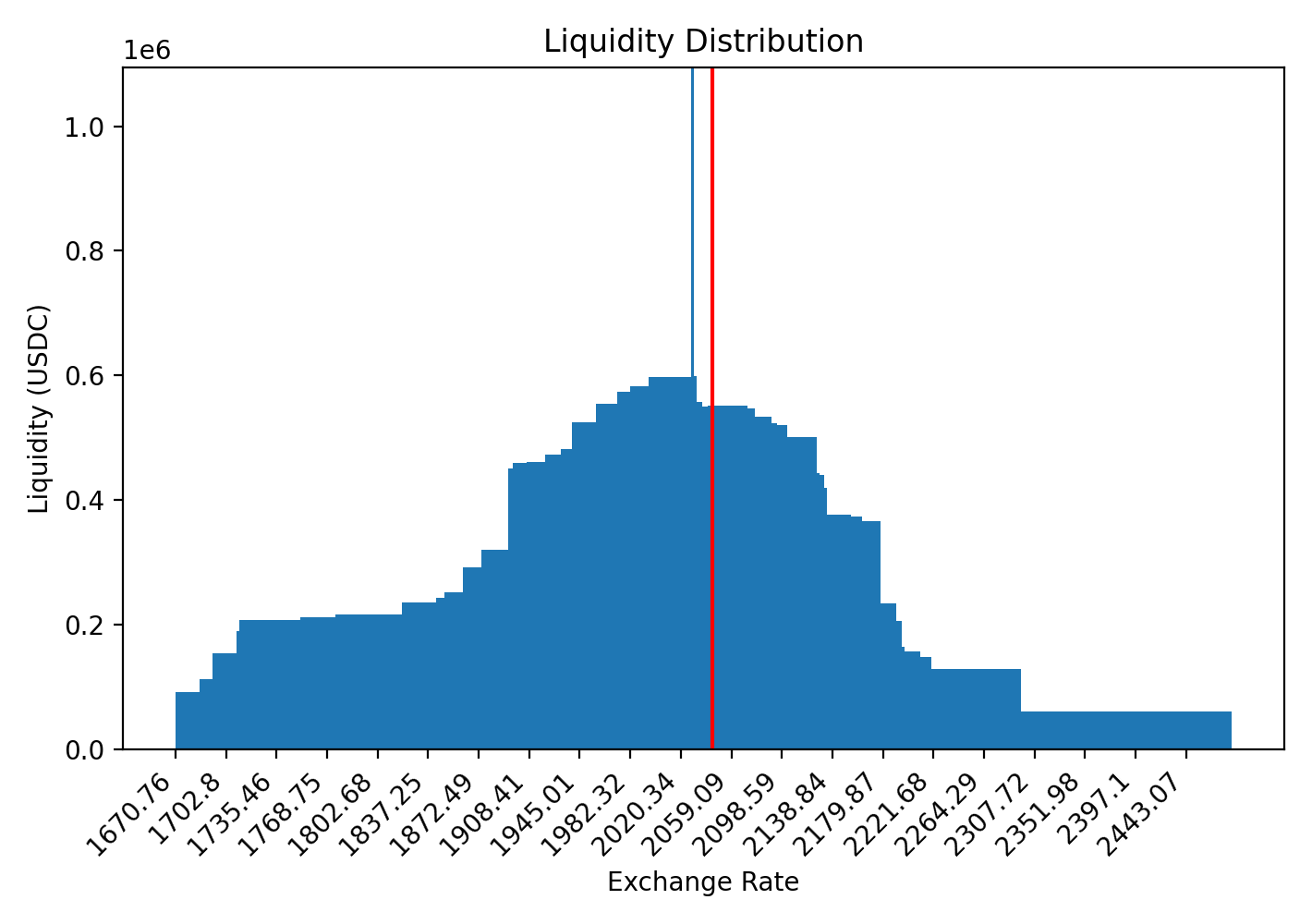}
    \end{subfigure}
    \caption{The final simulated liquidity distribution (right) compared to the observed liquidity distribution on November 30, 2023 (left).}
    \label{fig:liq_future}
\end{figure}

While these results are promising, we note that there are still certain discrepancies between the real and simulated exchange rates and liquidity distributions. One such issue is that the error between the observed and simulated November 30 liquidity distribution (4.4313) is almost double the error from between the observed and calibrated distributions on November 29 (2.3902). In addition, we see that the observed pool exchange rates follow the market exchange rate almost perfectly (MAPE of 0.063\%) while there is more than twice as much error when looking at the simulated pool exchange rate (MAPE of 0.1357\%). One possible explanation for these discrepancies is a missing element. One possibility is the absence of bots. Therefore, in the following section we introduce a new formulation for the pool in which LPs and bots interact, and we subsequently examine the effects of bots on the accuracy of model simulations.

\section{A Stackelberg Game with Bots}
\label{sec:bots}
We have thus far considered only a game between LPs. In reality, LPs are also affected by the behavior of other participants in the pool. We have already accounted for swappers, and assumed that their behavior is stochastic and correlated only with the pool's arbitrage level, but MEV bots are much more strategic. 
In addition, bot attacks are relevant to our study of DEXs because bot transactions affect exchange rate dynamics and LP profits, just as swapper transactions do. More importantly, these bot transactions also affect the attractiveness of a pool for ``regular" swappers and LPs. Logically, if bots are making money, someone must be losing money. In the case of JIT liquidity attacks, the LPs are missing out on fees that they would have otherwise earned from large transactions. Recall that JIT liquidity attacks are attacks in which a bot places a large liquidity deposit in the active tick right before a large transaction and removes it immediately after, thereby receiving a majority of the fees from the swap. An example of an observed JIT liquidity attack is displayed in \Cref{fig:attack_viz}. Notice that the Bot's liquidity addition and removal are symmetric and are an order of magnitude larger than the swap.

\begin{figure}
    \centering
    \begin{subfigure}{0.45\textwidth}
        \centering
        \includegraphics[width=\textwidth]{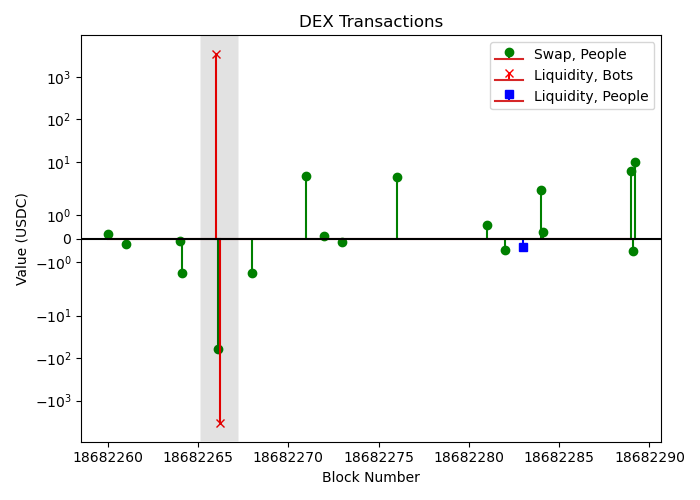}
     \end{subfigure}
     \begin{subfigure}{0.45\textwidth}
        \centering
        \includegraphics[width=\textwidth]{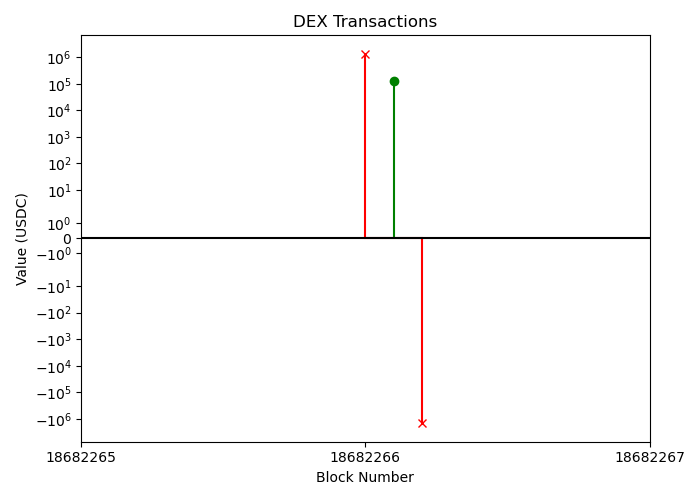}
    \end{subfigure}
    \caption{A real JIT liquidity attack from the Uniswap v3 ETH/USDC pool with 0.05\% fee is shown in the shaded region (left), with the zoomed-in version (right). }
    \label{fig:attack_viz}
\end{figure}

\subsection{Frequency of Liquidity-Based Attacks}
To identify sandwich attacks, we use a heuristic very similar to the one used in \cite{Xiongetal}. Our criteria for identifying a sandwich attack are:
\begin{enumerate}
    \item Two transactions of opposite directions which are separated by a single swap transaction. The two ``sandwich" transactions can either be two swaps in opposite directions (for a swap-based sandwich attack) or a liquidity addition then a liquidity removal (for a JIT liquidity attack).
    \item The sandwich transactions are done by the same user, and the middle transaction is done by a different user.
    \item All three transactions occur in the same block.
    \item The sandwich transactions must be symmetric in either token A or token B, with an allowed error of 5\%. This allows for possible ``failed" attacks which lose money, while eliminating scenarios where a single user places two unrelated transactions.
\end{enumerate}
Criteria 1-3 are the same as those in \cite{Xiongetal}, while we added criterion 4. This was necessary because we found that otherwise, transactions made by swap routers which help users place cost-effective swaps (for example, the 1inch protocol)  would also be counted as attacks. In reality, we found that transactions placed by routers were almost always placed on behalf of two unrelated users. For an analysis of DeFi protocols like swap routers, the reader can refer to \cite{kitzler}, among others. The analysis can be found in our Github at XXX, and we show a screenshot of an example swap-based sandwich attack in \Cref{fig:swapattack_ex} and an example JIT liquidity attack in \Cref{fig:liqattack_ex}. We note that in these figures, negative token amounts mean that the Bot is paying tokens and positive tokens amounts mean that the Bot is receiving tokens. 
\begin{figure}
    \centering
    \includegraphics[width = \textwidth]{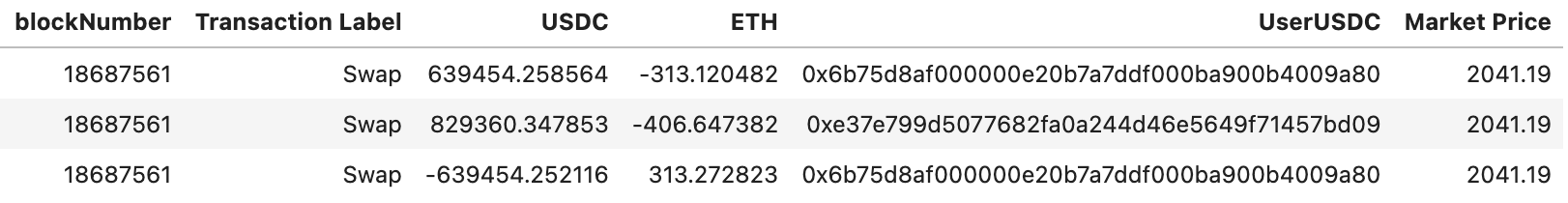}
    \caption{Example of a swap-based sandwich attack. Observe that the first and third transactions are nearly symmetric and are performed by the same entity. Since they are both swaps, this is a swap-based attack. Based on the market price, the Bot earned \$310.59.}
    \label{fig:swapattack_ex}
\end{figure}

\begin{figure}
    \centering
    \includegraphics[width = \textwidth]{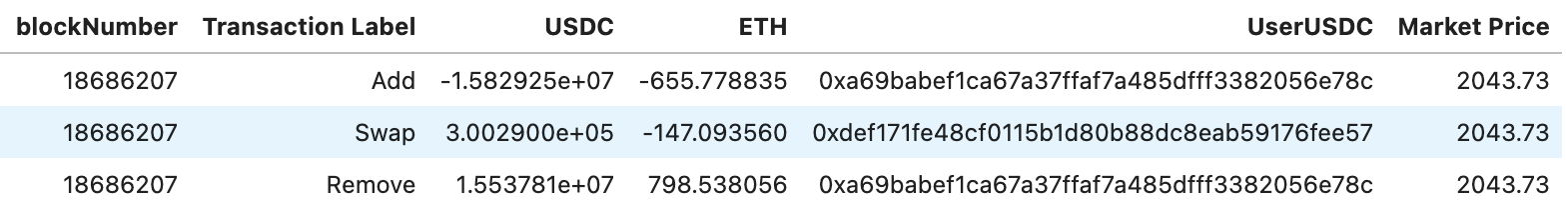}
    \caption{Example of a JIT liquidity attack. Observe that the first and third transactions are nearly symmetric and are performed by the same entity. Since one is a liquidity addition and one is a liquidity removal, this is a liquidity-based attack. Based on the market price, the Bot earned \$319.83.}
    \label{fig:liqattack_ex}
\end{figure}

Our analysis found that, over time, the portion of sandwich attacks that were JIT liquidity attacks increased. When looking at data from November 1-30, 2022, we found that only 48.84\% of attacks were liquidity-based. However, when looking at data from May 25-June 25, 2023, we found that 90.75\% of attacks were liquidity-based attacks and when looking at data from September 1-30, 2023, we found that 92.67\% of attacks were liquidity-based. 
Therefore, it seems likely that bots are turning increasingly towards liquidity-based attacks, perhaps due to new technologies like the strategic slippage constraints discussed in \cite{HeimbachSandwich}. Based on the analysis of \cite{Xiongetal}, these attacks directly affect LPs. Therefore, we wish to model a game between LPs and a Bot performing JIT liquidity attacks.

\subsection{Stackelberg Game Setup}
With this in mind, we wish to investigate whether LPs anticipate bot attacks. We do this by introducing a Stackelberg game in which many LPs consider the impacts of JIT liquidity attacks when choosing their optimal liquidity positions and a Bot reacts to the resulting liquidity distribution. 
The mean field of LPs are the ``leaders" in this game. This is because the pool's liquidity distribution is formed before the Bot arrives to attack and the Bot cannot alter the overall distribution, nor does it want to. The LPs determine the form of the CPMM which determines how swaps execute in the pool. The Bot, on the other hand, is the ``follower" who chooses whether to perform a sandwich attack based on the actions of the mean field of LPs. In this way, Bots react to the actions of the mean field of LPs and the mean field of LPs anticipates the actions of the Bot when forming the liquidity distribution.

We further differentiate the Bot and the LPs by the information they have at the start of the game. The LPs have imperfect knowledge of the size of incoming swaps $\{\xi_t\}_t$, but know the distribution of swaps as discussed in \Cref{sec:swap_calibration}. The Bot is the follower so it observes the equilibrium liquidity distribution $\bell^*$, and it also has more knowledge regarding the size of each incoming swap $\xi_t$ (perhaps via the blockchain's mempool or from Flashbots). More precisely, at block $t$, the Bot observes the transaction waiting in the mempool based on the value of a random variable $Z_t \sim Bernoulli(\zeta)$. This randomness encapsulates two concepts: (1) a transaction might not remain in the mempool long enough to be observed and attacked by a bot, and (2) bots might be reluctant to attack every single eligible transaction for fear that swappers would respond by ceasing to utilize the DEX for transactions or LPs would withdraw their liquidity. This second point could be modeled more rigorously in a variety of ways (for example, a secondary game between bots and swappers or a behavior-dependent arrival rate), but for now we prioritize the readability of the model and rely on $Z_t$ to approximate these various effects. If $Z_t = 1$, the Bot can see the incoming transaction and can decide whether to attack it. We simplify the Bot's optimization problem as a discrete decision problem of whether to attack $(x=1)$ or ignore $(x=0)$ the incoming swap $\xi_t$ at block $t$. 

\begin{remark}
    We assume for now that there is a single Bot which can potentially make multiple bot attacks, because the identity of the Bot is irrelevant from the LPs' point of view. However, future works could consider an additional game between bots. 
\end{remark}

The LPs are aware of the presence of the Bot, and so choose their liquidity position while anticipating Bot activity. In this way, we form a Stackelberg game between the LPs and the Bot. The LPs still optimize a mean-variance utility function based on their individual risk-aversion levels, while the Bot moves so quickly that it faces negligible risk and so only wishes to maximize its profit (as determined by the difference in the value of its final versus initial holdings). We assume that the Bot will convert its holdings back to token B immediately after its (nearly instantaneous) attack, so we hold the current market exchange rate, $m^*_t$, fixed during the Bot's optimization. This allows us to analytically compute the Bot's strategy.

To summarize, first, the LPs play a game to determine a mean-field equilibrium $\bell^*$ given the current pool exchange rate $p^*$ contained in tick $i^*$. While playing this game, they take into account the Bot's expected behavior. The resultant $\bell^*$ fixes the overall liquidity distribution in the pool for the next $T$ blocks, while swappers and the Bot make transactions. In this interim period, for each block $t$ the following steps occur:
    \begin{enumerate}
        \item If $Z_t = 1$, the Bot will engage with an incoming swap and sees the value of $\xi_t$. Given $\bell^*$, the Bot decides whether to attempt a JIT liquidity attack based on the incoming swap $\xi_t$ as well as $p^*_t$ and $m^*_t$. If $Z_t = 0$, the Bot does not engage.
        \item If the Bot decides to attack (meaning it chooses $x=1$), it adds $L$ liquidity to tick $i^*$ by contributing token A and token B in the appropriate ratio. This forms a new liquidity distribution $\bell^1 = \bell_{\text{new}}(L, i^*, i^*+1) + \bell^*$ where the only affected tick is $i^*$, the active tick.
        \item The exogenous swapper arrives and executes its swap of $\xi_t$ token B exchanged for $\psi(\xi_t, \bell^1, p^*-t)$ token A. The amount of token B in the pool changes by $\xi$ and the amount of token A changes by $-\psi(\xi, \bell^1, p^*_t)$, where we introduced $\psi$ in \Cref{sec:dex} and its detailed definition is provided in \Cref{eq:psi_i}.
        \item The Bot withdraws its share of the updated liquidity in tick $i^*$. In doing so, the Bot receives the proportion $L/(\bell^*_{i^*} + L)$ of the new amounts of token A and token B in tick $i^*$, as well as the same proportion of the fees $\gamma |\xi_t|$. Non-bot LPs also receive their share (if any) of the fees from the swap.
    \end{enumerate}

\subsection{The Bot's Optimization Problem}
In this situation, the LPs are able to anticipate the strategy of the Bot with perfect accuracy, though they still lack concrete knowledge about the swap sequence. Therefore, we first solve the Bot's optimization problem before moving on to the LP's optimization problem. As discussed, the Bot has a large, fixed amount of liquidity $L$. We make the assumption that the Bot has a fixed amount of liquidity for ease of notation, but we could also assume that the Bot has a fixed capital endowment similar to the LPs. The Bot must choose whether or not to attack a transaction by choosing the value of an indicator variable $x \in \{0,1\}$. Thus, the Bot's value function for block $t$ is given by
\begin{equation*}
    V_B(x; \xi_t, L, \bell^*) = \frac{xL}{\ell^1_{i^*}}(\xi_t -m^*_t \psi(\xi_t, \bell^1, p^*_t) + \gamma |\xi_t|) -xG,
\end{equation*}
where $\bell^1 = \bell_{\text{new}}(L, i^*, i^*+1) + \bell^*$.
Because this optimization problem does not depend directly on the block number $t$, we will drop the subscripts on $\xi_t$, $p^*_t$, and $m^*_t$ for the rest of the section.
Notice that the Bot's decision affects its profit in two ways: firstly through the portion of fees earned, and secondly through the net change in the value (in USDC) of the Bot's token holdings. This depends on the form of the exchange function $\psi$ and therefore on $\bell^*$. 

\begin{remark}
Note that we choose to have Bots to consider gas fees ($G$, which we set as fixed for now) while LPs do not. Transactions in a blockchain are not executed ``for free". Instead, users must pay a gas fee to miners or validators to incentivize the inclusion of their transaction in an upcoming block. These fees range from less than a dollar to hundreds of dollars depending on the current total transaction volume and the urgency of a specific transaction, but the average fee for a Uniswap currency swap or liquidity adjustment hovers between \$10 -\$30, depending on the urgency of the transaction and its size. This is less significant for LPs who leave liquidity in the pool for hours or days, and more significant for the Bot which places multiple transactions in rapid succession. 
\end{remark}

We can solve this simple optimization problem to find thresholds $\bar \xi^+(\bell^*)$ and $\bar \xi^-(\bell^*)$ which will determine whether the Bot will profit from an attack. These values will determine a strategy $\bar \xi_B(\bell^*) = (\bar \xi^-(\bell^*), \bar \xi^+(\bell^*))$ based on $\bell^*$, where the Bot's choice of $x$ is given by
\begin{equation*}
    x(\xi; \bar \xi_B(\bell^*)) = \begin{cases} 1, & \xi \notin [\bar \xi^-(\bell^*), \bar \xi^+(\bell^*)], \\
    0, & \xi \in [\bar \xi^-(\bell^*), \bar \xi^+(\bell^*)].
    \end{cases}
\end{equation*}
The values of $\bar \xi^-(\bell^*)$ and $\bar \xi^+(\bell^*)$ can be determined by solving a quadratic equation and the exact formulas are provided in \Cref{sec:appendix_bot}. The Bot may choose a difference $x_t$ for each swap $\xi_t$ at block $t$, but its strategy based on $\bar \xi^+(\bell^*)$ and $\bar \xi^-(\bell^*)$ will not change. Furthermore, the LPs can also infer this strategy from their knowledge of $V_B$.

\subsection{The LP's Optimization Problem}
As previously stated, the LPs are perfectly rational and so can intuit the Bot's optimal strategy from the structure of the pool and the Bot's value function. Therefore, the LPs again engage in a MFG to determine the equilibrium liquidity distribution $\bell^*$, but this time they anticipate the actions of the Bot, given by $\bar \xi_B(\bell^*)$.
Therefore, if the initial liquidity distribution is $\bell^*$, we can write each LP's optimization problem as
\begin{equation*}
     \max_{j^1, j^2 \in [d]} V_{\text{LP}}(j^1, j^2; \theta, \bell^*, \bar \xi_B(\bell^*)),
\end{equation*}
where we define the value function as\begin{align*}
    V_{\text{LP}}(j^1, j^2; \theta, \bell^*, \bar \xi_B(\bell^*) ):= &\E_{\bm \xi, \delta(\theta)} \left[\pi_{\text{LP}}(j^1, j^2, \bm\xi, \bar \xi_B(\bell^*), \theta, \bell^*)\right] \\
    &- \lambda(\theta) \Var_{\bm\xi, \delta(\theta)}(\pi_{\text{LP}}(j^1, j^2, \bm\xi, \bar \xi_B(\bell^*), \theta, \bell^*)).
\end{align*}
In this scenario, the profit function $\pi_{\text{LP}}$ is defined as
\begin{equation*}
\pi_{\text{LP}}(j^1, j^2, \bm\xi,\bar \xi_B(\bell^*), \theta, \bell^*) := \sum_{t = 0}^T  R(\xi_t, Z_t x(\xi_t; \bar \xi_B(\bell^*)), j^1, j^2, \bell^*, p^*_t, \theta),
\end{equation*}
where $Z_t \sim Bernoulli(\zeta)$. 
Recall that $\bm \xi$ is the random sequence of swaps submitted to the pool by swappers, while $\bar \xi_B(\bell^*)$ is the Bot's strategy based on $\bell^*$. 
The effect of the Bot's choice $x$ on the LP's fees earned from a swap of size $\xi$ when the pool exchange rate is $p^*$ is expressed through the new LP reward function $R$, given by
\begin{equation*}
     R(\xi, j^1, j^2, \bell, p^*, \theta) := \begin{cases}
         \gamma |\xi|  \left(\frac{u(j^1, j^2, k(\theta), m^*_0)}{\ell_{i^*} + L} \right),  & x = 1\text{ and } j^1 \leq i^* < j^2,\\
          r(\xi, j^1, j^2, \bell, p^*, \theta), & \text{else,}
     \end{cases}
\end{equation*}
where $r$ is defined in \Cref{eq:reward}. Intuitively, this reward function means that if a Bot attacks transaction $\xi$ (meaning $x=1$ based on the Bot's strategy $\bar \xi_B(\bell^*)$), then tick $i^*$ will gain $L$ units of liquidity and $\xi$ will therefore only affect tick $i^*$. The LPs will still receive some proportion of the fees, but this proportion will be decreased by the addition of $L$. If $x=0$, then the LP reward will be calculated using the original reward function and the Bot will earn no reward. 

Note that since the Bot withdraws its liquidity after each attack, the liquidity distribution always reverts to the original $\bell^*$ between transactions. One could also allow $L$ to be random from the perspective of the LPs, but in this paper we assume that $L$ is known to all players. As in \Cref{sec:mfg}, the LP's optimization problem does not have an analytical solution, but we discuss numerical outcomes of the Stackelberg game in the following section.

\subsection{Numerical Results}
We now simulate a scenario where LPs anticipate the arrival of liquidity-based bot attacks. We use the same setup as \Cref{sec:mfg}, but assume that LPs know that transactions could be attacked by the Bot. Again, the optimization is difficult to solve analytically but can be solved numerically using Monte Carlo simulations. We calculate from Uniswap data that bots attack approximately 52.78\% of ``attackable" transactions, so we set $\zeta = 0.5278$.

We re-calibrate the LPs under these new conditions, using the same process as that described in \Cref{sec:mfg}, but now we include the Stackelberg game in the LP's optimization and Monte Carlo simulations. We obtain a new set of calibrated weights, shown in \Cref{fig:botweights}. We note that the densities are similar for LPs with ``small" and ``medium" capital endowments, but LPs with ``large" capital endowments are less risk-averse, as the calibration assigned more weight to smaller values of $\lambda(\theta)$ when $k(\theta) = 1706034$. The Wasserstein-1 distance between this liquidity distribution and the true November 29, 2023 liquidity distribution is 2.9433, and the R-score is 0.9999. 

\begin{figure}
    \centering
    \includegraphics[width = 0.5\textwidth]{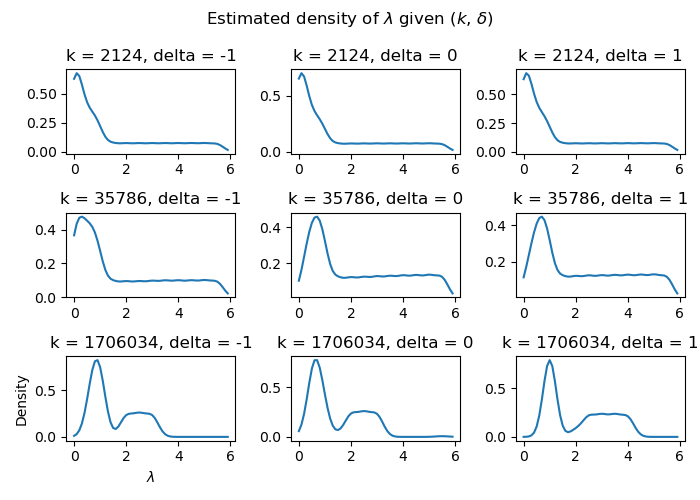}
    \caption{The estimated conditional distributions of $\lambda(\theta)$ based on $(k(\theta), \delta(\theta)) \in \{2124, 35786, 1706034\} \times \{-1, 0, 1\}$ for the Stackelberg scenario. }
    \label{fig:botweights}
\end{figure}

\begin{figure}
    \centering
     \begin{subfigure}{0.45\textwidth}
         \centering
         \includegraphics[width=0.9\textwidth]{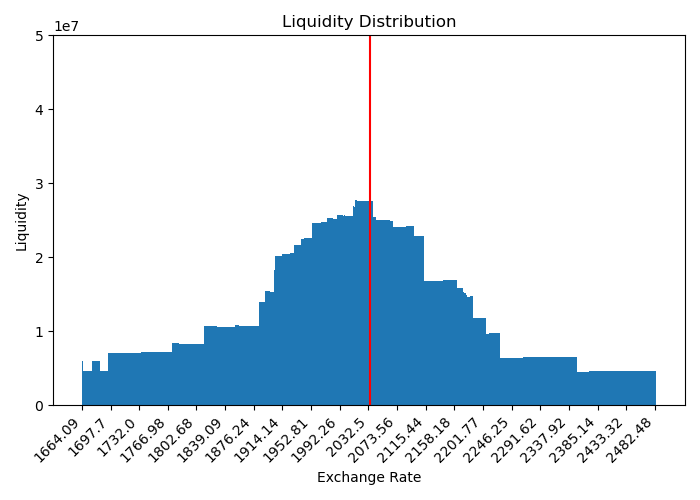}
         \caption{The MFG equilibrium $\bell^*$ generated by a MFG starting from $\bell^0$.}
     \end{subfigure}
     \begin{subfigure}{0.45\textwidth}
         \centering
         \includegraphics[width=0.9\textwidth]{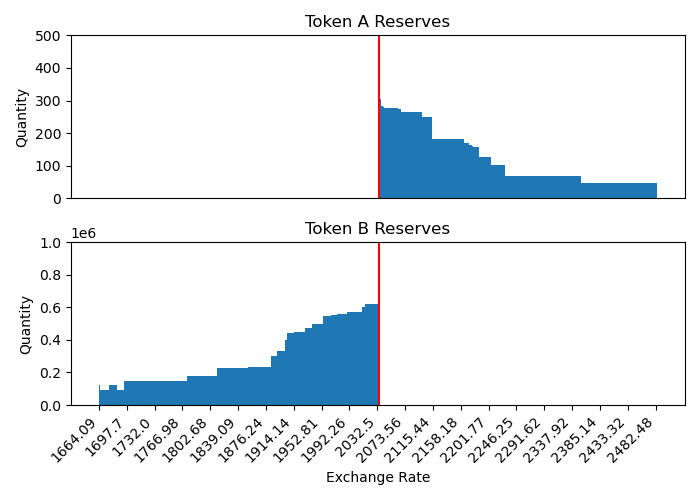}
         \caption{The token reserves corresponding to the calculated equilibrium $\bell^*$.}
     \end{subfigure}
     \caption{Comparison of the observed Uniswap distribution (left) versus the calibrated liquidity distribution (right) for November 29, 2023 when LPs anticipate the strategy of the Bot.}
     \label{fig:l-stack}
\end{figure}

We then repeat the simulation described in \Cref{sec:mfgpredict}, adding in the potential for bot attacks with probability $\zeta = 0.5278$. The resultant pool dynamics are shown in \Cref{fig:erBots}, and the liquidity distribution is shown in \Cref{fig:liq_futureBots}. The MAPE between the simulated pool exchange rate and the market exchange rate improved from the previous value of 0.1357\% to 0.0732\%, which is very close to the observed MAPE between the pool and market exchange rates, given by 0.063\%. In addition, the Wasserstein-1 distance between the simulated and real November 30 liquidity distributions decreased from 4.4313 to 3.384, an improvement of 21.54\%. Therefore, we conclude that Bots play a crucial role in accurate simulation of Uniswap dynamics. 

\begin{figure}
    \centering
    \begin{subfigure}{0.45\textwidth}
        \centering
        \includegraphics[width=\textwidth]{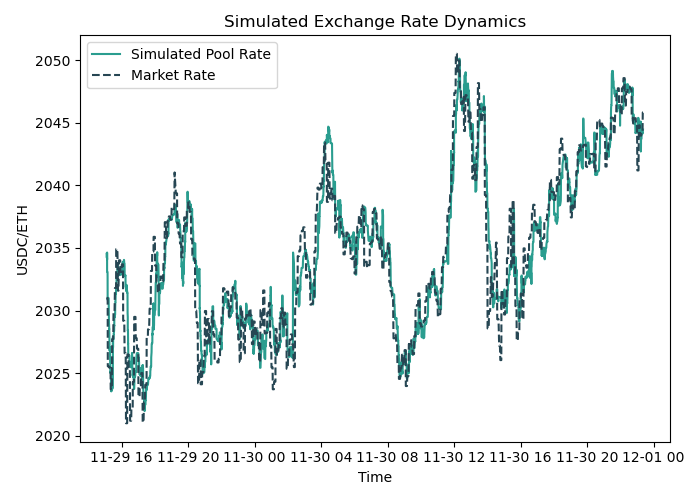}
     \end{subfigure}
     \begin{subfigure}{0.45\textwidth}
        \centering
        \includegraphics[width=\textwidth]{Figures/04-09-24-realER.png}
    \end{subfigure}
    \caption{Comparison of the simulated exchange rate when incorporating a Bot (left) to the observed exchange rate (right).}
    \label{fig:erBots}
\end{figure}

\begin{figure}
    \centering
    \begin{subfigure}{0.45\textwidth}
        \centering
        \includegraphics[width=\textwidth]{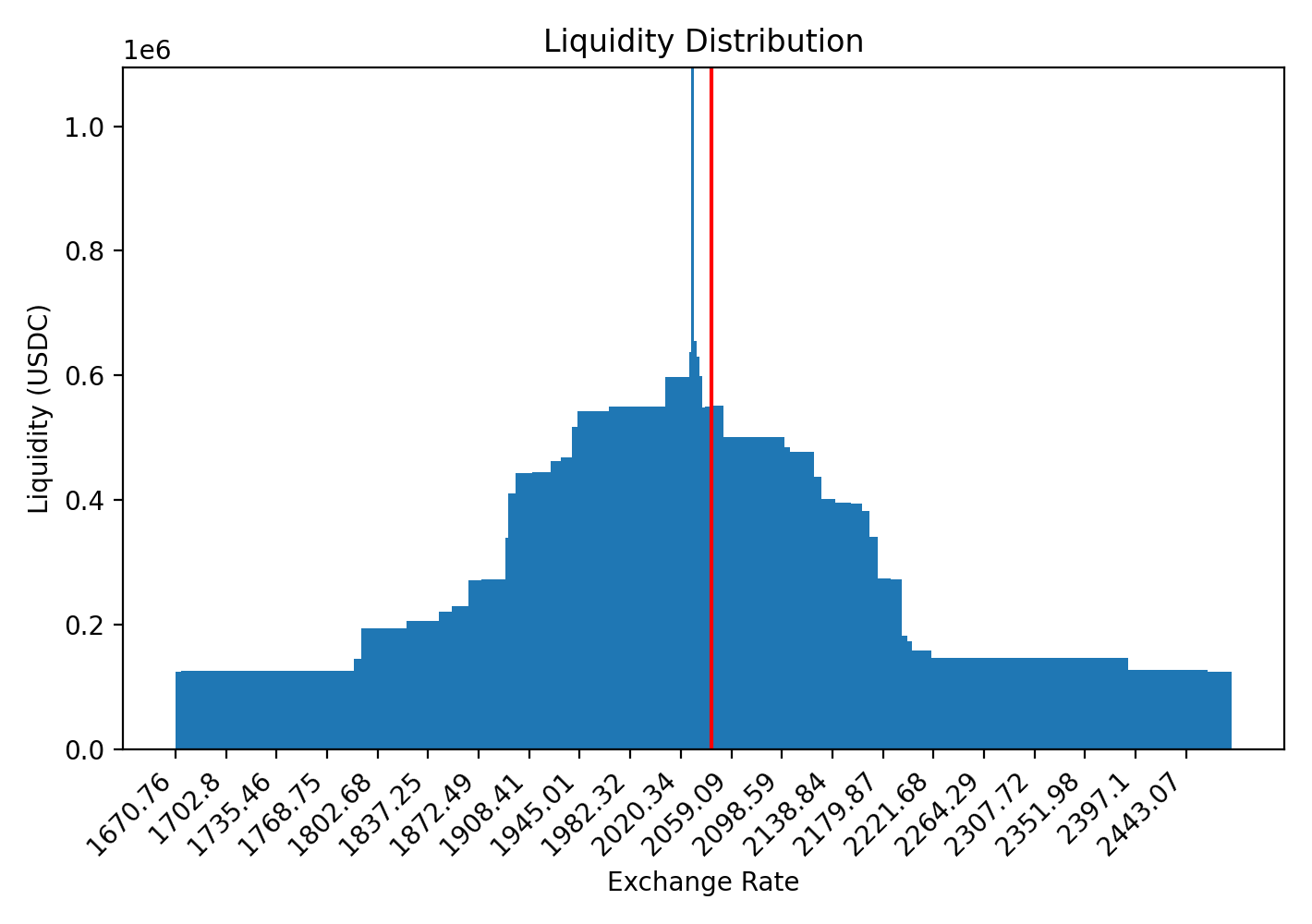}
     \end{subfigure}
     \begin{subfigure}{0.45\textwidth}
        \centering
        \includegraphics[width=\textwidth]{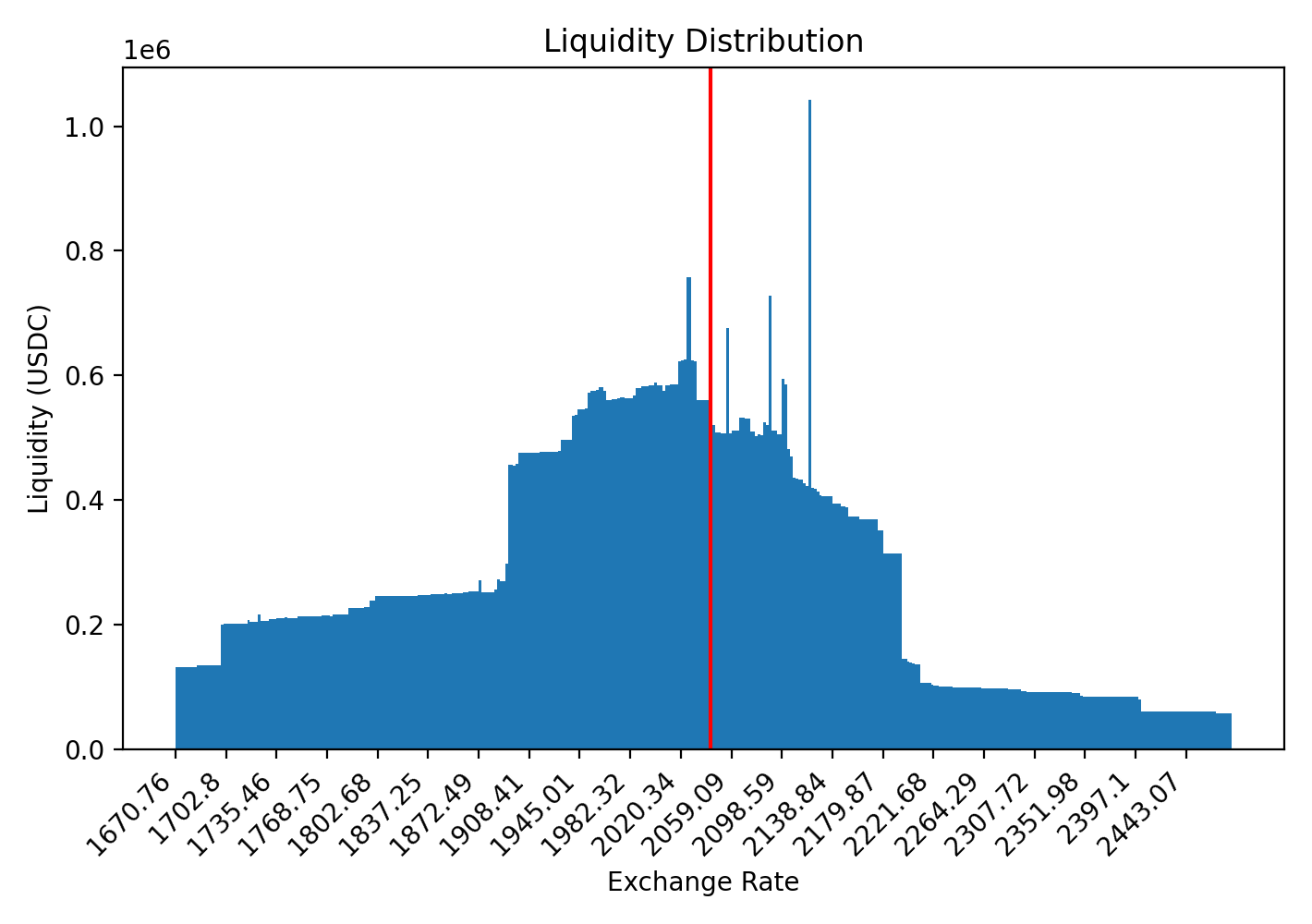}
    \end{subfigure}
    \caption{Comparison of the simulated final liquidity when incorporating a Bot (left) to the observed final liquidity distribution (right) on November 30, 2023.}
    \label{fig:liq_futureBots}
\end{figure}

Furthermore, in simulations, the LPs lose less profit to the Bot when they anticipate the Bot's presence and choose their liquidity positions accordingly. This suggests that LPs are able to mitigate some of the effects of Bot attacks by choosing their positions in a strategic manner. In our model we observed that LPs with the same risk aversion chose narrower positions in the Stackelberg scenario. Accordingly, we observed that the liquidity distribution formed by LPs in the Stackelberg game  is slightly more concentrated around the current active tick than the liquidity distribution formed by LPs in the simple mean field game. 
When there is more liquidity in the active tick supplied by LPs, then the Bot is not able to capture as large of a share of the liquidity in the tick, and therefore earns a smaller percentage of the fees generated by the swaps. 

The results of simulations performed using data from two different days and three different values of $L$ are shown in \Cref{tab:profits}. It can be seen that in all scenarios, Bot profit decreases when LPs anticipate the arrival of the Bot. In addition, we note that Bots experience diminishing returns for larger choices of $L$. In our simulations, we found that if the Bot makes a liquidity contribution equivalent to 5 million USDC in capital, on average it will earn 91\% of fees. Quadrupling the liquidity contribution to the equivalent of 20 million USDC only increases the amount of fees received to 97\%. Analysis of DEX data revealed that 50.5\% of liquidity-based attacks involved less than 10 million USDC of capital and 97\% used less than 20 million USDC of capital, which is consistent with our finding that 20 million USDC is sufficient to capture nearly all profit from fees when a liquidity-based attack is executed.
\begin{table}
    \footnotesize
    \centering
    \begin{tabular}{|c|c|c|c|c|}
        \hline
        Date & LPs Anticipate Bot? & $L$ & Bot Profit & Collective LP Profit \\
        \hline
        \multirow{6}{*}{ November 29-30, 2023} & No & 1 million USDC & 22212 & 128395 \\
         & Yes & 1 million USDC &  21763 & 128844 \\
        \cline{2-5}
         & No & 5 million USDC & 37050 & 113557 \\
         & Yes & 5 million USDC & 35135 &  115472\\
        \cline{2-5}
         & No & 20 million USDC & 39440 & 106012 \\
         & Yes & 20 million USDC & 35525 & 109927 \\
        \hline
        \hline
         \multirow{6}{*}{ December 12-13, 2023} & No & 1 million USDC & 75871 & 282924 \\
        & Yes  & 1 million USDC & 74935 & 283859 \\
        \cline{2-5}
         & No & 5 million USDC & 108388 & 250407 \\
         & Yes  & 5 million USDC & 99818 & 258977 \\
        \cline{2-5}
         & No & 20 million USDC & 110473 & 248322 \\
        & Yes  & 20 million USDC & 102932 & 255863 \\
        \hline
    \end{tabular}
    \caption{Difference in Bot vs. LP profit when LPs do or do not anticipate the presence of Bots.}
    \label{tab:profits}
\end{table}

\section{Conclusion}
We have created a flexible and descriptive model of a decentralized cryptocurrency exchange like Uniswap v3. The dynamics of the pool are simulated through the liquidity decisions of LPs and the swaps made by semi-strategic swappers who take into account the level of arbitrage in the pool. In this setting, we studied various types of competition between LPs in a DEX, starting with an $N$-player Bayesian game then moving to a MFG with a distribution of types over the population. Characteristics of each LP's type include its risk-aversion, capital endowment, and belief regarding market trends. We found that these characteristics were sufficient to produce a spectrum of liquidity positions of varying widths and skews that resemble reality. While the $N$-player game provided initial insight into the possible behaviors of LPs in a competitive setting, it was very expensive to calibrate and compute equilibria for the game. The MFG formulation provided an efficient approach for modeling competition among many LPs with different incentives, and we were able to produce accurate predictions of the future pool exchange rate and pool liquidity. In order to further increase the accuracy of our model, we then introduced JIT liquidity attacks performed by a MEV bot. We modeled a Stackelberg game between a mean field of LPs as the leader and the Bot as the follower. This led to a 21.54\% increase in the accuracy of the predicted liquidity distribution and a 46.1\% decrease in the overall MAPE between the market and pool exchange rates throughout our simulations. Furthermore, when risk-averse LPs took action to concentrate their liquidity positions in anticipation of Bot attacks, the LPs were able to increase their share of transaction fee profits.

Our work has important implications for both DEX users and DEX designers. For LPs themselves, our work provides a way to quantify the implications of MEV bot attacks and optimize their strategies. In another direction, our model could be utilized in order to price financial derivatives grounded in the DeFi space, as it provides a novel method to simulate exchange rate evolution and LP behavior in DEXs.

This work is a preliminary foray into the topic of algorithmic market makers, and we hope to expand on our research in the future. As discussed in the body of our paper, we have made certain simplifications regarding the frequency of bot attacks, the Bot optimization problem, and the frequency of LP movements. For example, we assume that players only adjust their liquidity at fixed time intervals and that all LPs adjust their liquidity simultaneously. In addition, the number of LPs is fixed throughout the games that we consider. In reality, LPs are able to enter, exit, and adjust their liquidity at any point in time. Future works could implement a more complex system where LPs are not playing a game in a perfectly synchronized manner or where the mean field experiences flows. This would elevate the static MFG to a stochastic differential mean field game, which would require more in-depth analysis. Another potential future work could build on the asymmetric information introduced in the Stackelberg game. For example, we could consider a scenario in which certain players have insider information regarding the future evolution of market exchange rates. As a final suggestion, we could take a different approach to modeling large liquidity providers by considering a pool with many small and one large player, akin to the setting studied in \cite{BayraktarMunk_Bets}.
There are many promising areas of future research in this field, as algorithmic market makers provide a unique new direction of study within the wider discussion of market microstructure.

\bibliographystyle{apalike}
\bibliography{bib_DEX.bib}

\appendix
\section{Detailed Description of CPMM}
\label{sec:appendix_dex}
In \Cref{sec:dex}, we provided the basic overview of the mechanism of a CPMM. For the curious reader, we now provide the exact mathematical methods used to calculate liquidity contribution and swaps.

\subsection{Pool Liquidity}
\label{sec:cpmm_liq}
We first explain the mathematical relationship between different pool characteristics when the pool is static and no transactions or liquidity adjustments are taking place. Recall that we use the term \textit{liquidity} to describe the overall ``capacity" of the pool for handling swaps, and represent it mathematically as the vector $\bell := \{\ell_i\}_{i=1}^d \in \R^d_+$. In this section, we will provide a rigorous mathematical definition of liquidity and its relationship to the number of tokens in the pool, as well as a description of how liquidity providers contribute to the pool.

We previously discussed that the token reserves in any tick $i$ must satisfy the CPMM function
\begin{equation}
\label{eq:cpmm}
    \Big(B_i + \ell_i\sqrt{p_i}\Big)\Big(A_i + \ell_i/\sqrt{p_{i+1}}\Big) = \ell_i^2,
\end{equation}
and the reserves in the active tick $i^* = i(p^*)$ must also satisfy the pool exchange rate condition given by
\begin{equation*}
    p^* = \dfrac{B_{i^*} + \ell_{i^*}\sqrt{p_{i^*}}}{A_{i^*} +
    \ell_{i^*}/\sqrt{p_{i^*+1}}}.
\end{equation*}
Furthermore, if we instead look at a tick $i \neq i^*$, then the following requirement holds:
\begin{equation*}
    \begin{cases}
        A_i = 0, & i < i(p^*),\\
        B_i = 0, & i > i(p^*).
    \end{cases}
\end{equation*}
The above equations provide a mathematical representation of the difference between liquidity and token reserves.
We remind the reader that the liquidity in each tick, $\ell_i$, has its own units, and does not have a direct monetary interpretation. 
However, it is clear that $p^*$, $\bell$, $\bm{A}$, and $\bm{B}$  are intricately connected. From the perspective of an LP, we must take the current pool exchange rate $p^*$ as given and calculate an optimal strategy by looking at pool liquidity instead of token amounts. Using this approach, we only need to consider $d + 1$ quantities, instead of $2d$ quantities (if we tracked the amounts of token A and token B in each tick). However, the amount of tokens in tick $i$ can always be calculated using $p^*$ and $\ell_i$ via
\begin{align}
\label{eq:L2}
    A_i(\ell_i; p^*)&= \begin{cases}
        0,& i < i(p^*), \\
        \ell_i(1/\sqrt{p^*} - 1/\sqrt{p_{i+1}}),& i = i(p^*), \\
        \ell_i(1/\sqrt{p_i} - 1/\sqrt{p_{i+1}}),& i > i(p^*),
    \end{cases}\\
    B_i(\ell_i; p^*) &= \begin{cases}
        \ell_i(\sqrt{p_{i+1}} - \sqrt{p_i}),& i < i(p^*), \\
        \ell_i(\sqrt{p^*} - \sqrt{p_i}),& i = i(p^*), \\
        0,& i > i(p^*), 
    \end{cases}
\end{align} 
for all $i \in [d]$. For ease of notation, we will at times suppress the dependence of $A_i$ and $B_i$ on $\ell_i$ and $p^*$ in the future.

Note that only tick $i^*$ contains tokens of both A and B. Ticks above $i^*$ only contain token A, while ticks below $i^*$ only contain token B. This is because the pool is designed such that the token reserves in tick $i > i^*$ will first be used when the pool exchange rate reaches $p_i$, and thus
\begin{equation*}
    \dfrac{B_i(\ell_i; p^*) + \ell_i\sqrt{p_i}}{A_i(\ell_i; p^*) +
    \ell_i/\sqrt{p_{i+1}}} = \dfrac{0 + \ell_i\sqrt{p_i}}{\ell_i/\sqrt{p_i} - \ell_i/\sqrt{p_{i+1}} + \ell_i/\sqrt{p_{i+1}}} = \dfrac{\ell_i\sqrt{p_i}}{\ell_i/\sqrt{p_i}} = p_i.
\end{equation*}
Similarly, the token reserves in tick $i < i^*$ will first be used when the pool exchange rate reaches $p_{i+1}$, and so only contain token B. Further details about the relationship between swaps and the pool exchange rate $p^*$ are provided in \Cref{sec:swap_explanation} below.

\begin{remark}
    Note that the CPMM for a tick $i$ depends not only upon $A_i$ and $B_i$, but also on $\ell_i\sqrt{p_i}$ and $\ell_i/\sqrt{p_{i+1}}$. These two quantities contribute to the {\rm virtual reserves} in the tick, and serve to shift the constant product function and ensure the differentiability of the full-pool exchange function. The concept of virtual reserves was introduced alongside concentrated liquidity in \cite{Uniswapv3}, and the shift ensures that each tick's exchange rate curve intersects the axes at points that correspond with $p_i$ and $p_{i+1}$. The concept is illustrated in \Cref{fig:virtualres} and the mathematics is discussed further in \Cref{sec:swap_explanation}.
\begin{figure}[H]
    \centering
    \includegraphics[width=0.4\textwidth]{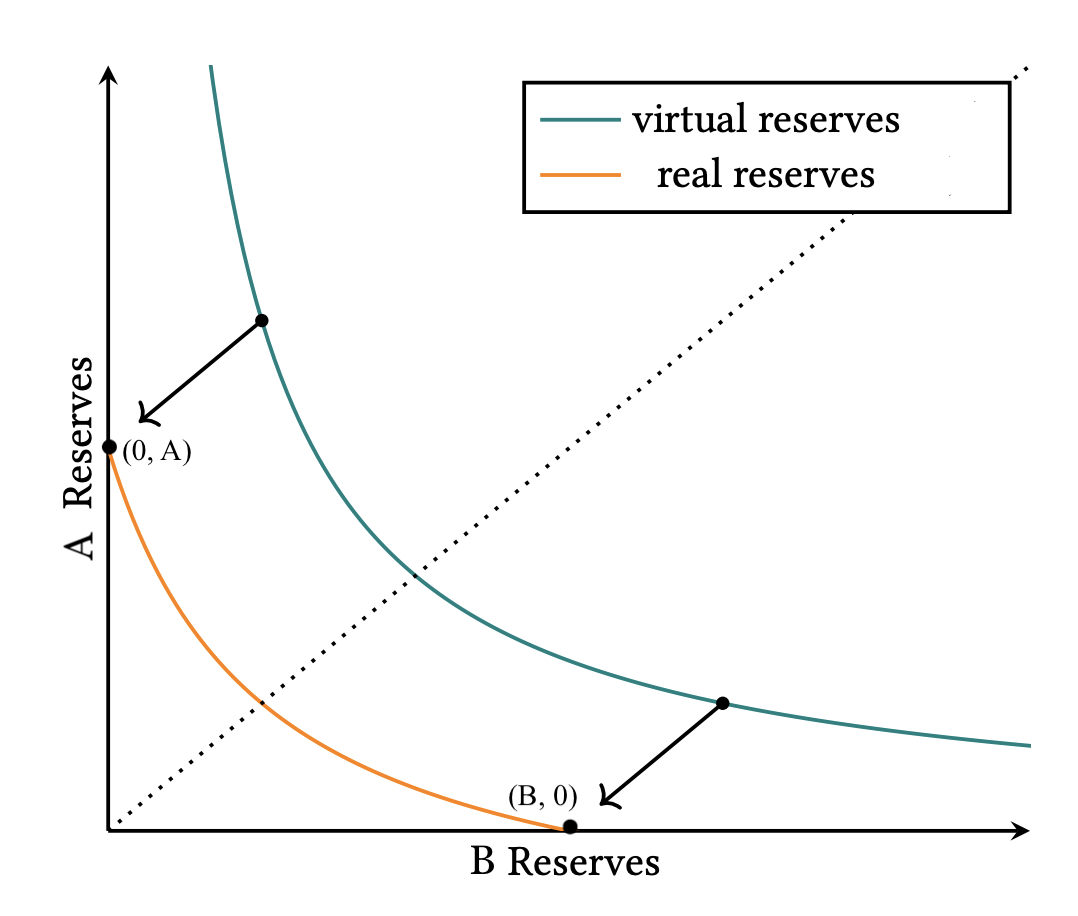}
    \caption{A conceptual illustration of real versus virtual reserves in Uniswap v3. The points $(B,0)$ and $(0, A)$ represent the real reserves corresponding to an exchange rate of $p_i$ and $p_{i+1}$, respectively.}
    \label{fig:virtualres}
\end{figure}
\end{remark}

\subsubsection*{Detailed Description of Liquidity Addition Process}
Now that we have established the relationship between liquidity, token reserves, and pool exchange rate, we will discuss exactly how LPs are able to participate in this exchange. In our model, LPs contribute to the liquidity pool by choosing a range of exchange rates in which to contribute and an amount of capital to contribute. 

Recall that we assume that LPs value their total capital using the {\it market exchange rate}, which we denote as $m^*$ and that the LP has a fixed amount of capital (as valued in terms of token B) that it is able to contribute to the pool. 
The contributed capital is distributed among the chosen ticks such that each tick in the range receives the same amount of additional liquidity. Note that this is not equivalent to contributing the same number of tokens to each tick. However, it is straightforward to calculate the token contributions for a given choice of exchange rate range and capital. To do this, we first determine the appropriate token \textit{ratio} by calculating the token contributions necessary to add one unit of liquidity to each tick in the selected position. Then, these token contributions can be scaled such that the LP's contribution has the desired value in terms of token B.
Let us say that the LP chooses a range of exchange rates $[p_{j^1}, p_{j^2})$ corresponding to ticks $\{j^1, \ldots, j^2-1\}$, where $j^1 < j^2$ and $j^1, j^2 \in [d+1]$. This is the range of exchange rates in which the LP will contribute liquidity and subsequently earn fees from swap transactions. 
The LP then calculates the amount of token A and token B needed to contribute one unit of liquidity to each tick in the exchange rate range using \Cref{eq:L2}. 
There is a linear relationship between $A_i$, $B_i$, and $\ell_i$, so
\begin{equation}
\label{eq:deltaAB}
    \forall i \in \{j^1, \ldots, j^2 - 1\}: \begin{cases}
    \Delta A_i = 0, \quad  \Delta B_i = \sqrt{p_{i+1}} - \sqrt{p_i}, & i < i^*,\\
    \Delta A_i = 1/\sqrt{p^*} - 1/\sqrt{p_{i+1}}, \quad  \Delta B_i = \sqrt{p^*} - \sqrt{p_i}, & i = i^*,\\
    \Delta A_i =  1/\sqrt{p_i} - 1/\sqrt{p_{i+1}}, \quad \Delta B_i = 0, & i > i^*,
    \end{cases}.
\end{equation}
The LP uses the current market exchange rate $m^*$ to calculate the current \textit{market value} of those tokens in units of the base token B, given by
\begin{equation*}
    \sum_{i = j^1}^{j^2 - 1} \Delta B_i + m^* \left(\sum_{i = j^1}^{j^2 - 1} \Delta A_i\right) =  \sqrt{p^*} - \sqrt{p_{j^1}} + m^* \left(1/\sqrt{p^*} - 1/\sqrt{p_{j^2}}\right).
\end{equation*}
Because liquidity and token reserves are linearly related, the LP can then scale its token contributions to contribute a total market value of $k$ tokens of B. In this case, the LP will add $u(j^1, j^2, k, m^*)$ units of liquidity to each tick between $j^1 \in [d]$ and $j^2 \in [d]$, where
\begin{equation*}
    u(j^1, j^2, k, m^*) := \dfrac{k}{\sqrt{p^*} - \sqrt{p_{j^1}} + m^* \left(1/\sqrt{p^*} - 1/\sqrt{p_{j^2}}\right)}.
\end{equation*}
The LP contributes liquidity with the intention of earning fees. If it contributes capital valued at $k$ tokens of B at the current market exchange rate $m^*$, and it contributes this capital between ticks $j^1$ and $j^2$, the size of its share of fees in tick $i$ is given by
\begin{equation}
\label{eq:share}
    \frac{u(j^1, j^2, k, m^*)}{
    \bell_i + u(j^1, j^2, k, m^*)} \one_{\{j^1 \leq i < j^2\}}, \qquad \forall i \in [d].
\end{equation}

\subsubsection*{Example: Toy Liquidity Addition}
Consider a liquidity distribution like the one shown in \Cref{fig:exliq1}, which has corresponding token reserves shown in \Cref{fig:extok1}. In this rudimentary example, we have $d = 6$, $p^* = m^* = 1.6$, and $\bm p = [1, 1.21, 1.44, 1.56, 1.69, 1.96, 2.25, 2.56]$, where $\bm p$ was chosen so that $\sqrt{p_i}$ are simple. Note that in this case, $i^* = 3$. The exact values of $\bell$, $\bm A$, and $\bm B$ are shown in \Cref{tab:liqcalc1}. 
\begin{figure}[H]
    \centering
    \begin{subfigure}{0.45\textwidth}
         \centering
         \includegraphics[width = \textwidth]{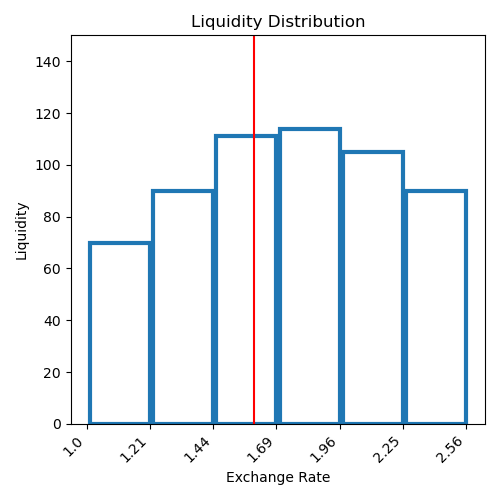}
    \caption{The original liquidity distribution of the pool.
    \label{fig:exliq1}}
     \end{subfigure}
    \begin{subfigure}{0.45\textwidth}
         \centering
         \includegraphics[width = \textwidth]{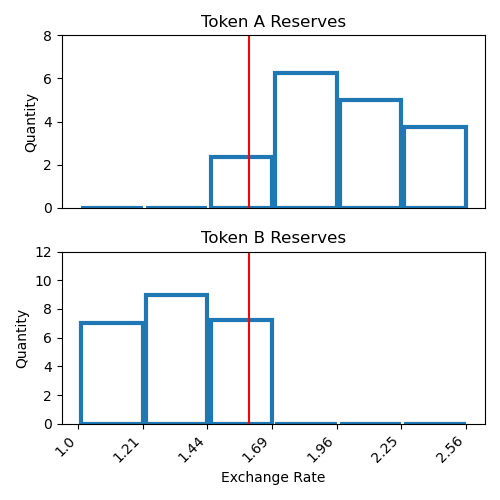}
    \caption{The original token reserves in the pool.
    \label{fig:extok1}}
     \end{subfigure}
     \caption{Toy liquidity pool}
\end{figure}
\begin{table}[H]
    \centering
    \footnotesize
    \begin{tabular}{|c|c|c|c|c|c|c|}
    \hline
        Tick $i$ & 1 & 2 & 3 & 4 & 5 & 6 \\
        \hline
        $[p_i, p_{i+1})$ & [1, 1.21) & [1.21, 1.44) & [1.44, 1.69) & [1.69, 1.96) & [1.96, 2.25) & [2.25, 2.56) \\
        \hline
        $A_i$ & 0 & 0 & 2.37 & 6.25 & 5 & 3.75  \\
        \hline
        $B_i$ & 7 & 9 & 7.208 & 0 & 0 & 0 \\
        \hline
        $\ell^0_i$ & 70 & 90 & 111.052 & 113.75 & 105 & 90 \\
        \hline
    \end{tabular}
    \caption{The original state of the liquidity pool before a new LP arrives to contribute liquidity.}
    \label{tab:liqcalc1}
\end{table}
\begin{remark}
    Note that the token reserves of B are equal in ticks 1 and 2, but the token reserves of A are not equal in ticks 4, 5, and 6. This is because, given our choice of $\bm p$, $\sqrt{p_{i+1}} - \sqrt{p_i} = 0.1$ for all ticks, but $1/\sqrt{p_i} - 1/\sqrt{p_{i+1}}$ is different for every tick. 
\end{remark}
Now, a new LP has arrived and wishes to contribute to this pool. It must choose a range of exchange rates (or ticks) in which to contribute liquidity as well as an amount of capital (in units of token B) to contribute to the pool. Let us say that the LP wants to contribute liquidity to the exchange rate range $[1.21, 1.96)$ which corresponds to ticks 2, 3, and 4. The LP has a total amount of capital valued at 2.87 tokens of B. If it wishes to determine the numbers of tokens needed to achieve this contribution, as well as the amount of liquidity that will be added according to this contribution, it will use \Cref{eq:deltaAB} to calculate how many tokens are required to contribute one unit of liquidity uniformly across the selected ticks. The calculations are shown below:
\begin{align*}
    i = 2: & \begin{cases}
        \Delta A_2 = 0,\\
        \Delta B_2 = \sqrt{1.44} - \sqrt{1.21} = 0.1.
    \end{cases}\\
    i = 3: & \begin{cases}
        \Delta A_2 = 1/\sqrt{1.6} - 1/\sqrt{1.69} =0.021339, \\
        \Delta B_2 = \sqrt{1.6} - \sqrt{1.44} = 0.064911.
    \end{cases}\\
    i = 4: & \begin{cases}
        \Delta A_2 = 1/\sqrt{1.69} - 1/\sqrt{1.96} = 0.054945,\\
        \Delta B_2 = 0.
    \end{cases}
\end{align*}
The total required token reserves to add one unit of liquidity across the chosen ticks is
\begin{equation*}
    \sum_{i=2}^4 \Delta A_i = 1/\sqrt{1.6} - 1/\sqrt{1.96} = 0.076284, \qquad \sum_{i=2}^4 \Delta B_i = \sqrt{1.6} - \sqrt{1.21} = 0.164911.
\end{equation*}
Therefore, the capital in terms of token B required to contribute one unit of liquidity per tick across ticks 2, 3, and 4, where the conversion rate from A to B is $m^* = 1.6$, is given by
\begin{equation*}
    0.164911 + m^* \times 0.076284 = 0.28700 \text{ tokens of B.}
\end{equation*}
Since the LP has 2.87 tokens of B, it is able to contribute $2.87/0.287 = 10$ units of liquidity to every tick, and its final token contributions will be 0.76284 tokens of A and 1.64911 tokens of B. In this case, $u(2, 4, 2.87, 1.6) = 10$ and so $\bell^1 = \bell_{\text{new}}(10, 2, 4) + \bell^0$. The resultant liquidity distribution is shown in \Cref{fig:exliq2} and the resultant token reserves are shown in \Cref{fig:extok2}. The new values of $\bell$, $\bm A$, and $\bm B$ are shown in \Cref{tab:liqcalc2}.  
\begin{figure}[H]
    \centering
    \begin{subfigure}{0.45\textwidth}
         \centering
         \includegraphics[width = \textwidth]{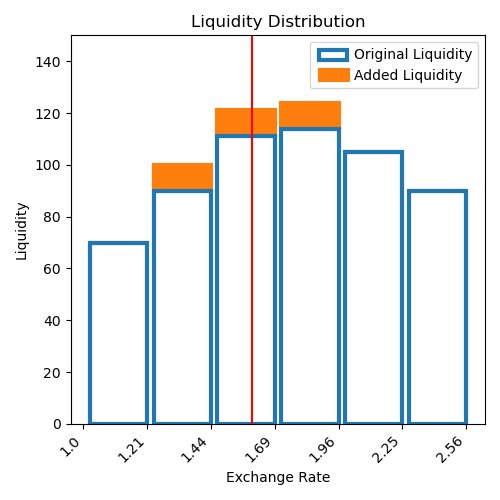}
    \caption{The updated liquidity distribution of the pool.
    \label{fig:exliq2}}
     \end{subfigure}
    \begin{subfigure}{0.45\textwidth}
         \centering
         \includegraphics[width = \textwidth]{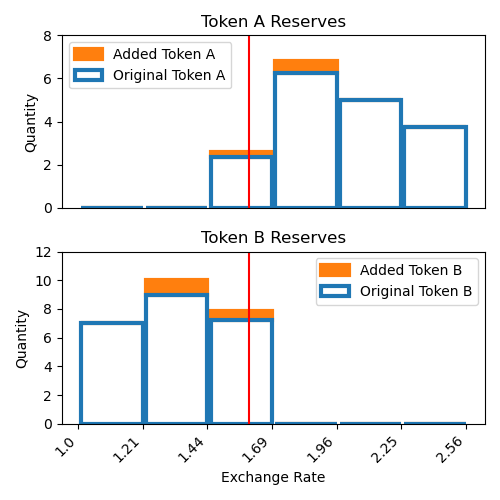}
    \caption{The updated token reserves in the pool.
    \label{fig:extok2}}
     \end{subfigure}
     \caption{Example liquidity addition}
\end{figure}

\begin{table}[H]
    \centering
    \footnotesize
    \begin{tabular}{|c|c|c|c|c|c|c|}
    \hline
        Tick $i$ & 1 & 2 & 3 & 4 & 5 & 6\\
        \hline
        $[p_i, p_{i+1})$ & [1, 1.21) & [1.21, 1.44) & [1.44, 1.69) & [1.69, 1.96) & [1.96, 2.25) & [2.25, 2.56) \\
        \hline
        $A_i$ & 0 & 0 & \cellcolor{gray!25}2.583& \cellcolor{gray!25}6.8 & 5 &  3.75  \\
        \hline
        $B_i$ & 7 &  \cellcolor{gray!25}10 & \cellcolor{gray!25}7.858 & 0 & 0 & 0 \\
        \hline
        $\ell^1_i$ & 70& \cellcolor{gray!25}100& \cellcolor{gray!25}121.052& \cellcolor{gray!25}123.75& 105& 90 \\
        \hline
    \end{tabular}
    \caption{The state of the liquidity pool after an LP arrives to contribute 2.87 tokens of B distributed across $[1.21, 1.96)$.}
    \label{tab:liqcalc2}
\end{table}
We also note that our LP of interest has contributed 10 units of liquidity to every tick, but has contributed 9.09\% of the liquidity in tick 2, 8.26\% of the liquidity in tick 3, and 8.08\% of the liquidity in tick 4. This \textit{share} of liquidity (not the absolute liquidity contribution) will determine the amount of the fees earned by the LP for swaps with execute at exchange rates within $[1.21, 1.96)$. The LP will earn nothing for swaps or parts of swaps which are executed in the other three ticks.

\subsubsection*{Example: Liquidity Addition in Uniswap Interface}
In \Cref{fig:interface}, we show annotated screenshots of the Uniswap interface for creating a new liquidity position in a ETH/USDC pool with a $\gamma=$0.30\% transaction fee. 
Note that the Uniswap interface prompts users to choose their contribution range and their amount for one of the two tokens, and calculates the remaining token amount to ensure equal liquidity contributions. 
However, as we hae discussed, it is straightforward to calculate the amount of one token that will ensure the correct overall contribution.
In \Cref{fig:interface}, we assume that the LP wishes to contribute tokens with a value of 5000 USDC (token B). Notice that when the selected exchange rate range for liquidity changes, the number of tokens of ETH (token A) needed relative to the number of tokens of USDC will also change, as will the amounts needed of both tokens. If a position is centered around the current pool exchange rate, then the value of the ETH and USDC contributions is roughly equal, as seen in \Cref{fig:screenshot2}. However, a position that skews towards lower exchange rates (meaning that ETH is relatively cheaper) requires more USDC than ETH, while a position that skews towards higher exchange rates (meaning ETH is more expensive) requires less USDC and more ETH. This can be seen in \Cref{fig:screenshot1} and in \Cref{fig:screenshot3} respectively.
\begin{figure}[H]
    \centering
     \begin{subfigure}{0.3\textwidth}
         \centering
         \includegraphics[width=\textwidth]{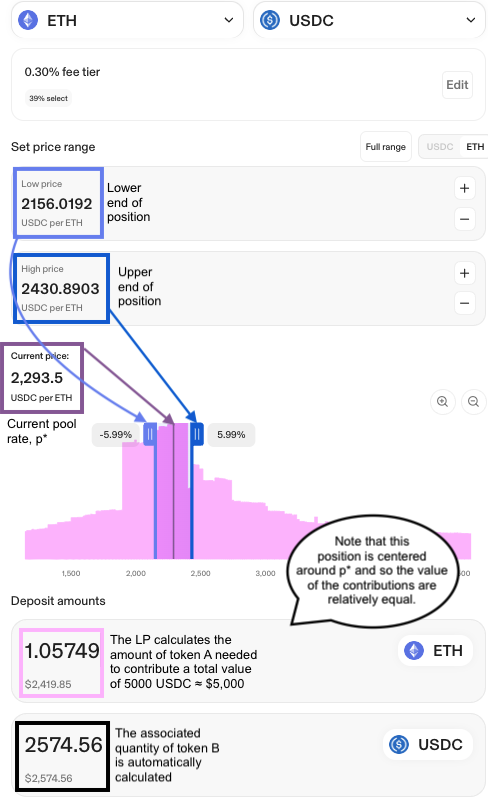}
         \caption{\label{fig:screenshot2}}
     \end{subfigure}
    \begin{subfigure}{0.3\textwidth}
         \centering
         \includegraphics[width=\textwidth]{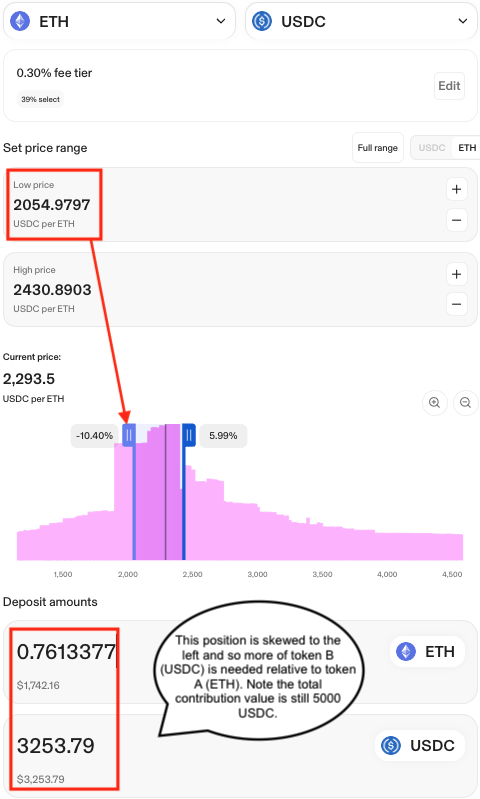}
         \caption{\label{fig:screenshot1}}
     \end{subfigure}
     \begin{subfigure}{0.3\textwidth}
         \centering
         \includegraphics[width=\textwidth]{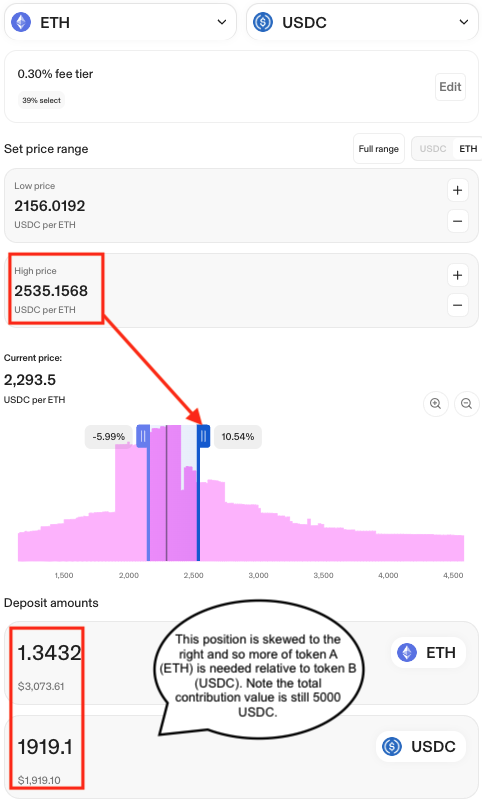}
         \caption{\label{fig:screenshot3}}
     \end{subfigure}
    \caption{Screenshots from the 0.30\% fee Uniswap ETH/USDC pool liquidity interface where the LP wishes to make a contribution valued at 5000 USDC.}
    \label{fig:interface}
\end{figure}
To illustrate the consistency of our equations and the Uniswap interface, note that if $m^* = p^* = 2293.5$ (which was the case at the time of calculation), then we can calculate
\begin{align*}
   & \Delta \ell = 5000/(\sqrt{2293.5} - \sqrt{2156.0192} + m^*(1/\sqrt{2293.5} - 1/\sqrt{2430.8903})), \\
    \implies &\sqrt{2293.5} - \sqrt{2156.0192}*\Delta \ell = 2574.56 \text{ tokens of B}, \\
    &1/\sqrt{2293.5} - 1/\sqrt{2430.8903}*\Delta \ell = 1.05749 \text{ tokens of A},
\end{align*}
which is consistent with \Cref{fig:screenshot2}. Note that if we wished to create this position, once we had chosen 2156.0192 and 2430.8903 as the bounds of the position, we would only need to enter 1.05748 in the pink box OR 2574.56 in the black box. The Uniswap interface would automatically calculate the missing value.
\begin{remark}
  In this example, since our base token is USDC and 1 USDC = 1 US dollar, converting the value of all holdings to USDC is equivalent to converting the value of all holdings to US dollars, which Uniswap does automatically. However, in the case of a pool where neither coin is tied to the value of the US dollar, LPs could choose to value their holdings in terms of one of the coins or in terms of USD. The choice of valuation does not affect the model or our findings.  
\end{remark}

\subsection{Executing Transactions with the Constant Product Function}
\label{sec:swap_explanation}
Liquidity contributions are not the only transactions that take place in a pool. Indeed, the reason why LPs are willing to contribute liquidity is because they hope to earn transaction fees from swaps executed in the pool. The exchange rate received by swappers who wish to exchange tokens is also governed by the CPMM \Cref{eq:cpmm} and depends on the liquidity in the pool. For ease, we have chosen to make the exchange function $\psi$ a function of token B, with output in terms of token A. We will define $\psi: \R \times \R^d_+ \times \R \to \R$ as the exchange function, where $\psi(x, \bell, p^*)$ designates the amount of token A received by a swapper who deposits $x$ of token B into a pool with liquidity distribution $\bell$ and current pool exchange rate $p^*$. The new reserves in tick $i^*$ after a swap of $x$ tokens of B for $y := \psi(x, \bell, p^*)$ tokens of A must satisfy
\begin{equation}
\label{eq:cpmm2}
    \Big(B_{i^*} + \ell_{i^*}\sqrt{p_{i^*}} + x\Big)\Big(A_{i^*} + \ell_{i^*}/\sqrt{p_{i^*+1}} - y\Big) = \ell_{i^*}^2,
\end{equation}
where $x \geq -B_{i^*}$ and $y \leq A_{i^*}$. Note that the reserves of tokens A and B in a tick will fluctuate as users exchange tokens, but $\ell_i$ will remain constant until another user adds or removes liquidity from tick $i$. Swaps will alter the pool exchange rate $p^*$, such that the new pool exchange rate $p^*_{\text{new}}$ after the swap is determined using
\begin{equation}
\label{eq:p_new}
    p^*_{\text{new}}(x, \bell, p^*) := \dfrac{B_{i^*} + \ell_{i^*}\sqrt{p_{i^*}} + x}{A_{i^*} +
    \ell_{i^*}/\sqrt{p_{i^*+1}} - y}.
\end{equation}
\begin{remark}
    We emphasize the following point - when liquidity is deposited or removed by an LP, $p^*$ remains constant but $\bell$ changes. When a swapper makes an exchange, $p^*$ changes but $\bell$ remains constant. Each can be used to determine the other, but at any point, at least one of them is fixed by the design of the pool.
\end{remark}

Intuitively, it is clear that a swapper cannot withdraw more than $A_i$ tokens of A or $B_i$ tokens of B from tick $i$. From \cref{eq:cpmm2} one can also observe that there is a limit on how much of one token can be deposited into a tick before the real reserves of the other token become negative. Then, what happens if a user wishes to deposit or withdraw more of a token than what can be accommodated by the current interval? In this scenario, the transaction must calculate an exchange rate across multiple ticks, based on the full vector $\bell$. Therefore, we can define $\psi$ as a piecewise function where 
    \begin{equation*}
        \psi(x, \bell, p^*) := \sum_{i=1}^d \psi^i(x, \bell, p^*).
    \end{equation*}
We will define the functions $\psi^i$ which correspond to the ticks $i$ momentarily.
As we will show, $\psi$ is monotonically increasing with $\psi(0) = 0$. This means that if $x>0$ (and $\psi(x, \bell, p^*)>0$), then the user exchanges token B for token A. If $x < 0$ (and $\psi(x, \bell, p^*) < 0$), then the user exchanges token A for token B. 

The domain of each $\psi^i$ is determined by $\{B_i(\ell_i;\  p^*)\}_{i=1}^d$ and $\{\ell_i\}_{i=1}^d$ such the domain of $\psi$ is split into $d$ regions with $d+1$ endpoints, defined as $\bm \beta(\bell; i^*) = \{\beta_i(\bell; i^*)\}_{i=1}^{d+1}$. These regions correspond to the exchange rate ticks $\{p_i\}_{i=1}^{d+1}$ in the pool. If the current exchange rate is $p^*$ contained in the current tick $i^*$, we calculate $\beta_i(\bell; i^*)$ as
\begin{equation}
    \label{eq:b}
    \beta_i(\bell; i^*) := \begin{cases}
        -\sum_{j = i+1}^{i^*-1} B_j, & 1 \leq i \leq i^* - 1,\\
        \sum_{j = i^*}^{i} \ell_j(\sqrt{p_{j+1}} - \sqrt{p_j}), & i^* \leq i \leq d.
    \end{cases}
\end{equation}
We derive the deposit limits $\ell_j(\sqrt{p_{j+1}} - \sqrt{p_j})$ from \Cref{eq:cpmm2} and the withdrawal limits $-B_j$ from the requirement that we cannot withdraw more of token B from tick $j$ than what the tick contains. 

Now that we have defined the quantities $\beta_i(\bell; i^*)$, we will at times suppress the dependence on $\bell$ and $p^*$ (as we have done for $A_i$ and $B_i$) for ease of reading. Again using \Cref{eq:cpmm2}, we can now define each of the piecewise function components $\psi^i$ as
\begin{equation}
\label{eq:psi_i}
    \psi^i(x, \bell, p^*) := \begin{cases}
    -\ell_i(1/\sqrt{p_i} - 1/\sqrt{p_{i+1}}), &\ x \leq  \beta_i < 0,\\
     A_i + \ell_i/\sqrt{p_{i+1}} - \dfrac{(\ell_i)^2}{B_i + (x - (\beta_{i+1} \wedge 0)) + \ell_i\sqrt{p_i}}, & \ x \in (\beta_i, \beta_{i+1}], x < 0, \\
     0, &\  \beta_{i+1} < x < 0 \text{ or }  0 < x < \beta_i, \\
     A_i + \ell_i/\sqrt{p_{i+1}} - \dfrac{(\ell_i)^2}{B_i + (x - (\beta_i \vee 0)) + \ell_i \sqrt{p_i}}, & \ 0 < x \in [\beta_i, \beta_{i+1}),\\
     A_i, & \ 0 < \beta_{i+1} \leq x.
    \end{cases}
\end{equation}
A graph of $\psi$ for a simple pool is given in \Cref{fig:psi}, where the multiple colors represent each of the piecewise functions $\psi^i$ which make up the full-pool exchange function. We summarize the various pool parameters and variables in \Cref{tab:params}.

The first-time reader may find the five different cases overwhelming, but the intuition is relatively straightforward. Token B is exchanged for token A by ``using up" the available liquidity in each tick, working outward from $i^*$ in a direction corresponding to the sign of $x$. If there is more liquidity available in a tick than there is token B left to exchange, \Cref{eq:cpmm2} determines the amount of token A which corresponds to that amount of token B. This tick is the new $i^*$. If a swap doesn't ``reach" a tick $i$, its $\psi^i$ returns 0. 

After each swap, $\bell$ remains fixed but $\bm A$ and $\bm B$ have changed to $\bm A^{\text{new}}(\xi, \bell, i^*)$ and $\bm B^{\text{new}}(\xi, \bell, i^*)$. The new pool exchange rate can be calculated via \Cref{eq:p_new} where $i^*$ now refers to the tick in which the swap ended, and which now contains $p^*_{\text{new}}$. The function domain $\bm \beta$ is also updated to $\bm \beta^{\text{new}}$ which corresponds with $\bm A^{\text{new}}$, $\bm B^{\text{new}}$, and $p^*_{\text{new}}$. 
The amount of fees earned by all LPs with active liquidity in tick $i$ when a swap of size $\xi$ is executed is given by
\begin{equation}
    \label{eq:fees}
    \phi(i, \xi; \bell, p^*) := \gamma |B^{\text{new}}_i(\xi, \bell_i; p^*) - B_i(\bell_i;  p^*)|.
\end{equation}
This function essentially determines how many tokens of B were added or removed from each tick in the pool, and calculates the fees per tick accordingly.

A full example is worked through in \Cref{sec:appendix_swap}.
\begin{figure}
    \centering
    \includegraphics[width = 0.6\textwidth]{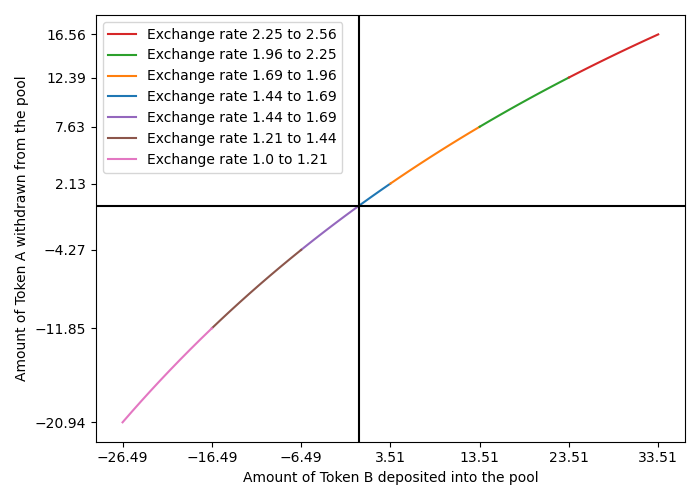}
    \caption{An example of $\psi(x, \bell, p^*)$ when $\bell = [100,100,100,100,100,100]$ and the exchange rate ticks are $[1, 1.21, 1.44, 1.69, 2.25, 2.56]$.}
    \label{fig:psi}
\end{figure}

\begin{table}
    \centering
    \begin{tabular}{|p{0.2\textwidth}|p{0.65\textwidth}|}
        \hline
        Pool Characteristics & Definition \\
        \hline
        $\gamma \in (0,1)$ & Fixed pool transaction fee. \\
        $\{p_i\}_{i=1}^{d+1}$ & The exchange rates associated with ticks in the pool, also fixed. \\
        \hline
        Pool Variables & Definition \\
        \hline
        $p^* \in \R_+$ & Pool exchange rate which evolves with swaps.\\
        $i^* \in [d]$ & Index corresponding to $p^*$. \\
        $\bell \in \R^d_+$ & Liquidity distribution which determines the CPMM for each tick. \\
        $\{A_i(\ell_i, p^*)\}_{i=1}^d \in \R^d_+$ & Token A distribution. \\
        $\{B_i(\ell_i, p^*)\}_{i=1}^d \in \R^d_+$ & Token B distribution.\\
        $\{\beta_i(\bell; i^*)\}_{i=1}^{d+1} \in \R^{d+1}_+$ & Domains of $\psi^i$ functions in terms of token B.\\
        \hline
    \end{tabular}
    \caption{Model parameters and variables, consolidated.}
    \label{tab:params}
\end{table}

\subsubsection*{Example: Toy Swap Calculation}
\label{sec:appendix_swap}
Now, we will provide two examples of the calculations for the exchange rate of a swap transaction using the CPMM algorithm. We will start with a pool containing a distribution of token A and B as displayed in table \Cref{tab:toks1}. The current exchange rate in the pool will be $p^* = 1.25$ and this corresponds to tick 2, so the ``active" tick is $i^* = 2$. For simplicity, we will set the pool transaction fee $\gamma = 0$.

\begin{table}[H]
    \centering
    \footnotesize
    \begin{tabular}{|c|c|c|c|c|c|c|}
    \hline
        Tick & 0 & 1 & $i^* = 2$ & 3 & 4 & 5 \\
        \hline
        Token A & 0 & 0 & 2.154 & 6.25 & 6.25 & 6.25  \\
        \hline
        Token B & 10 & 10 &  6.553 & 0 &  0 & 0 \\
        \hline
        $[\beta_i, \beta_{i+1})$ & [-26.553, -16.553) & [-16.553, -6.553) & [-6.553, 3.542) & [3.542, 14.917) & [14.917, 28.042) & [28.042, 43.042) \\
        \hline
        $\ell_i$ & 100 & 100 &  100.956 & 113.75 &  131.25 & 150 \\
        \hline
        $[p_i, p_{i+1})$ & [1, 1.21) & [1.21, 1.44) & [1.44, 1.69) & [1.69, 1.96) & [1.96, 2.25) & [2.25, 2.56) \\
        \hline
    \end{tabular}
    \caption{The initial setup of a liquidity pool with $p^* = 1.5$ B/A and $i^* = 2$.}
    \label{tab:toks1}
\end{table}

In the first example, we will demonstrate a scenario where the swap does not ``cross ticks" - that is, the composition of the liquidity in the pool changes and the exchange rate $p^*$ changes, but $i^*$ does not change. In \Cref{fig:demo_swap1}, we provide an intuitive illustration of what is happening in the pool after a swap where $\beta_{i^*} \leq \xi \leq \beta_{i^* +1}$. 
\begin{figure}[H]
    \centering
    \begin{subfigure}{0.45\textwidth}
         \centering
    \includegraphics[width=\textwidth]{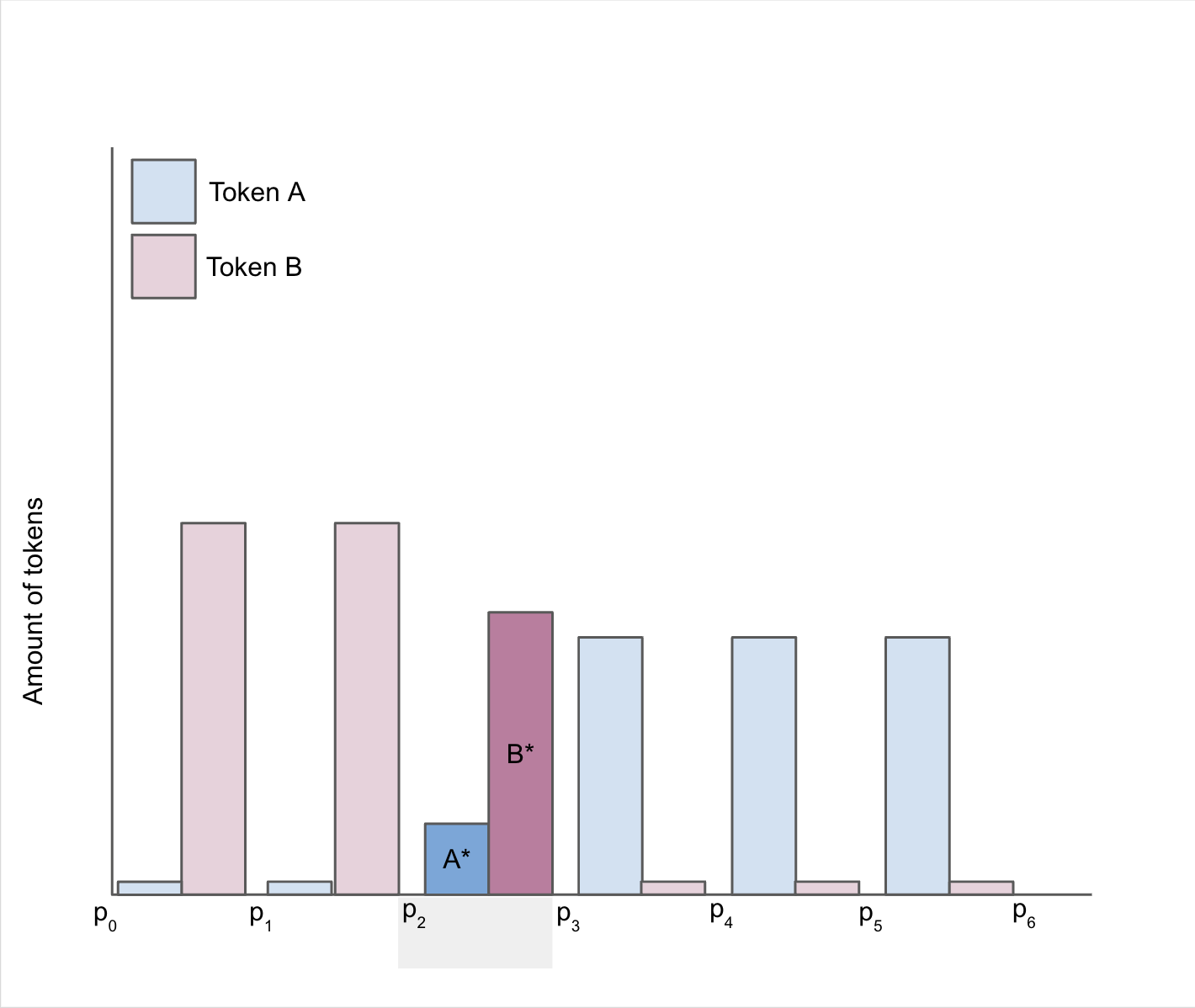}
    \caption{An example initial liquidity distribution with $i^* = 2$.}
     \end{subfigure}
     \begin{subfigure}{0.45\textwidth}
         \centering
         \includegraphics[width=\textwidth]{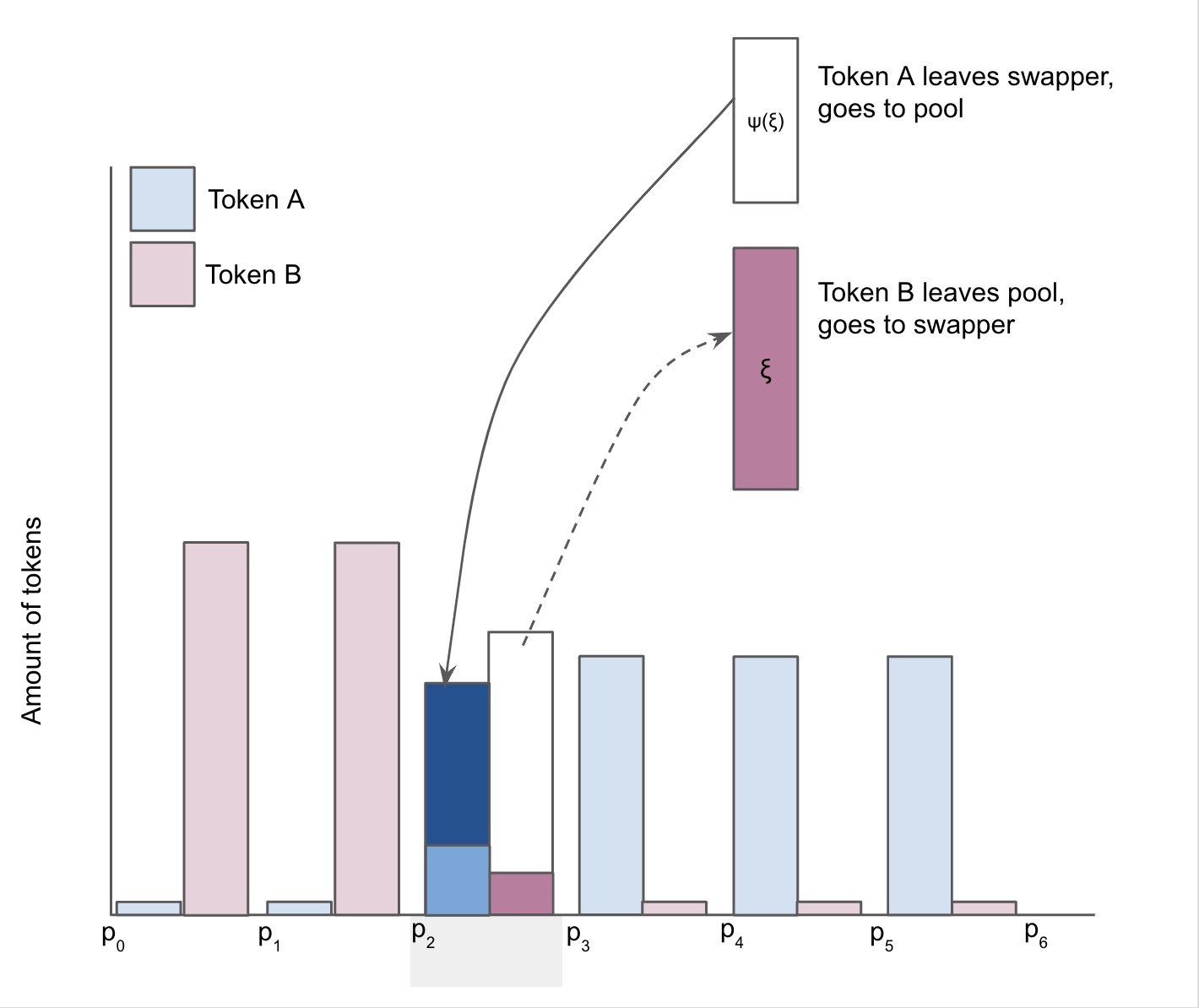}
         \caption{The result after a swap of $-B^* < \xi < 0$, $i^* = 2$ still.}
     \end{subfigure}
    \caption{We illustrate an initial liquidity distribution (left) and the result of a swap where $B^* < \xi < 0$ (right), meaning that token B is withdrawn and token A is deposited.}
    \label{fig:demo_swap1}
\end{figure}

For a more concrete example, say that we choose $\xi = -5$ (the swapper wishes to withdraw 5 tokens of B from the pool). Looking at \Cref{tab:toks1}, we can see that $\beta_2 = -6.553 \leq -5 \leq \beta_3 = 3.542$, so the tokens in tick 2 are enough to facilitate this trade. To determine how many tokens of A must the swapper deposit, we must use \Cref{eq:cpmm2}, giving us
\begin{equation*}
    (6.553 + 100.956\sqrt{1.44} - 5)(2.154 + 100.956/\sqrt{1.69} + y) = (100.956)^2,
\end{equation*}
where we solve for $y$ and find that $y = 3.252$. The new liquidity distribution looks like \Cref{tab:toks2}, where only the Token A, Token B, and $\beta_i$ rows have changed. Tokens A and B have only changed for Tick 2, and $\beta_i$ have all shifted up by $-\xi = 5$. If we wish to determine the new exchange rate from the updated liquidity, we can use \Cref{eq:p_new} to calculate
\begin{equation*}
    \dfrac{1.553 + 100.956\sqrt{1.44}}{5.406+ 100.956/\sqrt{1.69}} = 1.477.
\end{equation*}
Notice that in \Cref{tab:toks2} the $\ell_i$ have not been changed by the swap. In addition, if $\gamma > 0$, only LPs who include tick 2 in their liquidity position will earn fees from this swap.

\begin{table}[H]
    \centering
    \footnotesize
    \begin{tabular}{|c|c|c|c|c|c|c|}
    \hline
        Tick $i$ & 0 & 1 & $i^* = 2$ & 3 & 4 & 5 \\
        \hline
        Token A & 0 & 0 & \cellcolor{gray!25}5.406 & 6.25 & 6.25 & 6.25  \\
        \hline
        Token B & 10 & 10 &  \cellcolor{gray!25}1.553 & 0 &  0 & 0 \\
        \hline
        $[\beta_i, \beta_{i+1}) $& \cellcolor{gray!25}[-21.553, -11.553) & \cellcolor{gray!25}[-11.553,  -1.553) & \cellcolor{gray!25}[ -1.553, 8.542) & \cellcolor{gray!25}[8.542, 19.917) & \cellcolor{gray!25}[19.917, 33.042) & \cellcolor{gray!25}[33.0422, 48.042) \\
        \hline
        $\ell_i$ & 100 & 100 &  100.956 & 113.75 &  131.25 & 150 \\
        \hline
        $[p_i, p_{i+1})$ & [1, 1.21) & [1.21, 1.44) & [1.44, 1.69) & [1.69, 1.96) & [1.96, 2.25) & [2.25, 2.56) \\
        \hline
    \end{tabular}
    \caption{The result of a swap $\xi = -5$. Now, $p^*_{\text{new}} = 1.477 < p^* = 1.5$ and $i^* = 2$ still.}
    \label{tab:toks2}
\end{table}

In our next example, we provide a calculation for a swap transaction in which a user wishes to swap more currency than is contained in the current active tick $i^*$, so liquidity is drawn from multiple ticks. To provide some intuition for the case of a swap which crosses multiple ticks, we have illustrated a generic scenario in \Cref{fig:demo_swap2}.

\begin{figure}[H]
    \centering
    \begin{subfigure}{0.45\textwidth}
         \centering
    \includegraphics[width=\textwidth]{Figures/10-18-23-swap0.png}
    \caption{An example initial liquidity distribution with $i^* = 2$.}
     \end{subfigure}
     \begin{subfigure}{0.45\textwidth}
         \centering
         \includegraphics[width=\textwidth]{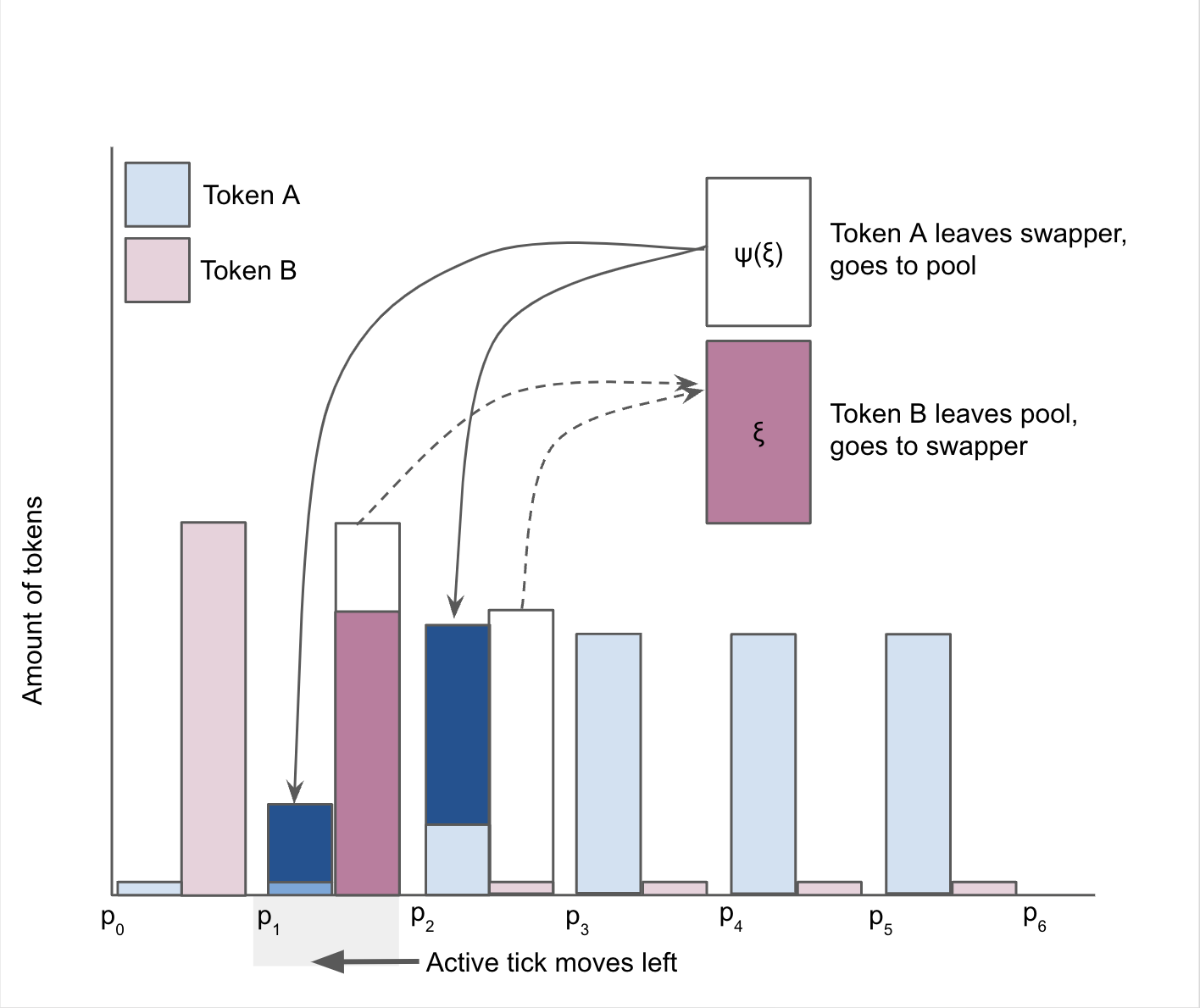}
         \caption{The result after a swap of $\xi < -B^*$, now $i^* = 1$.}
     \end{subfigure}
    \caption{We illustrate an initial liquidity distribution (left) and the result of a swap where $\xi < B^*$ (right), meaning that token B is withdrawn and token A is deposited.}
    \label{fig:demo_swap2}
\end{figure}

Now, we will provide a more thorough explanation with calculations. We again start with \Cref{tab:toks1}, but we choose $\xi = -10$. Notice that $\beta_1 = -16.553 \leq -10 \leq \beta_2 = -6.553$, so we know that we will change the token distributions in both tick 1 and tick 2, and after the swap the active tick will move to $i^* = 1$. Some calculation is required to determine how the tokens are distributed between the two ticks. We move outward tick by tick, starting from the original $i^* = 2$. Since $\beta_2 = -6.553$, we can withdraw a maximum of 6.553 tokens of B from tick 2. Again, we use \Cref{eq:cpmm2} with $i = 2$ to determine that $y_2 = 4.317$ tokens of A should be deposited in tick 2 in exchange for the 6.553 tokens of B. Now that we have drained all the tokens of B in tick 2, we move left to tick 1 (because $\xi < 0$). We still wish to withdraw an additional $(10 - 6.553) = 3.447$ tokens of B, so we use \Cref{eq:cpmm2} applied to tick 1 to determine that $y_1 = 2.464$. Therefore, the total amount deposited in the pool is $y_1 + y_2 = 6.781$ tokens of A. The new liquidity distribution is shown in \Cref{tab:toks3}. Again, use \Cref{eq:p_new} to find $p^*_{\text{new}} = 1.358$. Notice that the larger swap magnitude resulted in a larger change in the pool exchange rate. Additionally, if $\gamma > 0$ then LPs who included at least one of ticks 1 and 2 in their liquidity position will earn fees from the swap, with those that contributed to both ticks earning more fees.

\begin{table}[H]
    \centering
    \footnotesize
    \begin{tabular}{|c|c|c|c|c|c|c|}
    \hline
        Tick $i$ & 0 & $i^* = 1$ & 2 & 3 & 4 & 5 \\
        \hline
        Token A & 0 & \cellcolor{gray!25}2.464 & \cellcolor{gray!25}6.472 & 6.25 & 6.25 & 6.25  \\
        \hline
        Token B & 10 & \cellcolor{gray!25}6.553 &  \cellcolor{gray!25}0 & 0 &  0 & 0 \\
        \hline
        $[\beta_i, \beta_{i+1}) $& \cellcolor{gray!25}[-16.553, -6.553) & \cellcolor{gray!25}[-6.553,  3.447) & \cellcolor{gray!25}[3.447, 13.542) & \cellcolor{gray!25}[13.542, 24.917) & \cellcolor{gray!25}[24.917, 38.042) & \cellcolor{gray!25}[38.042, 53.042) \\
        \hline
        $\ell_i$ & 100 & 100 &  100.956 & 113.75 &  131.25 & 150 \\
        \hline
        $[p_i, p_{i+1})$ & [1, 1.21) & [1.21, 1.44) & [1.44, 1.69) & [1.69, 1.96) & [1.96, 2.25) & [2.25, 2.56) \\
        \hline
    \end{tabular}
    \caption{The result of a swap $\xi = -10$. Now, $p^*_{\text{new}} = 1.358$ and $i^* = 1$, showing more movement change in the previous example.}
    \label{tab:toks3}
\end{table}

\newpage 

\section{Numerical Methods}
\label{sec:appendix_methods}

\subsection{The Single LP Optimization}
\label{sec:method_single}
In Numerical Method \ref{alg:simple}, we solve the single LP optimization problem given by
\begin{equation*}
    V(j^1, j^2;\ \theta, \bell^0) := \E_{\bm \xi, \delta(\theta)} [\pi(j^1, j^2, \bm \xi;\ \theta, \bell^0)] - \lambda(\theta) \Var_{\bm \xi, \delta(\theta)}(\pi(j^1, j^2, \bm \xi;\ \theta, \bell^0,)).
\end{equation*}
\begin{algorithm}
\caption{Single LP Optimization}
\label{alg:simple}
\begin{algorithmic}[1]
    \State{Take $T$, $\theta$, $\gamma$, $\bell^0$, $p^*_0$, and $m^*_0$ as fixed and known.}
    \State{Initialize the running maximum as \texttt{rmax}$\gets-1000$ and initialize the optimal position as $j^{1, *}\gets0$ and $j^{2, *}\gets0$.}
\For{ $(j^1, j^2) \in J$}
            \State{Initialize the value function for the current choice of $(i,j)$ as \texttt{Vcurr}$\gets0$. Define an empty array $V = []$ to hold the total reward for all MC simulations.}
            \For{many Monte Carlo simulations}
                \State{Initialize the total reward for the current MC simulation as $R \gets 0$.}
                \For{$t = 0, \ldots, T$}
                \State{Simulate $\Delta W_t \sim N(0, \Delta t)$ and calculate $m^*_t = m^*_{t-1} e^{(\alpha + \delta) \Delta t + \sigma \Delta W_t}$.}
              \State{Simulate $X_t \sim Bernoulli(\rho(p^*_t - m^*_t))$.}
                \If{$X_t = 1$}
                \State{Simulate $\xi_t$ according to $f_{X|Y}(x \mid Y= p^*_t - m^*_t)$.}
                \State{$R \gets R + r(\xi_t, \bell^1(j^1, j^2, \bell^0), p^*_t, \theta)$.}
                \EndIf \EndFor
                \State{$V$.append($R$)}
        \EndFor
            \State{Calculate \texttt{Vcurr}$= \E[V] - \lambda Var(V)$.}
            \If{\texttt{Vcurr}$\geq$ \texttt{rmax}}
                \State{\texttt{rmax}$ \gets$ \texttt{Vcurr}, $j^{1, *}\gets j^1$ and $j^{2, *}\gets j^2$.}
            \EndIf
     \EndFor
     \State{The optimal control is $(j^{1, *}, j^{2, *})$ which produces utility \texttt{rmax}.}
\end{algorithmic}
\end{algorithm}

\subsection{The N-Player Game}
\label{sec:method_np}
In the $N$-player Bayesian game, all players have the same type-dependent strategy in equilibrium, because all players have symmetric beliefs before types are assigned. Therefore, we want to reverse-engineer $\Theta$ so that each player's equilibrium best response strategy creates a distribution of actions over $J$ which resembles $\bell^0$ in shape. We will therefore begin by assuming that each player's actions corresponding to their strategy are distributed proportionally to $\bell^0$. For each type $\theta \in \Theta$, we will determine the best response strategy of a generic player $n$ to this action distribution. To determine the strategy which maximizes the player's expected payoff given the unknown actions of the other players, we will need to use Monte Carlo trials to repeatedly sample $N-1$ actions from the space. Once $(j^1_n(\theta_n), j^2_n(\theta_n))$ has been calculated for all choices of $\theta_n \in \supp(\Theta)$ based on the sampled actions, we use the estimated action distribution to determine the distribution over types which would most closely replicate $\bell^0$. The method is described in more detail in Method \ref{alg:bayesian}. Note that the method is not complex, but the calculations are expensive. Even though the players are symmetric and so finding one player's strategy is equivalent to finding all their strategies, the computational time grows with the number of players. Notably, the expected payoff of player $n$ must be calculated with respect to the $(N-1)$-dimensional random variable representing the actions of the other players. As $N$ increases, this random vector becomes more and more difficult to handle, even when using Monte Carlo simulations. This is because as $N$ increases, more and more Monte Carlo simulations are required to accurately sample from the probability space.
\begin{algorithm}
\caption{$\Theta$ Calibration: Bayesian $N$-Player Game}
\label{alg:bayesian}
\begin{algorithmic}[1]
\Require{The observed target liquidity distribution $\bell^0$.}
\Ensure{An estimation of the type distribution $\Theta$.}
    \State{Lexicographically order the action space $J$ such that $J = \{(j^1, j^2)_q\}_{q=1}^{|J|}$. For convenience, we will sometimes refer to $(j^1, j^2)_q$ as $\mathcal{j}_q \in J$. Recall $|J| = d(d+1)/2$.}
    \State{Create a $|J| \times d$ matrix $J_{mat}$ where row $q$ is the unit liquidity position corresponding with action $\mathcal{j}_q \in J$, such that
    \begin{equation*}
        \mathcal{j}_q = (j^1, j^2)_q \to [0, \ldots, 0,\underbrace{1, \ldots\ldots, 1,}_{j^1\text{ to }j^2-1}0, \ldots, 0] \in \R^d.
    \end{equation*}}
    \State{Define $\bm x \in \R^{|J|}_+$ which represents how much liquidity each position in $J$ should hold in order to match $\bell^0$. Calculate it by applying a linear solver to
    \begin{equation*}
        \min_{\bm x \in \R^{|J|}_+} || \bm x^T J_{mat} - \bell^0||_2^2.
    \end{equation*}}
    \State{Calculate an empirical distribution $\mu \in \R^{|J|}_+$ over the action space $J$ represented by
   $ \mu  = \dfrac{\bm x}{||\bm x||_1}$,
    where $\mu_q$ represents the estimated frequency of players choosing action $\mathcal{j}_q$.}
    \State{From player $n$'s perspective, use Monte Carlo simulations to generate $S$ samples of a $(N-1)$-dimensional random vector representing the actions of the other players, given by $\{\bm j^{-n}_s\}_{s=1}^S$ where each element of $\bm j^{-n}_s$ is independently sampled from $J$ according to $\mu$.}
    \State{Discretize the space of risk-aversion levels as $\lambda \in \{\lambda_1, \ldots, \lambda_{30}\}$ where $\lambda_1 = 0$ and $\lambda_{30} = \lambda_{max}$.}
    \State{Discretize $\supp(\Theta)$ as $\Delta := \{212400, 3578600, 170603400\} \times \{\lambda_1, \ldots, \lambda_{30}\} \times \{-1, 0, 1\}$.}
    \For{all $\theta \in \Delta$}
        \State{Calculate the player's best response $(j^1_n(\theta), j^2_n(\theta)) \in J$, where
        \begin{equation*}
            (j^1_n(\theta), j^2_n(\theta))\ \in\ \argmax_{(j^1, j^2) \in J}\ \frac{1}{S} \sum_{s=1}^S V_n(j^1, j^2, \theta, \bm j^{-n}_s).
        \end{equation*}}
    \EndFor
    \State{For tie-breaking purposes, count how many types $\theta_n$ result in a best-response action $\mathcal{j}_q \in J$ by defining the counter function 
    \begin{equation*}
        \#(\mathcal{j}_q) := \sum_{\theta \in \Delta} \one_{\{(j^1_n(\theta), j^2_n(\theta)) = \mathcal{j}_q\}}
    \end{equation*}.}
    \State{Calculate the discrete probability distribution $\nu$ over $\Delta$ where, for all $\theta \in \Delta$,
    \begin{equation*}
        \nu(\theta) := \sum_{q=1}^{|J|}\dfrac{\mu_q}{\# (\mathcal{j}_q)} \one_{\{(j^1_n(\theta), j^2_n(\theta)) = \mathcal{j}_q\}} 
    \end{equation*}
    If there exist $\mathcal{j}_q$ where $\mu_q > 0$ but $\forall \theta \in \Delta, (j^1_n(\theta), j^2_n(\theta)) \neq \mathcal{j}_q$, re-scale $\nu$ s.t. $\sum_{\theta \in \Delta} \nu(\theta) = 1$.}
    \State{To satisfy \Cref{asm:theta}, we use kernel density estimation and the discrete distribution $\nu$ over $\Delta$ to calculate the density of $\lambda$ over the domain $[0, \lambda_{max}]$ for each ($k, \delta$). This produces an atomless estimate of $\Theta$.}
\end{algorithmic}
\end{algorithm}

For a given type configuration $\bm \theta$, we approximate a Nash equilibrium using the fictitious play approach which was first introduced in \cite{Brown1951}. Fictitious play has been shown to converge to an equilibrium in a variety of games, and \cite{rabinovich13} proves the convergence of a version of fictitious play applied to private-value games of incomplete information with finite actions and continuous types. 
We define a version of fictitious play which is tailored to our game. In round $I$ of fictitious play, each player $n \in \{1, \ldots, N\}$ calculates its best response to its ``memory" (the average of the liquidity distributions arising from the actions of the other players in the previous $I$ rounds), called $\bell^{\text{hist}}_n$. All players then move simultaneously and the best response actions of all players form an action profile $\bm j^I(\bm \theta)$. Player $n$'s ``memory" is updated based on $\bm j^I(\bm \theta^{-n})$. The historical liquidity distribution arising from \textit{all} players' actions, $\bell^{\text{hist}}$, is also updated based on $\bm j^I(\bm \theta)$. The process repeats until the difference between $\bell(\bm j^I(\bm \theta))$ and $\bell^{\text{hist}}$ falls below a preset threshold.
The fixed point of the iterative method corresponds to a Nash equilibrium, which we denote as $\bm j^*$. We describe our approach in detail in Numerical Method \ref{alg:FP}. 

\begin{algorithm}
\caption{Fictitious Play: $N$-Player Game}
\label{alg:FP}
\begin{algorithmic}[1]
\Require{An initial action profile $\bm j^0 \in J^N$ and a type assignment vector $\bm \theta = (\theta_1, \ldots, \theta_N) \in \supp(\Theta)^N$.}
\Ensure{An estimated mixed-strategy equilibrium given by $\bm j^*$.}
    \State{Initialize $N$ players with characteristics $\theta_n$ based on the $N$-dimensional type vector $\bm \theta$.}
    \State{Initialize an arbitrary action profile for all the players given by $\bm j^0(\bm \theta) = \{(j^1_n, j^2_n)\}_{n=1}^N$. Note $\bm \theta$ is fixed so actions are deterministic.}
    \State{Initialize the ``historical" liquidity distribution as $\bell^{\text{hist}} \gets \bell(\bm j^0(\bm \theta))$, which is updated with each iteration of fictitious play.}
    \State{For each player $n$, initialize the liquidity distribution formed by all other players as $\bell^{\text{hist}}_n \gets \bell(\bm j^0(\bm \theta^{-n}))$. We will refer to $\bell^{\text{hist}}_n$ as player $n$'s ``memory" which is also updated with each iteration of fictitious play.}
    \State{Initialize an iteration counter $I \gets 1$.}
    \State{\texttt{error} $\gets \infty$.}
    \State{Set a error tolerance threshold \texttt{thresh} which determines how closely we wish to approximate a Nash equilibrium.}
    \While{\texttt{error} $\geq$ \texttt{thresh}}
        \For{player $n \in \{1, \ldots, N\}$}
            \State{Determine player $n$'s best response to $\bell^{\text{hist}}_n$, given by 
            \begin{equation*}
                (j^{1,*}_n, j^{2,*}_n)\ \in\ \argmax_{(j^1, j^2) \in J}\ V_n(j^1, j^2, \theta_n, \bell^{\text{hist}}_n).
            \end{equation*}
            }
        \EndFor
        \State{Define the action profile for iteration $I$ as $ \bm j^I(\bm \theta) \gets \{(j^{1,*}_n, j^{2,*}_n)\}_{n=1}^N$.}
        \State{Update \texttt{error} $\gets W_1(\bell(\bm j^I(\bm \theta)),\ \bell^{\text{hist}})$ where $W_1$ is the Wasserstein-1 distance.}
        \State{Update the overall historical liquidity distribution and each player's personalized ``memory" where
            \begin{equation*}
                \bell^{\text{hist}} \gets \frac{1}{I+1} \sum_{i=0}^{I} \bell(\bm j^i(\bm \theta))\quad \text{ and } \quad \bell^{\text{hist}}_n \gets \frac{1}{I+1} \sum_{i=0}^{I} \bell(\bm j^i(\bm \theta^{-n})),\ \forall n \in \{1, \ldots, N\}.
            \end{equation*}}
        \State{Update the counter $I = I + 1$.}
    \EndWhile
    \State{The control vector $\bm j^I(\bm \theta) \in \R^d$ where $W_1(\bell(\bm j^I(\bm \theta)) <$\texttt{thresh} approximates a Nash equilibrium, and we therefore assign $\bm j^*(\bm \theta) := \bm j^I(\bm \theta)$.}
\end{algorithmic}
\end{algorithm}

We note that both Numerical Method \ref{alg:bayesian} and Numerical Method \ref{alg:FP} become more expensive as $N$ increases. In step 9 of Numerical Method \ref{alg:bayesian}, the best response for each type is computed to maximize the player's expected payoff with respect to the $(N-1)$-dimensional random vector of the other players' actions. This expectation is calculated empirically using 1000 Monte Carlo samples of $N-1$ actions when $N = 10$ and as $N$ increases the number of samples must also increase to thoroughly sample the larger probability space. Thus, it becomes more and more expensive to sample the type vectors and to compute the empirical expectation. Furthermore, as the number of players increases, step 9 of Numerical Method \ref{alg:FP} also becomes more expensive, as every player's best response is individually calculated for each iteration of the fictitious play algorithm.

\subsection{The Mean-Field Game}
\label{sec:method_mfg}
The calibration of $\Theta$ in the current MFG setting is simplified because we do not need to work with an $(N-1)$-dimensional random vector representing the actions of the other players. Instead, the representative player reacts to the target liquidity distribution $\bell^0$ directly. The distribution of types is calculated such that its type-dependent best responses to $\bell^0$ form a distribution that mimics $\bell^0$. We show the steps below in Numerical Method \ref{alg:calMFG}.

\begin{algorithm}
\caption{$\Theta$ Calibration: MFG}
\label{alg:calMFG}
\begin{algorithmic}[1]
\Require{An observed target liquidity distribution $\bell^0$.}
\Ensure{An estimation for the type distribution $\Theta$.}
    \State{Initialize $\bell^0$ as the liquidity distribution associated with \Cref{fig:mfg-a}.}
    \State{Discretize the space of risk-aversion levels as $\lambda \in \{\lambda_1, \ldots, \lambda_{30}\}$ where $\lambda_1 = 0$ and $\lambda_{30} = \lambda_{max}$.}
    \State{Define the discretization of $\supp(\Theta)$ as $\Delta := \{2124, 35786, 1706034\} \times \{\lambda_1, \ldots, \lambda_{30}\} \times \{-1, 0, 1\}$ and index $\Delta = \{\theta_i\}_{i=1}^{|\Delta|}$.}
    \State{Define an empty matrix $P \in \R^{|\Delta| \times d}$ which will hold the optimal liquidity positions associated with each type $\theta_i$ in $\Delta$.}
    \For{$\theta_i \in \Delta$, where $i=1,\ldots, |\Delta|$}
        \State{Find the representative player's best response such that
        \begin{equation*}
            (j^1(\theta_i), j^2(\theta_i))\ \in\ \argmax_{(j^1, j^2) \in J}\ V_{\text{MFG}}((j^1, j^2), \theta_i, \bell^0).
        \end{equation*}}
        \State{Set row $i$ of $P$ equal to the liquidity position associated with action $(j^1(\theta_i), j^2(\theta_i))$ and type $\theta_i$ such that 
        \begin{equation*}
            P_i \gets \bell_{\text{new}}\left(u(j^1(\theta_i), j^2(\theta_i), k(\theta_i), m^*_0), j^1(\theta_i), j^2(\theta_i)\right) \in \R^d,
        \end{equation*}
        where $\ell_{\text{new}}$ is given in \Cref{eq:ell-new}.}
    \EndFor
    \State{Define $\bm x \in \R^{|\Delta|}_+$ which represents how much liquidity each position in $P$ should hold in order to match $\bell^0$. Calculate this by solving 
    \begin{equation*}
         \min_{\bm x \in \R^{|\Delta|}_+}\ ||\bm x^T P - \bell^0||^2_2.
    \end{equation*}}
    \State{Calculate a discrete distribution $\nu$ over $\Delta$ where
    \begin{equation*}
        \nu(\theta_i) := \dfrac{\bm x_i}{|| \bm x||_1}.
    \end{equation*}}
    \State{To satisfy \Cref{asm:theta}, we use kernel density estimation and the discrete distribution $\nu$ over $\Delta$ to calculate the density of $\lambda$ over the domain $[0, \lambda_{max}]$ for each combination of $k$ and $\delta$. This produces an atomless estimate of $\Theta$.}
\end{algorithmic}
\end{algorithm}

To approximate the mean-field equilibrium once $\Theta$ was calibrated, we used the method described in Numerical Method \ref{alg:MFG}.
\begin{algorithm}
\caption{Fictitious Play: MFG}
\label{alg:MFG}
\begin{algorithmic}[1]
\Require{ An initial liquidity distribution $\bell^0$ and the type distribution $\Theta$.}
\Ensure{An equilibrium liquidity distribution $\bell^*$.}
    \State{Initialize the ``historical" liquidity distribution $\bell^{\text{hist}} := \bell^0$, the liquidity distribution associated with \Cref{fig:mfg-a}.}
    \State{Initialize an iteration counter $I \gets 0$.}
    \State{\texttt{error} $\gets \infty$.}
    \State{Set an error tolerance threshold \texttt{thresh} which determines how closely we wish to approximate a MFE.}
    \While{\texttt{error} $\geq$ \texttt{thresh}}
        \State{Calculate the representative player's best response to $\bell^{\text{hist}}$, such that for all $\theta \in \supp(\Theta)$, $\bm j^*$ satisfies
        \begin{equation*}
            \bm j^*(\theta) \in \argmax_{(j^1, j^2) \in J} V_{\text{MFG}}((j^1, j^2), \theta, \bell^{\text{hist}}).
        \end{equation*}}
        \State{Calculate the resultant liquidity distribution according to $\ell_{\text{MFG}}$ given in \Cref{eq:ell_MFG}, such that
        \begin{equation*}
            \bell^I = \bell_{\text{MFG}}(\bm j^*).
        \end{equation*}
        }
        \State{Update \texttt{error} $\gets W_1(\bell^{\text{hist}}, \bell^I)$ where $W_1$ is the Wasserstein-1 distance.}
        \State{Update $\bell^{\text{hist}} \gets \sum_{i = 0}^I \bell^i$.}
        \State{Update the counter $I$.}
    \EndWhile
    \State{Denote the final $\bell^I$ where $W_1(\bell^{\text{hist}}, \bell^I) < $ \texttt{thresh} as $\bell^*$ where its associated action strategy is $\bm j^*(\theta)$. These approximate a MFE.}
\end{algorithmic}
\end{algorithm}

\clearpage
\section{Bot threshold calculation}
\label{sec:appendix_bot}
We wish to determine the Bot's threshold value $\bar \xi(\bell)$ for which
\begin{equation*}
    \frac{L}{\ell^0_{i^*} + L} \Big(\xi -m^* \psi(\xi, \bell^1, p^*) + \gamma |\xi| \Big) - G > 0.
\end{equation*}
We first replace $\psi(\xi, \bell^1, p^*)$ with its functional form in terms of $\xi$ and $\bell^1$. We make the simplifying assumption that the original liquidity $\bell^*_{i^*}$ plus the Bot's contribution $L$ will be sufficient to accommodate a swap $\xi$ of any size, meaning that we will not consider the case where the swap ``spills over" into neighboring ticks (for a detailed explanation of what this means, refer to \Cref{sec:appendix_swap}). Therefore, we will drop the subscript $i^*$ from $A_{i^*}$, $B_{i^*}$, and $\ell^0_{i^*}$, and shorten $\ell^1_{i^*}$ to $\bell^* + L$ to improve the readability of our calculations and emphasize the dependence on $L$. We now must solve
\begin{equation*}
    \xi - m^* \left( \left(\frac{\ell^0 + L}{\ell^0}\right)A + (\ell^0 + L)/\sqrt{p_{i^* + 1}} - \dfrac{(\ell^0+L)^2}{\left(\frac{\ell^0 + L}{\ell^0}\right)B + (\ell^0+L) \sqrt{p_{i^*}} + \xi} \right) + \gamma |\xi| - \dfrac{G(\ell^0 + L)}{L} > 0,
\end{equation*}
where we note that the amounts of token A and B in tick $i^*$ increase proportionally with respect to the new liquidity $(\ell^0 + L)$. We can rearrange and simplify with the help of \Cref{eq:cpmm2} and \Cref{eq:p_new}, and finally we will obtain the quadratic inequality
\begin{align*}
    &C\xi^2 + D\xi + E > 0, 
\end{align*}
where
\begin{align*}
    & C = (1 + \gamma \sgn(\xi)), \\
    &D = \left(\frac{\ell^0 + L}{\ell^0}\right)((1 + \gamma \sgn(\xi))p^* - m^*)( A + \ell^0/\sqrt{p_{i^* + 1}})  - \dfrac{G(\ell^0 + L)}{L},\\
    &E = -(B + \ell^0 \sqrt{p_{i^*}})\dfrac{G(\ell^0 + L)^2}{\ell^0 L}.
\end{align*}
We now solve for $\xi$ to find
\begin{equation}
\label{eq:thresholds}
\begin{split}
    &\bar \xi^+(\bell) = \dfrac{-D + \sqrt{D^2 - 4(1 + \gamma)E}}{2 + 2\gamma} \vee 0, \\
    &\bar \xi^-(\bell) = \dfrac{-D - \sqrt{D^2 - 4(1 - \gamma)E}}{2 - 2\gamma} \wedge 0.
\end{split}
\end{equation}
Finally, we briefly justify that we have chosen the correct roots in each case. First, we assert that $(1 + \gamma \sgn(\xi)) > 0$ for all $\xi$ by the definition of $\gamma \in (0, 1)$ so the parabola opens upward for all values of $\xi$, $\bell$, $L$, and $G$. This implies that $\bar \xi^+(\bell) < \infty$ and $\bar \xi^-(\bell) > -\infty$. 

Given the shape of the parabola, $\bar \xi^+(\bell)$ will correspond to the rightmost real root for the parabola with $\xi > 0$ (if it exists) and $\bar \xi^-(\bell)$ will correspond to the leftmost real root for the parabola corresponding to $\xi < 0$ (if it exists). In the case where the solution for $\bar \xi^+(\bell)$ has two negative roots or the solution for $\bar \xi^-(\bell)$ has two positive roots, the threshold is set at zero instead. If the parabola has no real roots, then an attack of size $L$ is always profitable. However, this will not happen since $E \leq 0$, given that all quantities with $E$ must be positive.

\section{Parameter Calibration}
\label{sec:appendix_calib}
\subsection{Bernoulli Arrival of Swaps}
\label{sec:appendix_arrivals}
To determine how the probability of a swap arriving at block $t$ with pool exchange rate $p^*_t$ and market exchange rate $m^*_t$, we first plotted the relationship between $|p^* - m^*|$ and the frequency of at least one swap arriving and calculated a line of best fit.
From this, we were able to determine that the relationship between $|p^* - m^*|$ and the likelihood of a block containing at least one swap was approximately linear with slope 0.0026 and intercept 0.3505. However, $|p^* - m^*|$ is unbounded above and $\rho(x)$ must be contained in $[0,1]$. Therefore, we chose a sigmoid function to avoid a discontinuity in $\rho$ and scaled it such that $\rho(0) = 0.3505$ and $\frac{\partial \rho}{\partial x}(0) = 0.0026$. The resulting function is
\begin{equation*}
    \rho(x) = \dfrac{1}{1 + e^{-0.01145|x| + 0.6169}}.
\end{equation*}
Notice that as $|x| \to \infty$, $\rho(x) \to 1$, as we desired. The scatter plot of frequencies, the line of best fit, and the fitted sigmoid function are displayed in \Cref{fig:rho_calib}.
\begin{figure}[H]
    \centering
    \includegraphics[width = 0.6\textwidth]{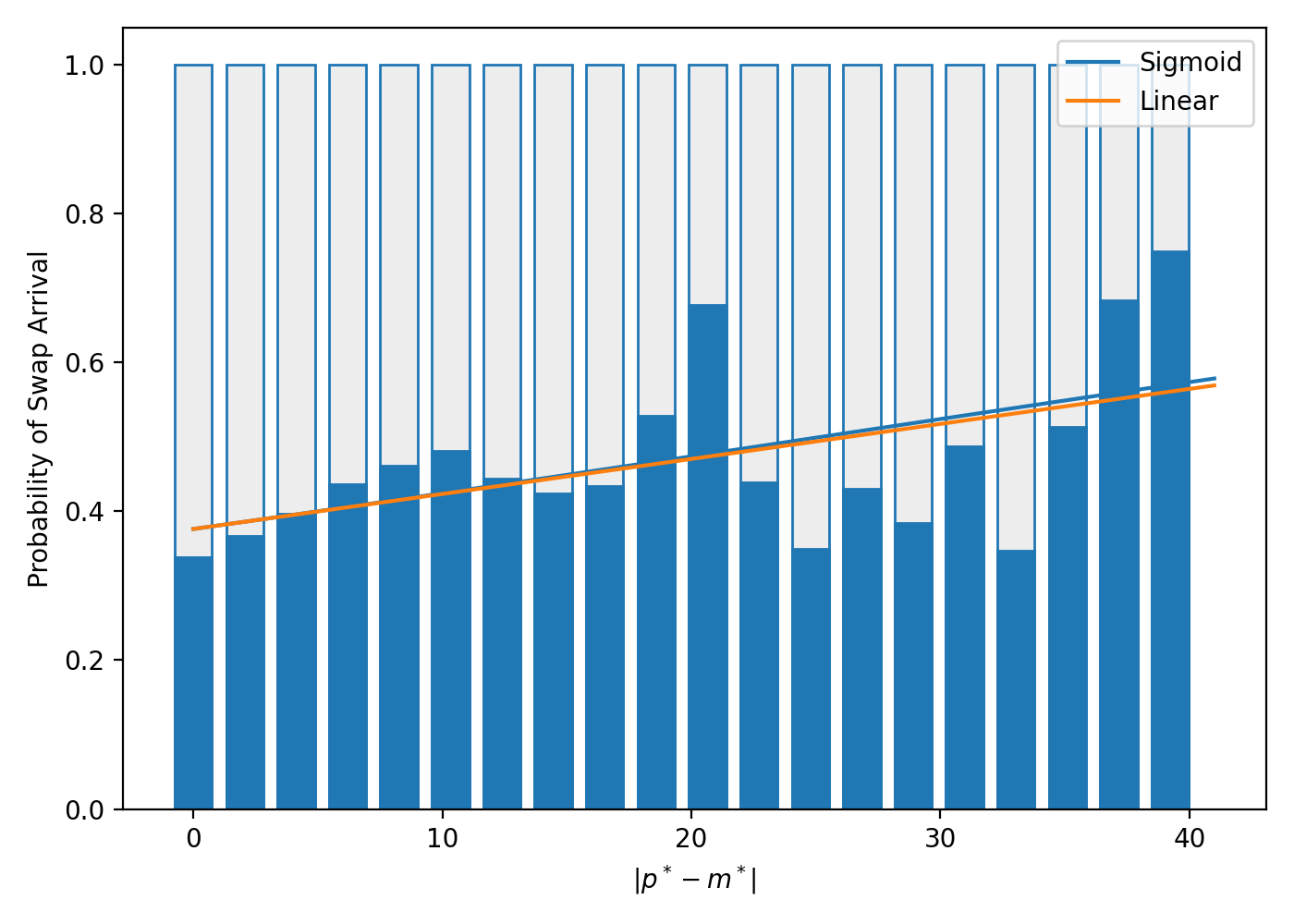}
    \caption{Comparison of scatter plot, line of best fit, and sigmoid function to determine the probability of a swap arriving given the arbitrage level in the pool.}
    \label{fig:rho_calib}
\end{figure}

\subsection{Capital Endowment}
\label{sec:appendix_calib_k}
To determine how to discretize the space of possible capital endowments, we examined the distribution of LP contributions to the pool over the period from September 1 to September 30, 2023. We looked at the Uniswap ETH/USDC pool with a 0.5\% fee, as usual. By looking at the density of the logarithm (base 10) of the value in USDC of liquidity contributions over the month, we observed three distinct regions. These liquidity additions could be roughly sorted into ``small" contributions with value less than $10^{3.78}$, ``big" contributions with value more than $10^{6.17}$, and ``medium" contributions which fell between these two extremes. Corresponding to those three regions, the highest densities were assigned to contributions with value $10^{3.33}$, $10^{4.55}$, and $10^{6.23}$. Therefore, we chose $k \in \{2124, 35786, 1706034\}$. In addition, by analyzing the data we found that 47.25\% of transactions were ``small", 48.94\% of transactions were ``medium", and 3.81\% of transactions were ``big". We did not enforce this ratio during the calibration, but we note that an accurate $\Theta$ would reflect the relative frequencies of differently sized transactions, and this is indeed the case in \Cref{tab:mfgweights}.
\begin{figure}[H]
    \centering
    \includegraphics[width = 0.6\textwidth]{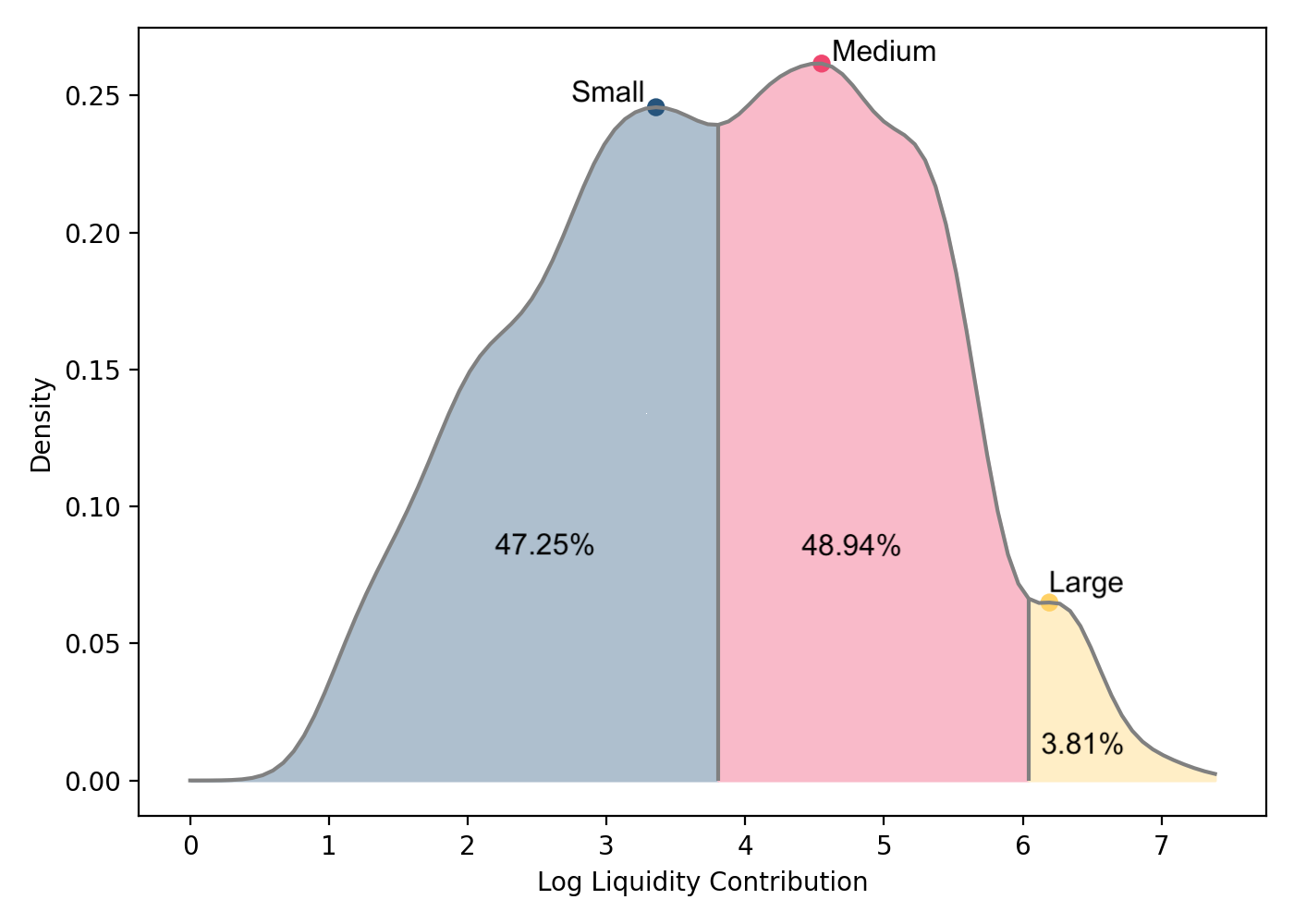}
    \caption{The cluster analysis performed on LP contributions to the ETH/USDC Uniswap pool with 0.5\% fee, between September 1 and September 30, 2023. The points represent the small, medium, and large capital endowments that contribute to $\supp(\Theta)$.}
    \label{fig:k-cluster}
\end{figure}
\end{document}